\documentclass[runningheads]{llncs}

\usepackage{balance,tikz}
\usetikzlibrary{arrows, automata,shapes}
\usetikzlibrary{decorations.pathreplacing}
\usepackage{booktabs}
\usepackage{colortbl}
\usepackage[ruled]{algorithm2e}
\usepackage[mathscr]{eucal}
\usetikzlibrary{decorations.pathreplacing}
\usepackage{hyperref,multirow,xcolor}
\usepackage{comment,amsmath,amssymb,amstext,amsfonts,graphics,enumerate,boxedminipage}
\usepackage{color,graphicx,latexsym,amsfonts,epsfig,eufrak,array,boxedminipage,graphicx,bm}

\newcommand{\BibTeX}{\rm B\kern-.05em{\sc i\kern-.025em b}\kern-.08em\TeX}
\newcommand{\NP}{{\sf NP}}
\newcommand{\coNP}{{\sf coNP}}
\newcommand{\XP}{{\sf XP}}
\newcommand{\sP}{{\sf P}}
\newcommand{\R}{\mathbb{R}}
\DeclareMathOperator{\frc}{frac}
\newcommand{\minitab}[2][l]{\begin{tabular}{#1}#2\end{tabular}}

\usepackage{boxedminipage}

\newcommand{\problemdef}[3]{
        \begin{center}
                \begin{boxedminipage}{.96\textwidth}
                        \textsc{{#1}}\\[2pt]
                        \begin{tabular}{ r p{0.77\textwidth}}
                                \textit{~~~~Instance:} & {#2}\\
                                \textit{Question:} & {#3}
                        \end{tabular}
                \end{boxedminipage}
        \end{center}
}

\oddsidemargin=0.8cm
\evensidemargin=0.8cm
\textwidth=15cm
\textheight=22.6cm
\topmargin=-0.4cm
\newtheorem{open}{Open Problem}

\begin{document}

\title{Computing Balanced Solutions for Large International Kidney Exchange Schemes When Cycle Length Is Unbounded\thanks{A 3-page extended abstract of this paper appeared in the proceedings of AAMAS 2024~\cite{BBCJPX24}.}}
\titlerunning{Computing Balanced Solutions for Large International Kidney Exchange Schemes}

\author{M\'arton Benedek\inst{1,2} \and
P\'eter Bir\'o\inst{2,1} \and
Gergely Cs\'aji\inst{2,3}\and\\
Matthew Johnson\inst{4}\and
Dani\"el Paulusma\inst{4} \and
Xin Ye\inst{4}}

\authorrunning{M. Benedek et al.}
\institute{Corvinus University of Budapest, Budapest, Hungary \email{marton.benedek@uni-corvinus.hu} \and
KRTK, Institute of Economics, Budapest, Hungary \email{\{peter.biro,csaji.gergely\}@krtk.hu}\and 
Eötvös Loránd University, Budapest, Hungary\and
Department of Computer Science, Durham University, Durham, UK \email{\{matthew.johnson2,daniel.paulusma,xin.ye\}@durham.ac.uk}}

\maketitle
\begin{abstract}
{In kidney exchange programmes, patients with incompatible donors obtain kidneys via cycles of transplants. Countries may merge their national patient-donor pools to form international programmes. To ensure fairness, a credit-based system is used: a cooperative game-theoretic solution concept prescribes a ``fair'' initial allocation, which is adjusted with accumulated credits to form a target allocation. The objective is to maximize the number of transplants while staying close to the target allocation.
When only $2$-cycles are permitted, a solution that lexicographically minimizes deviations from the target can be found in polynomial time. \textcolor{black}{However, e}ven the problem of maximizing the number of transplants is \NP-hard for larger upper bounds on cycle length. This latter problem is tractable when cycle lengths are not bounded. We formalize this setting via a new class of cooperative games called \emph{partitioned permutation games}, and prove that computing an optimal solution that is lexicographically closest to the target allocation is \NP-hard. We give a randomized XP-time algorithm for solve this problem exactly. We present an experimental study, simulating programmes with up to 10 countries.  Allowing unbounded cycle lengths increases the number of transplants by up to 46\% compared to $2$-cycles. Using credits and selecting lexicographically closest solutions yields low total relative deviation (below 2\% for all fairness notions). Among the seven fairness notions tested, a modified Banzhaf value performs best in balancing fairness and efficiency, achieving average deviations below 0.65\%. Lexicographic minimization from the target allocation leads to significantly ($36$-$56$\%) smaller average deviations than minimizing
maximum difference only.}
 
\begin{keywords}
computational complexity, cooperative game theory, partitioned permutation game, international kidney exchange
\end{keywords}
\end{abstract}

\section{Introduction}\label{s-intro}

We introduce a new class of cooperative games called {\it partitioned permutation games}, which are closely related to the known classes of permutation games~\cite{TPPR84} and partitioned matching games~\cite{BBKPP}. Partitioned matching games and partitioned permutation games have immediate applications in international kidney exchange. Before defining these games, we first explain this application area.

\smallskip
\noindent
{\bf Kidney Exchange.}
The most effective treatment for kidney failure is transplanting a kidney from a deceased or living donor, with better long-term outcomes in the latter case. However, a kidney from a family member or friend might be medically incompatible and could easily be rejected by the patient's body. Therefore, many countries run national Kidney Exchange Programmes (KEPs); {see} 
\cite{Bi_etal2019,Bi_etal2021} {and
the recent survey 
\cite{BCDMP25}}. 
In a KEP, all patient-donor pairs are placed together in one pool.
If for two patient-donor pairs $(p,d)$ and $(p',d')$, it holds that $d$ and $p$ are incompatible as well as $d'$ with $p'$, {but $d$ and $p'$ are compatible as well as $d'$ with $p$,}
then $d$ could donate a kidney to $p'$, and $d'$ could donate a kidney to $p$. This is a {\it $2$-way exchange}.

We {can} generalize a $2$-way exchange {as follows}. We model a pool of patient-donor pairs as a directed graph $G=(V,A)$ (the {\it compatibility graph}) in which
each vertex of $V$ is a patient-donor pair,
and $A$ consists of every arc $(u,v)$ such that the donor of~$u$ is compatible with the patient of~$v$. In a directed cycle $C=u_1u_2\ldots u_ku_1$, for some $k\geq 2$, the kidney of the donor of~$u_i$ could be given to the patient of~$u_{i+1}$ for every $i\in \{1,\ldots,{k}\}$, with $u_{k+1}:=u_1$. This is a {\it $k$-way exchange} using the {\it exchange cycle}~$C$.

To prevent exchange cycles from breaking (and a patient from losing their willing donor), hospitals perform the $k$ transplants in a $k$-way exchange simultaneously. Hence,
KEPs impose a bound $\ell$, called the {\it exchange bound}, on the maximum {\it length} (number of 
{arcs}) 
of an exchange cycle, typically $2\leq \ell\leq 5$.  An  {\it $\ell$-cycle packing} of $G$ is
a set~${\mathcal C}$ of directed cycles, each of length at most~$\ell$, that are pairwise vertex-disjoint; if $\ell=\infty$, we also say that ${\mathcal C}$ is a {\it cycle packing}.
The {\it size} of ${\mathcal C}$ is 
is the sum of the lengths of the cycles of ${\cal C}$.

KEPs operate in rounds.
A solution for round~$r$ is an $\ell$-cycle packing in the corresponding compatibility graph $G^r$.  The goal is to help as many patients as possible in each round. Hence, to maximize the number of transplants in round~$r$, we seek an {\it optimal} solution, that is, a {\it maximum} $\ell$-cycle packing of $G^r$, i.e., one that has maximum size.
After round~$r$, some patients will have received a kidney or died 
{(or have left the pool for some other reason)}, while other patient-donor pairs may have arrived. This results in a new compatibility graph $G^{r+1}$ for round~$r+1$.
{We have now finished the description of the basic KEP model, as used in this paper.
As discussed by Bir\'o et al.~\cite{Bi_etal2019},
there exist many variants and extensions, some of which we will discuss in more detail later.}

{An important}
computational issue for KEPs is how to find an optimal solution in each round. If $\ell=2$, we can {apply a well-known transformation (see e.g.~\cite{RSU05}) that only keeps the ``double'' arcs of a compatibility graph $G$. Namely} we transform $G$ into an undirected graph 
${\hat{G}}=(V,E)$
as follows. For every $u,v\in V$, we have $uv\in E$ if and only if $(u,v)\in A$ and $(v,u)\in A$. It then remains to compute a maximum matching in {$\hat{G}$}, 
which can be done in polynomial time~\cite{Ed65a}.
If we set $\ell=\infty$, {another} well-known trick works (see e.g.~\cite{ABS07}). We transform $G$ into a bipartite graph $H$ with partition classes $V$ and $V'$, where $V'$ is a copy of $V$. For each $u\in V$ and its copy $u'\in V'$, we add the edge~$uu'$ with weight~$0$. For each $(u,v)\in A$, we add the edge~$uv$ with weight~$1$. Now it remains to find in polynomial time a maximum weight perfect matching in~$H$. However, for any constant $\ell\geq 3$, the 
complexity changes, as shown by  Abraham, Blum and Sandholm~\cite{ABS07}.

\begin{theorem}[\cite{ABS07}]\label{t-hard}
If $\ell=2$ or $\ell=\infty$, we can find an optimal solution for a KEP round in polynomial time; else this is \NP-hard.
\end{theorem}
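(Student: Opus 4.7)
The plan splits into the three regimes of the statement. For $\ell=2$, every $2$-cycle $u{\to}v{\to}u$ uses both arcs $(u,v)$ and $(v,u)$ and hence corresponds to an edge of the undirected graph $D=(V,E)$ with $uv\in E$ iff both arcs are present in $A$; a $2$-cycle packing is vertex-disjoint and thus corresponds to a matching in $D$, with size (arcs covered) equal to twice the number of edges. Running Edmonds' blossom algorithm on $D$ returns a maximum $2$-cycle packing in polynomial time.

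For $\ell=\infty$ I would verify the standard bipartite trick already sketched in the paragraph preceding the theorem. Build $H$ with sides $V$ and a disjoint copy $V'=\{v':v\in V\}$, adding an edge $uu'$ of weight $0$ for every $u\in V$ and an edge $uv'$ of weight $1$ whenever $(u,v)\in A$. In any perfect matching $M$, vertex $u\in V$ is matched either to its own copy or to a copy $v'$ of an out-neighbour, and symmetrically on the right; a short check shows that $u$ has in-degree $1$ in the set of selected weight-$1$ arcs iff it has out-degree $1$, so these selected arcs form a disjoint union of directed cycles, i.e.\ a cycle packing of $G$. Conversely every cycle packing extends to a perfect matching by adding the untouched $uu'$ edges, and the weight of $M$ equals the number of arcs it uses. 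Hence the maximum-weight perfect matching, computable in polynomial time by the Hungarian algorithm, yields a maximum cycle packing.

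For $3\le\ell<\infty$ I would reduce from \textsc{3-Dimensional Matching} following~\cite{ABS07}. For each input triple one builds a small directed gadget on vertices that include representatives of the triple's three elements, wired so that the gadget can contribute its maximum to any $\ell$-cycle packing only through a canonical short cycle encoding ``this triple is chosen''; element vertices are shared across gadgets, so the chosen canonical cycles are forced to be element-disjoint, i.e.\ to form an exact cover. Auxiliary chains whose length is parametric in $\ell$ let the same template work for every fixed $\ell\ge 3$, and a calibrated packing size is reached iff the 3DM instance is a yes-instance. The main obstacle is the backward direction of the gadget analysis: one must rule out \emph{every} alternative configuration of vertex-disjoint cycles of length $\le\ell$ that reaches the target size but fails to correspond to an exact cover, which requires a careful case analysis of all ways a cycle of length at most $\ell$ can weave through several gadgets simultaneously; the detailed verification is the content of~\cite{ABS07}.
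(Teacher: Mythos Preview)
Your proposal is correct and mirrors the paper's own treatment: the paper does not give a standalone proof of this cited theorem but sketches exactly your $\ell=2$ reduction to maximum matching in the ``double-arc'' undirected graph and your $\ell=\infty$ reduction to maximum-weight perfect matching in the bipartite graph~$H$ in the paragraph preceding the theorem, and simply defers to~\cite{ABS07} for the \NP-hardness when $\ell\geq 3$. Your additional high-level sketch of the 3DM reduction and the backward-direction case analysis is a faithful summary of what~\cite{ABS07} does, so there is no discrepancy.
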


\noindent
{\bf International Kidney Exchange.}
As merging pools of national KEPs leads to better outcomes, 
many countries today work together with the aim of forming
 an {\it international} KEP (IKEP), e.g. Austria and the Czech Republic~\cite{Bo_etal17}; Denmark, Norway and Sweden; and Italy, Portugal and Spain~\cite{Va_etal19}.  Apart from ethical, legal and logistical issues (all beyond our scope), there is now a new and highly non-trivial issue that needs to be addressed: {\it How can we ensure long-term stability of an IKEP?} If countries are not treated {\it fairly}, they may leave the IKEP. In a worst-case scenario it could even happen that in the end each country 
returns to
their own national KEP~again.

\medskip
\noindent
{\it Example {1}.} Let $G$ be the compatibility graph from Figure~\ref{f-thirdleft}. Then a total of five kidney transplants is possible (only) if we choose the solution  of size~$5$ that consists of the single exchange cycle $C=abdeca$.
We have that patient-donor pairs $a,b,c$ belong to country~$1$, whereas patient-donor pairs $d$ and $e$ belong to country~$2$, and patient-donor pair~$f$ belongs to country~$3$. So using this solution, countries $1$, $2$ and $3$ receive three, two and zero kidney transplants, respectively. Note that without cooperation only country~$2$ can receive kidney transplants in this example.\qed

\begin{figure}[t]
\vspace*{-0.3cm}
\centering
{\begin{tikzpicture}[scale=0.8,rotate=0]
\draw
(0, 1) node[circle, black, scale=0.8,draw](a){\small$a$}
(1.5, 2) node[circle, black,scale=0.8, draw](b){\small$b$}
(1.5, 0) node[circle, black, scale=0.8,draw](c){\small$c$}
(3, 3) node[circle, black, scale=0.8,draw](d){\small$d$}
(3, 1) node[circle, black, scale=0.8,draw](e){\small$e$}
(3, -1) node[circle, black, scale=0.8,draw](f){\small$f$}
(-0.5, 2) node[scale=0.8](h){\small$V_{1}$}
(3.7, 3) node[scale=0.8](i){\small$V_{2}$}
(3.7, -1) node[scale=0.8](j){\small$V_{3}$};
\draw[->,line width=1.5pt] (a) -> (b);
line width=1.5pt\draw[->,line width=1.5pt] (d) ->  (e);
\draw[->] (e) to [out=30,in=-30]  (d);
\draw[->] (b) ->  (e);
\draw[->,line width=1.5pt] (c) ->  (a);
\draw[->, line width=1.5pt](b) -> (d);
\draw[->,line width=1.5pt] (e) -> (c);
\draw[->] (c) -> (f);
\draw[->] (e) ->  (f);
\draw[dashed] (0.97,1) ellipse (1.3 and 1.5);
\draw[dashed] (3,2) ellipse (0.6 and 1.4);
\draw[dashed] (3,-1) ellipse (0.35 and 0.35);
\end{tikzpicture}}
\caption{A {partitioned} permutation game $(N,v)$ of width $c=3$ defined on a graph $G=(V,A)$. Note that $N=\{1,2,3\}$ and $V=V_1\cup V_2\cup V_3$ with $V_1=\{a,b,c\}$, $V_2=\{d,e\}$ and $V_3=\{f\}$, as indicated by the dotted circles. {Recall that $G$ has exactly one maximum cycle packing ${\cal C}$, which consists of the single exchange cycle $C=abdeca$, as indicated by the thick edges.
Selecting ${\cal C}$ yields the allocation  $x=(3,2,0)$.
We note that
$v(N)=v(\{1,2\})=5$, while $v(\{2\})=2$ and $v(\{1\})=v(\{3\})=v(\{1,3\})=v(\emptyset)=0$ and $v(\{2,3\})=2$. It can be readily checked that for all $S\subseteq N$, $x(S)\geq v(S)$. Hence, $x$ belongs to the core of $(N,v)$.}}\label{f-thirdleft}
\vspace*{-0.5cm}
\end{figure}

\medskip
\noindent
{\bf Cooperative Game Theory}
considers
 {\it fair} distributions of joint profit if all parties involved collaborate
 (the notion of fairness depends on context).
 Before describing its role in our setting, we first give some
 {basic} terminology {on cooperative game theory};
{we refer to~\cite{Peters2008} for more details}.
 
 A \emph{(cooperative) game} is a pair $(N,v)$, where $N$ is a set of $n$ \emph{players} and $v: 2^N\to \R$
 is a \emph{value function} with $v(\emptyset) = 0$.  A subset $S\subseteq N$ is a {\it coalition}. If for every possible partition $(S_1,\ldots,S_r)$ of $N$ it holds that $v(N)\geq v(S_1)+\cdots +v(S_r)$, then players will benefit most by forming the {\it grand coalition} $N$. The problem is then how to fairly distribute $v(N)$ amongst the players of $N$. An {\it allocation}  is a vector $x \in \R^N$ with $x(N) = v(N)$ (we write $x(S)=\sum_{p\in S}x_p$ for $S\subseteq N$). 
{In this context, the notion of fairness is defined by}
 a {\it solution concept}, {which} prescribes a set of 
fair allocations for a game $(N,v)$.
{Each solution concept has its own fairness properties; we discuss these properties in detail in Section~\ref{s-inin}.}

The solution concepts 
considered in this
paper are the Shapley value,
Banzhaf value (two variants)
nucleolus, 
 tau value,
benefit value,
contribution value (all of which are defined in Section~{\ref{s-inin}}) 
and the core. They all prescribe a unique allocation except for the {\it core}, which consists of all allocations $x \in \R^N$ with $x(S)\geq v(S)$ for every $S\subseteq N$.
 Core allocations ensure $N$ is stable {in the sense that} no subset $S$ will benefit from forming their own coalition. {However, the} core {of a cooperative game} may be~empty.

We now define some relevant games.  For a directed graph $G=(V,A)$ and subset $S\subseteq V$, we let $G[S]=(S,\{(u,v)\in A\; |\; u,v\in S\})$ be the subgraph of $G$ {\it induced by $S$}. An {\it $\ell$-permutation game} on a  directed graph~$G=(V,A)$ is the game $(N,v)$, where $N=V$ and for $S\subseteq N$, the value~$v(S)$ is the maximum size of an $\ell$-cycle packing of~$G[S]$. Two special cases are well studied. We obtain a {\it matching game} if $\ell=2$, which may have an empty core (e.g. when $G$ is a triangle on vertices $a,b,c$ with arcs $(a,b)$, $(b,a)$, $(a,c)$, $(c,a)$, $(b,c)$, $(c,b)${, see also Figure~\ref{empty-core}}) and a {\it permutation game} if $\ell=\infty$, whose core is always nonempty~\cite{TPPR84}.

In the remainder, we differentiate between the sets $N$ (of players in the game) and $V$ (of vertices in the underlying graph $G$). That is, we associate each player~$i\in N$ with a distinct subset~$V_i$ of~$V$.
 A {\it partitioned $\ell$-permutation game} on a directed graph $G=(V,A)$ with a partition $(V_1,\ldots,V_n)$ of $V$ is the game $(N,v)$, where $N=\{1,\ldots,n\}$, and for $S\subseteq N$, the value $v(S)$ is the maximum size of an $\ell$-cycle packing of $G[\bigcup_{i\in S}V_i]$.
We obtain a {\it partitioned matching game}~\cite{BBJPX23,BBKPP} if $\ell=2$, and a {\it partitioned permutation game} if $\ell=\infty$. The {\it width} of $(N,v)$ is the {\it width} of $(V_1,\ldots,V_n)$, which is defined as $c=\max\{|V_i| \; |\; 1\leq i\leq n\}$. 
{We refer again to Figure~\ref{f-thirdleft} for an example.}

\medskip
\noindent
{\bf The Model.} For a round of an IKEP with exchange bound~$\ell$, let $(N,v)$ be the partitioned $\ell$-permutation game defined on the compatibility graph $G=(V,A)$, where $N=\{1,\ldots,n\}$ is the set of countries in the IKEP, and $V$ is partitioned into subsets $V_1,\ldots,V_n$ such that for every $p\in N$, $V_p$ consists of the patient-donor pairs of country~$p$. We say that $(N,v)$ is the {\it associated} game for $G$.
We can now make use of a solution concept ${\mathcal S}$ for $(N,v)$ to obtain a fair {\it initial allocation}~$y$, where $y_p$ prescribes the initial number of kidney transplants country~$p$ should receive in this round (possibly, $y_p$ is not an integer, but as we shall see this is not relevant).

To ensure IKEP stability, we use the model of Klimentova et al.~\cite{KNPV20}, which is a {\it credit-based} system. For round $r\geq 1$, let $G^r$ be the compatibility graph with associated game $(N,v^r)$; let $y^r$ be the initial allocation (as prescribed by some solution concept ${\mathcal S}$); and let $c^r:N\to \R$ be a {\it credit function}, which satisfies $\sum_{p\in N}c^r_p=0$; if $r=1$, we set $c^r\equiv 0$.
 For $p\in N$, we set $x^r_p:=y^r_p+c^r_p$ to obtain the {\it target allocation} $x^r$  for round~$r$; note that $x^r$ is indeed an allocation, as $y^r$ is an allocation and $\sum_{p\in N}c_p^r=0$). We choose some maximum $\ell$-cycle packing ${\mathcal C}$ of $G^r$ as optimal solution for round~$r$ (out of possibly exponentially many optimal solutions).
{For round~$r$}, let $s_p({\mathcal C})^{{r}}$ be the number of kidney transplants for patients in country~$p$ (with donors both from~$p$ and other countries). For $p\in N$, we set $c^{r+1}_p:=x^r_p-s_p({\mathcal C}^{{r}})$ to get the credit function $c^{r+1}$ for round $r+1$ (note that $\sum_{p\in N}c_p^{r+1}=0$). For round~$r+1$,
a new initial allocation~$y^{r+1}$ is prescribed by~${\mathcal S}$ for the associated game $(N,v^{r+1})$. For every $p\in N$, we set $x_p^{r+1}:=y_p^{r+1}+c_p^{r+1}$, and we repeat the process.

Apart from specifying the solution concept~${\mathcal S}$, we must also determine how to choose in each round a maximum $\ell$-cycle packing~${\mathcal C}$ (optimal solution) of the corresponding compatibility graph~$G$. We will choose~${\mathcal C}$, such that the vector $s({\mathcal C})$, with {number of kidney transplants} entries $s_p({\mathcal C})$,  is {\it closest} to the target allocation~$x$ for the round under consideration.
{To ensure \emph{(long-term) stability} of an IKEP, we aim to make an IKEP {\it balanced}, that is, in each round
the goal is to find optimal cycle packings that are close to the
target allocations, yielding
consistently low (ideally zero) credit values.}

To explain our distance measures, let  $|x_p-s_p({\mathcal C})|$ be the {\it deviation} of country~$p$ from its target $x_p$ if ${\mathcal C}$ is chosen out of all optimal~solutions. We order the deviations $|x_p-s_p({\mathcal C})|$ non-increasingly as a vector $$d({\mathcal C})= (|x_{p_1}-s_{p_1}({\mathcal C})|, \dots, |x_{p_n}-s_{p_n}({\mathcal C})|).$$
We say {that} ${\mathcal C}$ is {\it strongly close} to~$x$ if $d({\mathcal C})$ is lexicographically minimal over all optimal solutions. If we only minimize $d_1({\mathcal C})=\max_{p\in N} \{|x_p-s_p({\mathcal C})|\}$ over all optimal solutions, we obtain a {\it weakly close} optimal solution. If an optimal solution is strongly close, 
{then} it is weakly close, but the reverse might not be true. Both measures have been used {in the literature}, {as we discuss} below 
{after first giving an example.}

\medskip
\noindent
{\it{Example 2.}}
{Let $\ell=2$ and consider the partitioned matching game $(N,v)$
from Figure~\ref{empty-core}, which has width $c=1$. Note that $N=\{1,2,3\}$ and $V=V_1\cup V_2\cup V_3$ with $V_1=\{a\}$, $V_2=\{b\}$ and $V_3=\{c\}$. Moreover, $v(N)=v(\{1,2,3\}=v(\{1,2\})=v(\{1,3\})=v(\{2,3\})=2$, while $v(\{1\})=v(\{2\})=v(\{3\})=v(\emptyset)=0$. Say we are in round~1 and use the Shapley value (see  Section~\ref{s-simul}), which yields the initial allocation $x^1=\{\frac{2}{3}, \frac{2}{3},\frac{2}{3}\}$.
As $c^1=(0,0,0)$, we have $y^1=x^1$ as target allocation.
Note that $y^1$ is not in the core, because the core of $(N,v)$ is empty, as we observed before.
There are three optimal solutions, which each correspond to a cycle packing consisting of exactly one $2$-vertex cycle. By symmetry, each of them is strongly close to $x$. Suppose we select ${\cal C}=\{\{aba\}\}$. Note that $s_1({\cal C})=s_2({\cal C})=1$ and $s_3({\cal C})=0$. Hence, $d({\cal C})=(\frac{1}{3},\textcolor{black}{\frac{1}{3},\frac{2}{3}})$, and the credit vector $c^2$ for round~$2$ is 
$(-\frac{1}{3},-\frac{1}{3},\frac{2}{3})$. \qed}

\begin{figure}[t]
\vspace*{-0.3cm}
\centering
{\begin{tikzpicture}[scale=0.8,rotate=0]
\draw
(0, 1.5) node[circle, black, scale=0.8,draw](a){\small$a$}
(2.5, 3) node[circle, black,scale=0.8, draw](b){\small$b$}
(2.5, 0) node[circle, black, scale=0.8,draw](c){\small$c$}
(-0.5, 2) node[scale=0.8](h){\small$V_{1}$}
(3.2, 3) node[scale=0.8](i){\small$V_{2}$}
(3.2, 0) node[scale=0.8](j){\small$V_{3}$};
\draw[->] (a) -> (b);
\draw[->] (b) to [out=-180,in=60] (a);
\draw[->] (b) -> (c);
\draw[->] (c) to [out=60,in=-60] (b);
\draw[->,thick] (c) ->  (a);
\draw[->,thick] (a) to [out=-60,in=180] (c);
\draw[dashed] (0,1.5) ellipse (0.4 and 0.4);
\draw[dashed] (2.5,3) ellipse (0.4 and 0.4);
\draw[dashed] (2.5,0) ellipse (0.4 and 0.4);
\end{tikzpicture}}
\caption{{A partitioned matching game $(N,v)$ of width $c=1$ on the graph $G=(V,A)$ with $N=\{1,2,3\}$ and $V=V_1\cup V_2\cup V_3$ for $V_1=\{a\}$, $V_2=\{b\}$ and $V_3=\{c\}$. Note that  $v(\{1\})=v(\{2\})=v(\{3\})=v(\emptyset)=0$, whereas $v(N)=v(\{1,2,3\}=v(\{1,2\})=v(\{1,3\})=v(\{2,3\})=2$, implying the core of $(N,v)$ is empty.}}
\label{empty-core}
\vspace*{-0.5cm}
\end{figure}

\subsection*{Related Work}

Benedek et al.~\cite{BBKPP} proved the following theorem for $\ell=2$ (the ``matching'' case).

\begin{theorem}[\cite{BBKPP}]\label{t-easy}
For partitioned matching games, the problem of finding an optimal solution that is strongly close to a given target allocation~$x$ is polynomial-time solvable.
\end{theorem}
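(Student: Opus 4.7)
The plan is to reduce the problem to a sequence of maximum-matching feasibility problems with per-partition-class cardinality constraints and combine them via an iterative lex-min peeling procedure. Because $\ell=2$, the transformation recalled in the introduction turns the directed compatibility graph $G=(V,A)$ into an undirected graph $D=(V,E)$ whose maximum matchings are in bijection with maximum $2$-cycle packings of~$G$; for a matching $M$ of $D$, the value $s_p(M)$ is simply $|V_p\cap V(M)|$, the number of vertices of $V_p$ matched by~$M$. Since each $s_p(M)$ is an integer in $\{0,\dots,|V_p|\}$, the set $\mathcal{D}=\{\,|x_p-k|:p\in N,\ 0\le k\le|V_p|\,\}$ of possible deviation values has polynomial size, and it is this finiteness that makes combinatorial peeling tractable.

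I would then build the sorted lex-min deviation vector $d^*=(d_1^*,d_2^*,\dots)$ one entry at a time using the following subroutine: given intervals $[L_p,R_p]\subseteq\{0,\dots,|V_p|\}$ for each $p\in N$, decide whether there exists a maximum matching $M$ of $D$ with $s_p(M)\in[L_p,R_p]$ for all $p$, and if so, return one. Starting from the uniform intervals $[\lceil x_p-\delta\rceil,\lfloor x_p+\delta\rfloor]$ parameterised by $\delta\in\mathcal{D}$, the smallest $\delta$ for which the subroutine returns yes is~$d_1^*$. A country $p$ is then tagged as \emph{saturated} at level $d_1^*$ whenever its interval cannot be strictly tightened (to the next value of $\mathcal{D}$) without losing feasibility, and its interval is frozen accordingly; for the remaining countries the uniform cap is reduced to the next value in $\mathcal{D}$, yielding $d_2^*$, and the procedure repeats. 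Since $|\mathcal{D}|$ and $|N|$ are polynomial, only polynomially many subroutine calls are needed, after which every country is committed and the returned matching is strongly close to~$x$.

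The main obstacle is implementing the subroutine in polynomial time, because $D$ is in general non-bipartite and we must simultaneously enforce that $M$ is a \emph{maximum} matching and that $s_p(M)\in[L_p,R_p]$ for every class~$V_p$. A clean route is to reduce to the general $f$-factor (degree-constrained subgraph) problem, which is polynomial-time solvable by Edmonds and Pulleyblank: for each class $V_p$ introduce an auxiliary slack vertex $a_p$ joined to every $v\in V_p$, require degree exactly~$1$ at each original vertex, and bound the degree of $a_p$ in $[|V_p|-R_p,|V_p|-L_p]$; a feasible $f$-factor then corresponds exactly to a matching of $D$ with $s_p(M)\in[L_p,R_p]$ for all~$p$. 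Assigning cost~$0$ to original edges and cost~$1$ to slack edges and running minimum-cost $f$-factor additionally forces $M$ to be a maximum matching. What then remains is the (routine) verification that the peeling procedure is well-defined at every stage and that its final output realises the lex-min sorted deviation vector over all optimal solutions.
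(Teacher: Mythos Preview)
The paper does not itself prove Theorem~\ref{t-easy}; the result is quoted from~\cite{BBKP22}, so there is no in-paper proof to compare against. I can therefore only assess the soundness of your proposal.

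The feasibility subroutine is fine: the slack-vertex gadget correctly encodes the per-class interval constraints on $s_p(M)$, and assigning cost~$1$ to the slack edges and solving a minimum-cost degree-constrained subgraph problem does force $M$ to be a maximum matching of $D$.

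The gap is in the peeling step. Your saturation test examines each country in isolation: $p$ is frozen at level $d_1^*$ only when tightening $p$'s interval alone, with every other country still capped at $d_1^*$, destroys feasibility. But it can happen that \emph{no} country is individually forced to $d_1^*$ while at least one of them must be. Take $V_1=\{a\}$, $V_2=\{b\}$, $V_3=\{c\}$, let $D$ be the triangle on $\{a,b,c\}$, and set $x=(\tfrac23,\tfrac23,\tfrac23)$. The three maximum matchings give $s$-vectors $(1,1,0)$, $(0,1,1)$, $(1,0,1)$, each with sorted deviation vector $(\tfrac23,\tfrac13,\tfrac13)$. For every single country $p$ there is a maximum matching realising deviation $\tfrac13<\tfrac23$ for $p$, so your test freezes nobody; yet tightening all three caps below $\tfrac23$ simultaneously is infeasible, and the procedure stalls (or loops, since re-running the search for $d_2^*$ with an empty frozen set returns $\tfrac23$ again).

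What is actually required at each level is the \emph{minimum number} of countries that must sit at the current deviation value---precisely the quantity $n_t^*$ that the ILP sequence of Section~\ref{a-ilp} computes for $\ell=\infty$. Showing that this minimum (and then the next deviation level subject to it, and so on) can be obtained in polynomial time when $\ell=2$ is the substantive content of the theorem; you have labelled it ``routine'' without supplying an argument, and your individual-tightening test does not deliver it.
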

{In contrast to Theorem~\ref{t-easy}, 
Benedek et al.~\cite{BBKPP} also showed that the problem becomes \NP-hard for {\it edge-weighted} graphs, in which the edges $e\in A$ have an associated weight $w(e)$ reflecting
the expected utility of the corresponding kidney transplant.
Namely, if maximum weight matchings are taken as optimal solutions for edge-weighted graphs, then it is \NP-hard to find a weakly close optimal solution even for $n=2$.}

Benedek et al.~\cite{BBKP22} used the algorithm of Theorem~\ref{t-easy} to perform simulations for up to 
15 countries for $\ell=2$.  As initial allocations, they used the Shapley value, nucleolus, benefit value and contribution value, with the Shapley value yielding the best results. 
Afterwards, Benedek et al.~\cite{BBPY24} extended the experimental results from~\cite{BBKP22} for $\ell=2$ by also considering two variants of the Banzhaf value and the tau value. It turned out that the Banzhaf value variants behaved slightly better than the Shapley value.

The good performance of the Shapley value in~\cite{BBKP22,BBPY24} is in line with the simulation results of Klimentova et al.~\cite{KNPV20} and Bir\'o et al.~\cite{BGKPPV20} for $\ell=3$. 
Due to Theorem~\ref{t-hard}, the simulations in~\cite{BGKPPV20,KNPV20} are for up to four countries and use weakly close optimal solutions.
{We note that the simulations in~\cite{BGKPPV20,KNPV20} model IKEPs that allow for {\it non-directed} donors. 
These are altruistic donors that are added to the pool without a corresponding patient. By allowing non-directed donors, optimal solutions for compatibility graphs may include {\it exchange chains}, which are directed paths that start from a non-directed donor. This may result in more kidney transplants.}

For $\ell=2$, we refer to~\cite{STW21} for an alternative model based on so-called selection ratios using lower and upper target numbers. IKEPs have also been modelled as non-cooperative games in the {\it consecutive matching} setting. In this setting, each round consists of two phases: national pools in phase~1 and a merged pool for unmatched patient-donor pairs in phase~2; 
see~\cite{CL23,CLPV17,SBS} 
for some results in this setting.

{Finally,} fairness (versus optimality) issues are also studied for national KEPs, in particular in the US. 
The US {setting} is different from {the} Europe{an setting, which is} the setting we consider{, for the following reasons}.
{As we explained above, in Europe the treatment of patient-donor pairs is regulated centrally implying that all patient-donor pairs are registered at an (I)KEP, leading to large and diverse
patient-donor pair pools with many exchange options and matching runs that are conducted typically every three months.
In contrast, in the US the transplant centers (hospitals) are independent. They mostly treat incoming patient-donor pairs in-house, and need to be convinced to register patient-donor pairs to one of the three nationwide KEPs (UNOS, APD, and NKR) which compete against each other (see~\cite{AAAFK18}). 
Therefore, in the US, kidney transplants usually take place 
immediately
after the registration of a new pair. 
How to incentivize KEPs in the US setting  
to register all their patient-donors pairs (instead of only the hard-to-match pairs) to a joint pool is a challenging problem that has been widely studied; see e.g.~\cite{AFKP15,AR12,AR14,BCHPPV17,HDHSS15,TP15} for a number of different approaches, which were compared to each other in
the recent survey~\cite{BCDMP25}. In particular, Hajaj at al.~\cite{HDHSS15} suggested a credit-based approach, namely to give the nationwide KEPs positive credits for registering easy-to-match pairs and negative credits for registering hard-to-match pairs. Their approach was further analysed by Agarwal et al.~\cite{AAAFK18} and has been adapted by the NKR but is very different in nature from our credit-based approach.} 

\subsection*{Our Results} 

We consider partitioned permutation games and IKEPs, so we assume $\ell=\infty$. {The latter assumption} is not realistic in kidney exchange. {However, the assumption} has also been made for national KEPs {in order} to obtain general results of theoretical
interest~\cite{ACDM21,BKKV23,BMR09,RSU04}. 
{Moreover, results of studies with $\ell=\infty$} may have wider applications, {for instance}, in portfolio compression in financial clearing~\cite{ER2021}, 
{time exchange~\cite{ACsEE2021}, and
shift-reallocation~\cite{MW2021}}.
{In particular,} we {wanted} to research how {the} stability {of an IKEP} and the total number of kidney transplants~are affected when moving from one extreme ($\ell=2$) to the other ($\ell=\infty$). 

{As mentioned, the KEP model we use is the most basic version, which has been used in the literature for conducting
computer simulations on generated instance. In particular, this allows us to make a clean comparison with a previous experimental study for $\ell=2$~\cite{BBPY24}.
That is, we ignore the diverse sets of complex, hierarchical optimization criteria that are used in European kidney 
exchange~\cite{Bi_etal2019,Bi_etal2021} and only maximize the number of transplants in each matching run. We do note that this objective is always the first objective in every used hierarchy apart from the hierarchy used in the UK, where it is the second objective.
We also ignore the presence of non-directed donors
and the possibility of having multiple donors registering for one patient. To further justify our choice of not considering non-directed donors, we note that such donors are illegal in some 
European countries, such as France and Hungary.}

{Our} paper consists of a theoretical part and an experimental part. We start with our theoretical results (Section~\ref{s-theory}).
Permutation games, i.e. partitioned permutation games of width~$1$, have a nonempty core~\cite{TPPR84}, and a core allocation can be found in polynomial
time~\cite{CT86}.
We generalize these two results to partitioned permutation games of any width~$c$, and also show a {complexity} dichotomy for testing core membership, which is in contrast with the 
{complexity}
dichotomy for
partitioned matching games, where the complexity jump happens at $c=3$~\cite{BBKPP}.

\begin{theorem}\label{t-core}
The core of every partitioned permutation game is nonempty, and it is possible to find a core allocation in polynomial time.
Moreover, for partitioned permutation games of fixed width~$c$, the problem of deciding if an allocation is in the core
is polynomial-time solvable if $c=1$ and co\NP-complete if $c\geq 2$.
\end{theorem}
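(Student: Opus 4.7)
The plan for the first two assertions is reduction to the classical width-$1$ case. Given a partitioned permutation game $(N,v)$ on $G=(V,A)$ with partition $(V_1,\dots,V_n)$, consider the classical permutation game $(V,v^*)$ on $G$ itself, where $v^*(T)$ is the maximum cycle-packing size of $G[T]$; note that $v(S)=v^*(V_S)$ for $V_S:=\bigcup_{i\in S}V_i$. By~\cite{TPPR84,CT86}, a core allocation $y\in\R^V$ of $(V,v^*)$ exists and can be computed in polynomial time via LP duality on the associated assignment problem. Setting $x_i:=\sum_{u\in V_i}y_u$ then yields $x(N)=y(V)=v^*(V)=v(N)$ and, for every $S\subseteq N$, $x(S)=y(V_S)\ge v^*(V_S)=v(S)$, so $x$ is a core allocation of $(N,v)$ constructible in polynomial time.

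For the dichotomy, the positive case $c=1$ coincides with classical permutation games, for which I would give an elementary polynomial-time membership test. Since $v^*(\{u\})=0$, any core allocation satisfies $y\ge 0$, which is checkable directly. I then claim that for $y\ge 0$ with $y(V)=v^*(V)$, $y$ lies in the core iff $G$ admits no directed cycle $C$ with $|C|>y(V(C))$. Necessity is immediate because $\{C\}$ is itself a cycle packing of $G[V(C)]$. For sufficiency, if $v^*(T)>y(T)$ for some $T$, an optimal packing $\mathcal{C}^*$ of $G[T]$ satisfies $y(T)\ge y(V(\mathcal{C}^*))=\sum_{C\in\mathcal{C}^*}y(V(C))$, so the strict inequality $\sum_{C\in\mathcal{C}^*}|C|=v^*(T)>y(V(\mathcal{C}^*))$ forces some $C\in\mathcal{C}^*$ with $|C|>y(V(C))$. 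Detecting such a cycle is a positive-weight cycle problem on $G$ under vertex weights $1-y_u$, which is polynomial-time decidable by splitting each vertex to convert to arc weights and running Bellman--Ford.

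For width $c\ge 2$, membership in $\coNP$ is immediate since any coalition $S$ with $v(S)>x(S)$ is a short, verifiable certificate. For $\coNP$-hardness it suffices to treat $c=2$, as dummy isolated vertices raise the width without affecting the game or its core. Crucially, the positive-cycle argument above breaks down here because a violating coalition must be a union of entire parts, so a positive cycle need not lift to one. I would exploit this partition rigidity by reducing from an NP-hard cycle problem on digraphs, such as directed Hamiltonian Cycle: a base construction pairs two clone vertices $w^-,w^+$ of each input vertex into a single width-$2$ player with internal arc $(w^-,w^+)$ and translates each input arc $(u,v)$ into $(u^+,v^-)$, so that cycles of the constructed digraph correspond bijectively to cycles of the input digraph with lengths doubled. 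The main obstacle will then be to attach auxiliary width-$2$ gadgets and fix an allocation $x$ with $x(N)=v(N)$ such that $x(S)\ge v(S)$ holds on every non-Hamiltonian coalition while a Hamiltonian cycle in the input produces a concentrated surplus of value that violates the core inequality on the intended coalition; verifying correctness will require a careful case analysis over how a putative violating coalition intersects the gadgets.
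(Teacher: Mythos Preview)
Your treatment of the first two parts---non-emptiness and polynomial-time computation of a core allocation via aggregation of a width-$1$ core allocation, and the $c=1$ membership test via reduction to a negative/positive cycle detection solved by Bellman--Ford---is correct and matches the paper's proof almost line for line.

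The gap is in the $c\ge 2$ hardness argument. What you have written is a plan, not a proof: you describe the clone construction $(w^-,w^+)$ and then state that ``the main obstacle will then be to attach auxiliary width-$2$ gadgets and fix an allocation $x$'' and that ``verifying correctness will require a careful case analysis,'' but you supply neither the gadgets, nor the allocation, nor the analysis. This is precisely the hard part of the theorem. Moreover, it is not clear your proposed route via directed Hamiltonian Cycle can be made to work without substantial new ideas: in the bare clone graph, the size of a maximum cycle packing is $2|V|$ whenever the input admits \emph{any} vertex-disjoint cycle cover, so the value function cannot distinguish a Hamiltonian cycle from a union of smaller cycles. You would need gadgets that force a blocking coalition to correspond to a single long cycle rather than several short ones, and nothing in your outline indicates how to achieve this.

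The paper instead reduces from \textsc{Exact $3$-Cover}. It builds a concrete digraph with three vertex families (element vertices $a_i,b_i$, set vertices $s_j^l,t_j^l$, and auxiliary vertices $x_k,y_k$), pairs them into width-$2$ players, and specifies an explicit fractional allocation whose values are tuned so that a coalition blocks if and only if it selects exactly the players corresponding to an exact $3$-cover. The correctness proof is a several-case analysis ruling out all other coalition shapes. That construction and analysis constitute the substance of the hardness result, and your proposal does not provide an equivalent.
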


\noindent
Due to Theorem~\ref{t-hard}, we cannot hope to generalize Theorem~\ref{t-easy} to hold for any constant $\ell\geq 3$. 
Nevertheless, Theorem~\ref{t-hard} leaves open the question of whether there is an analogue of Theorem~\ref{t-easy} for $\ell=\infty$, the ``cycle packing'' case; Theorem~\ref{t-easy} is concerned only with $\ell=2$, the ``matching'' case. 
We show that the answer to this question is no (assuming $\sP\neq \NP$).

\begin{theorem}\label{t-ppg}
For partitioned permutation games even of width~$2$, the problem of finding an optimal solution that is weakly or strongly close to a given target allocation~$x$ is \NP-hard.
\end{theorem}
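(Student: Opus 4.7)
The plan is to prove NP-hardness by reduction from a standard NP-hard problem (for concreteness, let us say 3-SAT; a similar reduction from EXACT COVER BY 3-SETS or 3-DIMENSIONAL MATCHING would also work). First, observe that it suffices to handle the weakly close variant: a strongly close optimal solution is automatically weakly close (its sorted deviation vector is lexicographically minimal, hence its first entry $d_1$ is minimum), so a polynomial-time algorithm for the strongly close variant would also solve the weakly close variant. Thus NP-hardness of the weakly close problem immediately yields NP-hardness of the strongly close problem, and a single reduction will suffice.

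From a 3-SAT instance $\phi$ I would construct a directed graph $G = (V,A)$, a width-$2$ partition $(V_1,\ldots,V_n)$ and an integer target allocation $x$ so that $\phi$ is satisfiable iff the resulting partitioned permutation game admits a maximum cycle packing $\mathcal{C}$ with $d_1(\mathcal{C})=0$ (equivalently $s_p(\mathcal{C})=x_p$ for every country $p$). The construction uses three kinds of gadgets: (i) \emph{variable gadgets}, in which each binary choice is realized by a width-$2$ country $\{u_i,v_i\}$ with target $1$ and with arcs arranged so that every max cycle packing covers exactly one of $u_i,v_i$; the identity of the covered vertex encodes the truth value; (ii) \emph{clause gadgets}, realized by longer directed cycles which close consistently only when the appropriate literal-vertices are left uncovered by the variable gadgets, so that a clause gadget attains its maximum coverage exactly when its clause is satisfied; and (iii) \emph{padding} vertices and arcs ensuring that the global maximum cycle packing size is a fixed value $V^*$ independent of the encoded assignment, so that the difficulty arises from the target-matching condition rather than from size maximization.

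The main obstacle will be designing these gadgets under the width-$2$ restriction: each country carries at most two vertices, so all information has to propagate through the arc structure rather than through the partition. The ingredient that makes the reduction possible, and that is absent in the $\ell=2$ setting of Theorem~\ref{t-easy}, is the freedom under $\ell=\infty$ to use arbitrarily long directed cycles to link distant variable and clause gadgets; this is precisely the extra expressiveness that separates cycle packings from matchings and explains the complexity gap. Correctness then has two directions: from a satisfying assignment of $\phi$ one exhibits an explicit max cycle packing with $s_p(\mathcal{C})=x_p$ for every $p$; conversely, from any max cycle packing with $d_1(\mathcal{C})=0$ one reads off a consistent truth assignment (forced by the width-$2$ variable countries) whose clauses must be satisfied, because otherwise some clause gadget would be under-covered and either the packing would fail to be maximum or some country's target would be violated. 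Since the reduction is clearly polynomial-time computable, NP-hardness of the weakly close problem follows, and hence of the strongly close problem by the initial observation.
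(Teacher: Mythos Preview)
Your high-level strategy is sound and essentially matches the paper's: reduce from an \NP-hard problem to the decision question ``is there a maximum cycle packing with $d_1(\mathcal{C})=0$?'', and your observation that \NP-hardness for the weakly close variant immediately yields it for the strongly close variant is correct. The paper takes exactly this route, but via \textsc{Exact 3-Cover} (one of the alternatives you mention), and it factors the argument through a preliminary lemma showing that it is \NP-complete to decide whether some maximum cycle packing $\mathcal{C}$ satisfies $s_p(\mathcal{C})\in[1,1]$ for every country, already at width~$2$.

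The genuine gap is that what you have written is a plan, not a proof. The entire substantive content of such a reduction is the explicit gadget construction, and you have only specified what your gadgets should \emph{achieve}, not how to build them. Concretely: you assert that width-$2$ variable countries $\{u_i,v_i\}$ can be set up so that every maximum cycle packing covers exactly one of the two vertices, but give no arc structure realizing this while keeping the global maximum fixed; your clause gadgets are described only as ``longer directed cycles which close consistently'', without saying how consistency with the variable choices is enforced under vertex-disjointness; and the padding that makes the maximum size assignment-independent---typically the delicate part of any such reduction---is left entirely open. By contrast, the paper's construction is fully explicit: singleton countries $\{a_i\}$ for elements, width-$2$ countries $\{s_j^l,t_j^l\}$ for set-element incidences with $2$-cycles between $s_j^l$ and the appropriate $a_i$, arcs $(t_j^1,t_j^2),(t_j^2,t_j^3)$, and a pool of singleton source/sink countries $\{x_k\},\{y_k\}$ that create length-$5$ cycles $y_k x_k t_j^1 t_j^2 t_j^3 y_k$ absorbing exactly the $2n$ unused set gadgets. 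A direct count gives maximum size $16n$ equal to the number of countries, so target $x_p=1$ everywhere is exactly achievable iff an exact $3$-cover exists. Your proposal would need a comparably concrete construction before it can be assessed as a proof.
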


\noindent
{We refer to Table~\ref{dichotomies} for an overview of all the known and new results that we discussed above.}

Our last theoretical result is a randomized \XP\ algorithm with parameter~$n$. As we shall prove, derandomizing it requires solving the notorious {\sc Exact Perfect Matching} problem in polynomial time.
The complexity status of the latter problem is still open since its introduction by Papadimitriou and Yannakakis~\cite{PY82} in 1982.

\begin{theorem}\label{t-interval}
For a partitioned permutation game $(N,v)$ on a directed graph $G=(V,A)$, the problem of finding an optimal solution that is weakly or strongly close to a given target allocation~$x$ can be solved by a randomized algorithm in $|A|^{O(n)}$ time.
\end{theorem}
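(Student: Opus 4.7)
My plan is to reduce the task to a constrained perfect-matching problem on a bipartite auxiliary graph, enumerate all possible country profiles, and use a determinant-based randomized test to check feasibility of each profile.

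First, I would apply the standard transformation from the introduction, building the bipartite graph $H$ with vertex classes $V$ and a copy $V'$, a ``self-edge'' $vv'$ of weight~$0$ for every $v\in V$, and an edge $uv'$ of weight~$1$ for every $(u,v)\in A$. Each perfect matching $\mu$ of $H$ corresponds to a cycle packing $\mathcal{C}$ of $G$ whose size equals the weight of $\mu$; writing $m_p$ for the number of self-edges of $\mu$ incident to $V_p$, one has $s_p(\mathcal{C})=|V_p|-m_p$ and the cycle-packing size is $|V|-\sum_p m_p$. So the problem becomes: among all perfect matchings of $H$ that minimize $\sum_p m_p$, find one whose profile $(m_1,\dots,m_n)$ makes the deviation vector $\bigl(|x_p-(|V_p|-m_p)|\bigr)_p$ lexicographically smallest (strongly close) or whose maximum entry is smallest (weakly close).

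Next, I would enumerate all $\prod_p(|V_p|+1)\le|A|^{O(n)}$ candidate profiles and, for each, decide whether $H$ admits a perfect matching with that exact profile. For the feasibility test I would form the $|V|\times|V|$ matrix $B(\mathbf{x})$ indexed by $V\times V'$ whose $(u,v)$-entry equals $r_{u,v}$ if $uv'$ is an arc-edge of $H$, equals $r_{v,v}\cdot x_p$ if $uv'=vv'$ is a self-edge with $v\in V_p$, and is $0$ otherwise, where the $r_{u,v}$ are drawn independently and uniformly from a field of size $\mathrm{poly}(|V|)$. In the bipartite setting each perfect matching $\mu$ corresponds to a unique permutation and contributes a monomial $\pm\bigl(\prod r_{u,v}\bigr)\,x_1^{m_1(\mu)}\cdots x_n^{m_n(\mu)}$ to $\det B(\mathbf{x})$; distinct matchings have distinct edge sets and hence distinct $r$-monomials, so no cancellation occurs. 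Therefore, the coefficient of $x_1^{m_1}\cdots x_n^{m_n}$ in $\det B(\mathbf{x})$ vanishes identically (as a polynomial in the $r$'s of total degree at most $|V|$) exactly when no matching with profile $m$ exists; by the Schwartz--Zippel lemma a random substitution is nonzero with high probability otherwise. All $|A|^{O(n)}$ coefficients can be recovered at once by evaluating $\det B(\mathbf{x})$ on a grid of $\prod_p(|V_p|+1)$ points and applying multivariate polynomial interpolation (equivalently, Kronecker substitution to a univariate polynomial of degree $|V|^{O(n)}$), all within the claimed time bound.

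With the feasibility table in hand, I would select the target profile $(m_1^*,\dots,m_n^*)$ by first minimizing $\sum m_p$ (for maximum cycle-packing size) and then breaking ties lex-wise or max-wise using $x$; an actual optimal solution is recovered by standard self-reducibility, processing the arc-edges of $H$ one by one and using the oracle on $H-e$ to decide whether each arc-edge $e$ can be discarded while preserving the target profile, for a total of $O(|A|)$ oracle calls. The main obstacle is controlling possible sign cancellations in the determinant, but the bipartite structure of $H$ ensures that each perfect matching corresponds to a unique permutation of $V$, so distinct matchings contribute distinct monomials in the random indeterminates $r_{u,v}$ and no cancellation can spoil the Schwartz--Zippel test; this is also why derandomization is delicate, since the $n=1$ case already encodes the notorious \textsc{Exact Perfect Matching} problem of~\cite{PY82}.
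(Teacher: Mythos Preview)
Your proposal is correct and follows essentially the same approach as the paper: both reduce to an exact (colour-constrained) perfect matching problem on the standard bipartite auxiliary graph, enumerate all $|A|^{O(n)}$ possible country profiles, and invoke a randomized feasibility test for each. The only differences are presentational: the paper colours arc-edges by the country of their head and self-edges with a single extra colour and then cites the $|E|^{O(q)}$ randomized algorithm of~\cite{GKMST2012} for $q$-\textsc{Exact Perfect Matching} as a black box, whereas you track self-edges per country and spell out the determinant/Schwartz--Zippel/interpolation machinery directly (and additionally supply the self-reducibility step to extract an actual matching, which the paper omits).
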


\noindent
Our theoretical results highlighted severe computational limitations, and we now turn to our simulations. We perform a large-scale experimental study, in which our simulations are strongly guided by our theoretical results.
Namely, we note that the algorithm in Theorem~\ref{t-interval} is not practical for instances of realistic size (which we aim for).
Moreover, it is a randomized algorithm. Hence, it it not acceptable either in the setting of kidney exchange, as policy makers would only use solution yielding the maximum number of kidney transplants.
Therefore and also due to Theorem~\ref{t-ppg},
 we formulate the problems of computing a weakly or strongly close optimal solution as integer linear programs, as described in Section~\ref{a-ilp}. This enables us to use an ILP solver. We still exploit the fact that for $\ell=\infty$ (Theorem~\ref{t-hard}) we can find optimal solutions and values~$v(S)$ in polynomial time. In this way we can still perform, in Section~\ref{s-simul},  simulations for IKEPs up to {\it ten} countries, so more than the four countries in the simulations for $\ell=3$~\cite{BGKPPV20,KNPV20}, but less than the 
15 countries in the simulations for $\ell=2$~\cite{BBPY24}.

For the initial allocations we use two easy-to-compute solution concepts: the benefit value and contribution value and four hard-to-compute solution concepts: the Banzhaf value, Shapley value, nucleolus and tau-value.
{In particular, it follows from a well-known alternative definition of the nucleolus~\cite{MPS79} that the nucleolus is a core allocation for every game with a nonempty core. As by Theorem~\ref{t-core}, every partitioned permutation game has a nonempty core, this means that we do not need to consider the core as a separate solution concept in our simulations.}
{Finally,} we use the modified Banzhaf value called Banzhaf* value from~\cite{BBPY24}, where the credits are incorporated directly into the coalitional values. As in~\cite{BBPY24}, we research both the case where all countries have the same size and the case where countries can be either of small, medium or large size.

Our simulations show, as in~\cite{BBPY24},  that a credit system using strongly close optimal solutions instead of weakly optimal solutions makes an IKEP up to  
5{6}\% 
more balanced\footnote{{As discussed in Section~\ref{s-simul}, using the Banzhaf value with $n=10$ and equal country sizes yields an average relative deviation of 1.97\% when selecting weakly close optimal solutions and an average relative deviation of 0.86\% when selecting strongly close optimal solutions. So, the improvement in balancedness is $(1.97-0.86)/1.97=56\%$. For the other solution concepts, the improvements are between 36\% and~54\%.}} without decreasing the overall number of transplants. 
Our simulations {also} indicate that {out of the seven solution concepts, using} the Banzhaf* value 
{leads to the smallest deviations},
namely on average, a deviation of at most 0.90\%  from the 
initial allocation (but the differences between the Banzhaf* value {and} the Banzhaf value {or the} Shapley value are small).
Moreover, moving from $\ell=2$ to $\ell=\infty$ yields on average 46\% more kidney transplants
(using the same simulation instances generated by the data generator~\cite{PT21}). 

{For our simulations, we also kept track of the length of the exchange cycles. It turned out that these may be very large, in particular in the starting round. Previously, the relation between increase in transplants versus increase in cycle length was also researched by Bir\'o~\cite{BMR09}, namely for $\ell=2$, $\ell=3$, and $\ell=\infty$. However, their simulations were done in the context of the UK KEP only.}

\begin{table}[t]
\centering
\begin{tabular}{lllll}
\toprule
                                        & $\ell=2$    & & $\ell\in \{3,4,\ldots\}\;\;$     & $\ell=\infty$    \\
\midrule
finding an optimal solution                        & poly~\cite{ABS07}&         & \NP-h~\cite{ABS07}      & poly~\cite{ABS07}             \\[4pt]
testing core nonemptyiness                                    & poly if $c\leq 2$& & \NP-h & poly~\cite{TPPR84}
\\
& co\NP-hard if $c\geq 3$~\cite{BBKPP} 
\\[4pt]
finding a core allocation                                    & poly if $c\leq 2$& & \NP-h & poly~\cite{CT86}
\\
& co\NP-h if $c\geq 3$~\cite{BBKPP} 
\\[4pt]
deciding core membership                                   & poly if $c\leq 2$& & \NP-h & \textcolor{purple}{poly if $c=1$}\\
& co\NP-c if $c\geq 3$~\cite{BBKPP} 
&&&\textcolor{purple}{co\NP-c if $c\geq 2$}
\\[4pt]
finding a weakly close solution to target allocation   & poly~\cite{BBKPP}&         & \NP-h      & \textcolor{purple}{\NP-h}          \\[4pt]
finding a strongly close solution to target allocation & poly~\cite{BBKPP}&         & \NP-h      & 
\textcolor{purple}{\NP-h }         \\
\bottomrule
\end{tabular}
\smallskip
\caption{{Survey of complexity results relevant for IKEPs, for fixed cycle length $\ell=2$, $\ell \in \{3,4,\ldots\}$ and 
$\ell=\infty$, respectively, where $c$ is the width of the associated partitioned $\ell$-permutation game $(N,v)$. Here, poly stands for polynomial-time solvable; co\NP-c for co\NP-complete; \NP-h for \NP-hard; and \coNP-h means \coNP-hard. Results in \textcolor{purple}{purple} are new results shown in this paper, whereas non-referenced results follow directly from the referenced results.}} \label{dichotomies}
\end{table}

\section{{The Proofs of Our} Theoretical Results}~\label{s-theory}
We start with Theorem~\ref{t-core}.
Recall that the width~$c$ of a partitioned permutation game $(N,v)$ defined on a directed graph $G=(V,A)$ with vertex partition $(V_1,\ldots,V_n)$  is the maximum size of a set  $V_i$.

\medskip
\noindent
{\bf Theorem~\ref{t-core} (restated).}
{\it The core of every partitioned permutation game is nonempty, and it is possible to find a core allocation in polynomial time.
Moreover, for partitioned permutation games of fixed width~$c$, the problem of deciding if an allocation is in the core
is polynomial-time solvable if $c=1$ and co\NP-complete if $c\geq 2$.}

\begin{proof}
We first show that finding a core allocation of a partitioned permutation game can be reduced to finding a core allocation of a permutation game.    As the latter can be done in polynomial-time~\cite{CT86} (and such a core allocation always exists~\cite{TPPR84}),  the same holds for the former.

    Let $(N,v)$ be a partitioned permutation game on a graph $G=(V,A)$ with partition $(V_1,\ldots,V_n)$ of $V$. We create a permutation game $(N',v')$ by splitting each $V_i$ into sets of size~$1$, i.e., every vertex becomes a player in $N'$. Let $x'$ be a core allocation of $(N',v')$. For each {player} $i\in N$, we set ${x_i}=\sum_{v\in V_i} {x'_v}$, {that is, we sum the $x'$ values of the players in $N'$ corresponding to the vertices that $i$ controls in $G$.} It holds that $x(N)=v(N)$, as $(N,v)$ and $(N',v')$ are defined on the same graph~$G$; hence, the weight of a maximum weight cycle packing is unchanged.

    Suppose there is a {\it blocking} coalition $S\subset N$, that is, $v(S)>x(S)$ holds. By the construction of~$x$, it holds that the sum of the $x'$ values over all vertices in $\cup_{i\in S}V_i$ is less than $v(S)=v'(\{ u\mid u\in \cup_{i\in S}V_i \} )$. Hence, the players in $N'$ corresponding to these vertices would form a blocking coalition to $x'$ for $(N',v')$, a contradiction. As $x'$ can be found in polynomial-time, so can $x$.

Now we show that deciding whether an allocation is in the core can be solved in polynomial time for partitioned permutation games with width~$1$,
that is, for permutation games.
       Let $x\in \mathbb{R}^N$ be an allocation. 
       We create a weight function $w_x$ over the arcs by setting $w_x((u,v))={x_u}-1$.
    We claim that if there exists a blocking coalition, then there is a blocking coalition that consists of only vertices along a cycle.
    In order to see this, let $S$ be a blocking coalition, so $x(S)<v(S)$. By definition, $v(S)$ is the maximum size of a cycle packing ${\mathcal C}= \{C_1, C_2,..,C_k\}$ in $G[S]$. For $i=1,\ldots,k$, let $S_i$ be the set of vertices in $C_i$. From $$x(S_1)+x(S_2)+ \ldots x(S_k) \leq x(S) < v(S) = |S_1|+|S_2|+\ldots |S_k|,$$ we find that $x(S_i)<|S_i|$ for at least one set $S_i$. Hence, such an $S_i$ is also blocking.

So we just need to check whether there is a cycle {$C_i$ with vertex set $S_i$ such that $x(S_i)<|S_i|$}. 
{Note that $\sum_{(u,v)\in C_i}w_x(u,v)=x(S_i)-|S_i|$. Hence}, 
such a cycle exists if and only if   
 $w_x$ is not {\it conservative}  (where conservative means no negative weighted cycles exist). The latter can be decided in polynomial-time, for example with the Bellman-Ford algorithm.
    
Finally, we show that deciding if an allocation $x$ is in the core of a partitioned permutation game is co\NP-complete, even if each $|V_i|$ has size~$2$ (so $c=2$).
Containment in \coNP\ holds, as we can check in polynomial time if a coalition blocks an allocation. To prove hardness, we reduce from a special case of the \NP-complete problem~\textsc{Exact $3$-Cover}~\cite{GJWZ96,HDRB08}.

\problemdef{{\sc Exact $3$-Cover}}{A family of 3-element subsets of $[3n]$, $\mathcal{S}=\{ S_1,\dots, S_{3n}\}$, where each element belongs to exactly three sets}{Is there an \textit{exact $3$-cover} for $\mathcal{S}$, that is, a subset $\mathcal{S'}\subset \mathcal{S}$ such that each element appears in exactly one of the sets of $\mathcal{S'}$?}

\noindent	
Given an instance $I$ of {\sc Exact $3$-Cover},
we construct a partitioned permutation game $(N,v)$ as follows (see Figure~\ref{fig:coreverif} for an illustration).

\begin{itemize}
    \item [--] For each element $i\in [3n]$, there is a vertex $a_i$ and a vertex~$b_i$,
    \item [--] for each set $S_j\in \mathcal{S}$, there are vertices $s_j^1,s_j^2,s_j^3,t_j^1,t_j^2,t_j^3$,
    \item  [--] there are a further $12n$ vertices $x_1,\dots,x_{6n}$ and $y_1,\dots,y_{6n}$.
\end{itemize}

\noindent
Define the arcs as follows:
\begin{itemize}
\item [--]
for each $k\in [6n]$, an arc $(x_k,x_{k+1})$,
where $x_{6n+1} := x_1$.
\item [--]
for each $k\in [3n]$, an arc $(b_k,b_{k+1})$,
where $b_{3n+1} := b_1$;
 \item [--] for each $j\in [3n]$, the arcs $(t_j^1,t_j^2)$, $(t_j^2,t_j^3)$, $(t_j^3,t_j^1)$;
\item [--] for each $k\in [6n],j\in [3n],l\in [3]$, the arcs $(y_k,s_j^l)$ and $(s_j^l,y_k)$; and
\item [--] for each set $S_j=\{ j_1,j_2,j_3\}$, $j_1<j_2<j_3$, the arcs $(s_j^1,a_{j_1})$, $(a_{j_1},s_j^1)$, $(s_j^2,a_{j_2})$, $(a_{j_2},s_j^2)$, $(s_j^3,a_{j_3})$, $(a_{j_3},s_j^3)$.
\end{itemize}

\noindent
This gives a directed graph
$G=(V,A)$.
The players with their corresponding partition of the vertices are:
\begin{itemize}
\item [--] for each $i\in [3n]$, we have a player $A_i=\{ a_i,b_i\}$,
\item [--] for each $k\in [6n]$, we have a player $X_k=\{ x_k,y_k\}$, and
\item [--] for each $j\in [3n],l\in [3]$,  we have a player $T_j^l=\{ s_j^l,t_j^l\}$.
\end{itemize}

\noindent
Finally, we define the allocation $x{\in \mathbb{R}^N}$, as follows:
\begin{itemize}
\item [--] ${x_{A_i}}=3-\frac{n+1}{9n^2}$ for each $i\in [3n]$,
\item [--] ${x_{X_k}}=3-\frac{2n-1}{18n^2}$ for
each $k\in [6n]$, and
\item [--] ${x_{T_j^l}}=1+\frac{1}{9n}$ for each $j\in [3n],l\in [3]$.
\end{itemize}

\noindent
The size of the maximum cycle packing of $G$ is $$v(N)=6n+6n+9n+9n+3n+3n=36n,$$ as every vertex can be covered. This is realized by adding the $x_k$-cycle, the $b_i$-cycle, the $t_j^l$-cycles and then for each $a_i$, a cycle $\{(a_i,s_j^l),(s_j^l,a_i)\}$ (this can be done, because each element appears exactly three times in the sets, so there is a perfect matching covering the vertex of each element in the bipartite graph induced by the incidence relation between the sets and the elements). We can cover the remaining $6n$ $s_j^l$ vertices by two cycles $\{(y_k,s_j^l),(s_j^l,y_k)\}$ arbitrarily.

If we sum up all allocation values we find that
\[\begin{array}{lcl}
x(N) &= &6n \cdot (3-\frac{2n-1}{18n^2}) + 9n\cdot (1+\frac{1}{9n})+3n\cdot (3-\frac{n+1}{9n^2})\\
&=&18n -\frac{2n-1}{3n}+9n +1 +9n - \frac{n+1}{3n}\\
&=&36n\\
&= &v(N),
\end{array}\]
so $x$ is an allocation for $(N,v)$.

\begin{figure}[t]
  \centering
  \includegraphics[width=0.45\linewidth]{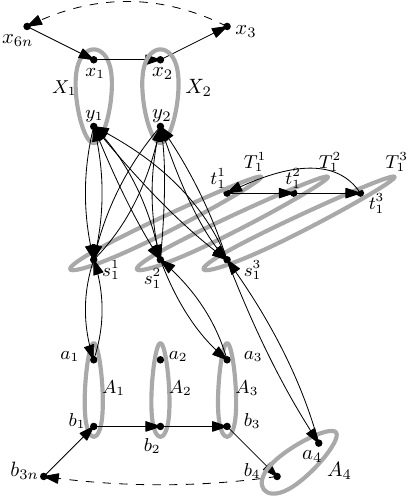}
  \caption{An illustration for Theorem \ref{t-core}. The figure shows the part of the construction for a set $S_1 =\{ 1,3,4\}$. Dashed arcs denote that there is a cycle through the given vertex whose vertices are not included in the picture. Dark grey bold ellipses denote the players.}\label{fig:coreverif}
\end{figure}

We claim $I$ has an exact $3$-cover if and only if $x$ is not in the core.

\medskip
\noindent
``$\Rightarrow$'' First suppose $\{ S_{k_1},\dots,S_{k_n} \}$ is an exact $3$-cover in $I$. We claim that $\mathcal{P}=\{ A_i \mid i\in [3n]\} \cup \{ T_{k_i}^l\mid i\in [n],l\in [3] \}$ is a blocking coalition. We first show that $v(\mathcal{P})=12n$,
which can be seen as follows. First, the $s_j^l$ and $a_i$ vertices in the coalition can be covered by $2$-cycles, as the corresponding sets form an exact $3$-cover. Moreover, the $b_i$ vertices can be covered by the $b_i$-cycle, as each of the $A_i$ players is in $\mathcal{P}$, and finally, the $t_j^l$ vertices can be covered too, as for each $j\in \{ k_1,\dots,k_n\}$ all of $T_j^1,T_j^2,T_j^3$ belong to $\mathcal{P}$. Then, $x$ is not in the core, as
\[\begin{array}{lcl}
    x(\mathcal{P}) &= &3n\cdot (3-\frac{n+1}{9n^2})+3n\cdot (1+\frac{1}{9n})\\ &= &9n-\frac{n+1}{3n}+3n +\frac{n}{3n}\\ &< &12n\\ &= &v({\mathcal P}).\end{array}\]

\medskip
\noindent
``$\Leftarrow$'' Suppose $x$ is not in the core. Then there is a coalition $\mathcal{P}$ with $v(\mathcal{P})>x(\mathcal{P})$.
We write $\mathcal{A}=\bigcup_{i\in [3n]}A_i$ and $\mathcal{X}=\bigcup_{k\in {6n}}X_k$. We claim that $\mathcal{A}\subset \mathcal{P}$ or $\mathcal{X}\subset \mathcal{P}$. For a contradiction, suppose that neither $\mathcal{A}\subset \mathcal{P}$ nor $\mathcal{X}\subset \mathcal{P}$ holds. Clearly, $\mathcal{P}\cap (\mathcal{A}\cup \mathcal{X})\ne \emptyset$, because $m$ of the $T_j^l$ players can only create a cycle packing of size $h$, but $h$ of them have together
	an allocation of $h(1+\frac{1}{9n})>h$.
	So suppose that $|\mathcal{P}\cap (\mathcal{A}\cup \mathcal{X})|=m\le 9n$. By our assumption, none of the vertices $b_i$ or $x_k$ vertices can be covered. Hence, if there are $h \ge 1$ participating $T_j^l$ players besides them (there must be at least one to have any cycles), then the size of the maximum cycle packing they can obtain is $h +2 \min \{ m,h \}\le 2m+h$, as at best all $t_j^l$ vertices can be covered, but the other vertices can only be covered with cycles of length~$2$ by pairing the $h$ $s_j^l$ vertices to $m$ $a_i$ or $x_i$ vertices. But, their assigned allocation in $x$ is at least $$m\cdot \min \{(3-\frac{n+1}{9n^2}),(3-\frac{2n-1}{18n^2})\} + h \cdot (1+\frac{1}{9n})>  2m+h,$$ a contradiction. Hence, $\mathcal{A}\subseteq \mathcal{P}$ or $\mathcal{X}\subseteq \mathcal{P}$ holds.

First suppose that $\mathcal{A}\cup \mathcal{X}\subseteq \mathcal{P}$. Let the number of participating $T_j^l$ players be $h$. Then, we have that $$v(\mathcal{P})\le 3n + 6n + 2h +h,\; \mbox{and}$$ $$v(\mathcal{P})>x(\mathcal{P})=18n-\frac{2n-1}{3n}+9n-\frac{n+1}{3n}+h +\frac{h}{9n}.$$ Hence, we find that $2h >18n-1+\frac{h}{9n}>18n -1$, so $h >9n-1$, but it also cannot be $9n$, as $18n = 18n -1 +\frac{9n}{9n}$, a contradiction (as there are only $9n$ $T_j^l$ players).

    Suppose next
     that $\mathcal{X}\subseteq \mathcal{P}$. Let $0\le m=|\mathcal{P}\cap \mathcal{A}|<3n$. Then, if the number of $T_j^l$ players in $\mathcal{P}$ is $h >0$, then $v(\mathcal{P})\le 6n+h +2\cdot h$. We can suppose that $h \le 6n+m$, because if there are more $T_j^l$ players, then at most $6n+m$ of their $s_j^l$ vertices can be covered, hence the remaining players bring strictly more $x$ value than what they can increase the maximum cycle packing size with. However,
\[\begin{array}{lcl}
  x(\mathcal{P}) &\ge &(18n-\frac{2n-1}{3n})+(h +\frac{h}{9n})+m\cdot (3-\frac{n+1}{9n^2})\\ &> &18n+h +3m -1.\end{array}\] Hence, in order for $\mathcal{P}$ to block, it must hold that $2h >12n+3m-1$, so $h >6n + 1.5m-0.5$, contradicting $h \le 6n+m$, if $m\ge 1$. In the case, when $m=0$, we get that $6n+3h > 18n -\frac{2n-1}{3n}+h + \frac{h}{9n}$, so $2h > 12n-\frac{2n-1}{3n}+\frac{h}{9n}>12n-1$. From this and $h \le 6n + 0$, we get that $h$ must be $6n$. However, then $12n>12n-\frac{2n-1}{3n}+\frac{6n}{9n}>12n$, which is a contradiction again.

Therefore, suppose that $\mathcal{A}\subseteq \mathcal{P}$, but $\mathcal{X}$ is not included in $\mathcal{P}$.  Let $0\le m=|\mathcal{P}\cap \mathcal{X}|<6n$. Now, if the number of $T_j^l$ players in $\mathcal{P}$ is $h$, then $v(\mathcal{P})\le 3n+h +2\cdot h$, similarly as before. Again, we can suppose that $h \le 3n+m$ for similar reasons. Furthermore, \[\begin{array}{lcl} x(\mathcal{P})&\ge &9n-\frac{n+1}{3n} +h +\frac{h}{9n}+m\cdot (3-\frac{2n-1}{9n^2})\\ &> &9n +h + 3m -1.\end{array}\] If $m\ge 1$, then this implies that $2h >6n+3m-1$, so $h > 3n+1.5m -0.5$, a contradiction. We conclude that $m=0$. Therefore, $h \le 3n$ and $2h > 6n -\frac{n+1}{3n}+\frac{h}{9n}>6n-1$, so $h =3n$.

 To sum up, we showed that $\mathcal{P}$ must contain $\mathcal{A}$, must be disjoint from $\mathcal{X}$ and there must be exactly $3n$ $T_j^l$ players inside $\mathcal{P}$.

%XY: typo: for some $l\in [3n]$--->for some $l\in [3]$
We claim that for each $j\in [3n]$, if $T_j^l\in \mathcal{P}$ for some $l\in [3n]$, then $T_j^l$ is inside $\mathcal{P}$ for all $l\in [3]$: if not then there must be at least one $t_j^l$ vertex that cannot be covered, hence $v(\mathcal{P})\le 12n-1$, but $x(\mathcal{P})=9n-\frac{n+1}{3n}+3n+\frac{3n}{9n}>12n-1$, a contradiction. Therefore, for each set $S_j$, if $T_j^l\in \mathcal{P}$ for some $l\in [3]$, then $T_j^l\in \mathcal{P}$ for all $l\in [3]$, so there are exactly $n$ sets $S_j$, such that $T_j^l\in \mathcal{P}$.

Finally, it remains to show that the $n$ sets corresponding to those
value of $j$ such that $T_j^l$ is in $\mathcal{P}$ for $l\in [3]$ must be the indices of an exact $3$-cover.
Suppose that there is an element $i$ that cannot be covered by them. Then, $a_i$ cannot be covered by a cycle packing by $\mathcal{P}$, so $v(\mathcal{P})\le 12n-1$, which leads to the same contradiction. \qed
\end{proof}

\noindent
Before we prove Theorem~\ref{t-ppg}, we need some definitions and a lemma. Let $G=(V,A)$ be a directed graph with a partition $(V_1,\ldots,V_n)$ of $V$ for some $n\geq 1$. Recall that for a maximum cycle packing ${\mathcal C}$ of~$G$, we let $s_p({\mathcal C})$ denote the number of arcs~$(u,v)$ with $v\in V_p$ that belong to some directed cycle of ${\mathcal C}$. We say that ${\mathcal C}$ {\it satisfies} a set of intervals $\{I_1,\ldots,I_n\}$ if $s_p({\mathcal C})\in I_p$ for every $p\in \{1,\ldots,n\}$.

\begin{lemma}\label{l-interval2}
For instances $(G,{\mathcal V},{\mathcal I})$, where $G=(V,A)$ is a directed graph, ${\mathcal V}=(V_1,\ldots,V_n)$ is a partition of $V$ with fixed width~$c$, and ${\mathcal I}=\{I_1, \dots, I_n\}$ is a set of intervals, the problem of finding a maximum cycle packing of $G$ satisfying ${\mathcal I}$ is polynomially solvable~if $c=1$, and \NP-complete if $c\geq 2$ even if $I_p=[1,\infty]$ for every $p\in \{1,\ldots,n\}$ or $I_p=[1,1]$ for every $p\in \{1,\ldots,n\}$.
\end{lemma}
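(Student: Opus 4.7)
For the polynomial case $c=1$, every $V_p$ is a singleton $\{v_p\}$, so $s_p({\mathcal C})\in\{0,1\}$ for every cycle packing ${\mathcal C}$. The plan is to invoke the bipartite reformulation of maximum cycle packing used in the proof of Theorem~\ref{t-hard}: construct a bipartite graph~$H$ with parts $V$ and a disjoint copy $V'$, assign weight~$1$ to the edge $uv'$ for every arc $(u,v)\in A$, and weight~$0$ to every edge $uu'$. Perfect matchings of $H$ then correspond bijectively to cycle packings of $G$, with matching weight equal to packing size. To encode the intervals I would prune $H$ locally: if $I_p$ forces $v_p$ to be covered (e.g.\ $I_p=[1,1]$), delete the edge $v_pv_p'$; if $I_p$ forbids coverage of $v_p$, retain only $v_pv_p'$ at both endpoints. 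A maximum-weight perfect matching in the pruned graph, computable in polynomial time, then yields a maximum cycle packing satisfying~${\mathcal I}$, or reports infeasibility.

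For \NP-hardness when $c\geq 2$, membership in \NP\ is immediate, since a cycle packing is a polynomial-size certificate whose size and interval-feasibility can be checked in polynomial time. For the hardness I plan two polynomial-time reductions from classical choice-style \NP-complete problems (e.g.\ \textsc{3-Sat} or \textsc{Exact $3$-Cover}), one for each of the restricted forms of~${\mathcal I}$. In both reductions, each binary decision of the source instance is encoded into a pair $V_p=\{u_p,w_p\}$, with arcs incident to $u_p$ and $w_p$ arranged so that the constraint $I_p=[1,1]$ (respectively $I_p=[1,\infty]$) forces exactly one (respectively at least one) of $u_p,w_p$ to participate in some cycle of the packing. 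Clause or covering gadgets, also partitioned into pairs of width two and padded to a fixed target packing size, are wired so that a cycle packing of maximum size satisfying ${\mathcal I}$ exists and attains this target if and only if the source instance is a \textsc{Yes}-instance.

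The main obstacle is the gadget design: every auxiliary vertex (needed for clauses, consistency, or size-padding) must live in some pair of width exactly two, while no unintended cycle packing may cheat the interval constraints. The case $I_p=[1,1]$ is especially sensitive, since each pair must come equipped with a structural asymmetry preventing the ``wrong'' vertex of the pair from being absorbed into a long cycle. The case $I_p=[1,\infty]$ is milder but still requires engineering: without additional combinatorial force, the bare ``cover at least one per pair'' condition translates, via the bipartite reformulation above, into perfect matching with edge-pair exclusion, so the construction must ensure the resulting conflict structure is rich enough to faithfully simulate the source problem rather than collapsing to a tractable special case.
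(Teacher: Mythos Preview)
Your treatment of the case $c=1$ is essentially the paper's argument: the bipartite reformulation plus deletion of the dummy edge $v_pv_p'$ when $v_p$ must be covered, and isolation of $v_p$ to just that edge when $v_p$ must not be covered. One point deserves sharpening: the maximum-weight perfect matching in the pruned graph gives a maximum cycle packing \emph{among those satisfying $\mathcal{I}$}, which need not be a maximum cycle packing of $G$. You must compare its weight to the unrestricted optimum and declare infeasibility if it falls short; the paper makes this comparison explicit.

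For $c\geq 2$ your proposal is a plan, not a proof: you describe the shape a gadget should have but do not build one, and you yourself flag the gadget design as the main obstacle. This is a genuine gap. Moreover, the paper exploits an observation you miss, which removes the need for two separate reductions: if the constructed instance has maximum cycle-packing size equal to the number of countries, then any maximum packing satisfying all intervals $[1,\infty]$ automatically satisfies all intervals $[1,1]$, since the units of size must be distributed one per country. Hence a single reduction handles both restrictions, and the ``especially sensitive'' case you worry about comes for free.

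Concretely, the paper reduces from \textsc{Exact $3$-Cover} (on $3n$ elements and $3n$ triples). For each element $i$ it creates a singleton country $\{a_i\}$; for each triple $S_j$ it creates three width-two countries $\{s_j^l,t_j^l\}$, $l\in\{1,2,3\}$, where the $t$-vertices carry arcs $t_j^1\!\to t_j^2\!\to t_j^3$ and each $s_j^l$ is joined by a $2$-cycle to the $a$-vertex of the $l$th element of $S_j$; finally it adds $2n$ source/sink singleton countries $\{x_k\},\{y_k\}$ with arcs $(y_k,x_k)$, $(x_k,t_j^1)$, $(t_j^3,y_k)$ for all $j,k$. The maximum packing has size $16n$, exactly the number of countries. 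Hitting every country then forces exactly $n$ of the indices $j$ to have their $T_j^l$ countries covered via the $s$-side (their $a$--$s$ $2$-cycles, which cover the $a_i$'s), while the remaining $2n$ indices are covered on the $t$-side by the $5$-cycles through $x_k,y_k$. The chosen $n$ triples form an exact cover, and conversely an exact cover yields such a packing.
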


\begin{proof}
First suppose $c=1$.
Let $v_p$ be the unique vertex in~$V_p$.  We can assume that each $I_p$ contains either $0$ or $1$, else no cycle packing satisfying ${\mathcal I}$ exists.
If $1 \notin I_p$, we can delete $v_p$ and redefine $G$ as the graph that remains.
If this decreases the size of the maximum cycle packing, then we conclude that no desired maximum cycle packing  exists.  Let $U$ be the set of vertices~$v_p$ for which $0 \notin I_p$.

The problem reduces to finding a maximum cycle packing such that each vertex in $U$ is covered.
For this, we transform $G=(V,A)$ into a bipartite graph $H$ with partition classes $V$ and $V'$, where $V'$ is a copy of $V$. For each $v\in V\setminus U$ and its copy $v'\in V'\setminus U'$, we add the edge~$uu'$ with weight~$0$ (we do not add these for the vertices of $U$). For each $(u,v)\in A$, we add the edge $uv$ with weight~$1$. It remains to find in polynomial time a maximum weight perfect matching in $H$, if there is any and check whether its weight is the same as the size of a maximum cycle packing in the original directed graph. If there is a perfect matching with that weight, then in the maximum cycle packing it corresponds to, each vertex in $U$ must be covered with a cycle. In the other direction, if there is a maximum cycle packing covering each vertex in $U$, then the perfect matching it corresponds to has the desired weight and we need no nonexistent $uu'$ edge for any $u\in U$ indeed.

Now suppose $c \geq 2$. Containment in \NP\ is trivial, as we can check in polynomial time if an arc set consists of vertex disjoint cycles or not, and for each player we can compute the number of incoming arcs.

To prove completeness, as in the proof of Theorem~\ref{t-core}, we reduce from the \NP-complete problem \textsc{Exact $3$-Cover}.
Given an instance $I$ of {\sc Exact $3$-Cover},
we construct an instance $I'$ of our problem as follows (see also Figure~\ref{fig:enter-label}):
\begin{itemize}
    \item [--] For each element $i\in [3n]$, we create a vertex $a_i$,
    \item [--] for each set $S_j\in \mathcal{S}$, we create vertices $s_j^1,s_j^2,s_j^3,t_j^1,t_j^2,t_j^3$,
    \item [--] we create $2n$ source vertices $x_1,\dots,x_{2n}$ and $2n$ sink vertices $y_1,\dots,y_{2n}$.
\end{itemize}

\noindent
Define the arcs as follows:
\begin{itemize}
\item [--] for each $k\in [2n]$, an arc $(y_k,x_k)$,
\item [--] for each $k\in [2n],j\in [3n]$, the arcs $(x_k,t_j^1)$ and $(t_j^3,y_k)$,

\item [--] for each $j\in [3n]$, the arcs $(t_j^1,t_j^2)$ and $(t_j^2,t_j^3)$, and
\item [--] for each set $S_j=\{j_1,j_2,j_3\}$, $j_1<j_2<j_3$, the arcs $(s_j^1,a_{j_1})$, $(a_{j_1},s_j^1)$, $(s_j^2,a_{j_2})$, $(a_{j_2},s_j^2)$, $(s_j^3,a_{j_3})$ and $(a_{j_3},s_j^3)$.
\end{itemize}

\noindent
This gives a directed graph $G=(V,A)$. We partition $V$ into players as follows:
\begin{itemize}
\item [--] for each $i\in [3n]$, we have a player $A_i=\{ a_i\}$,
\item [--] for each $k\in [2n]$, we have a player $X_k=\{ x_k\}, Y_k=\{ y_k\}$, and
\item [--] for each $j\in [3n],l\in [3]$, we have a player $T_j^l=\{ s_j^l,t_j^l\}$ .
\end{itemize}

\noindent
The maximum cycle packing of $G$ has size $16n$. This is because the $x_i,y_i$ vertices allow $2n$ cycles of length~$5$ through $t_j^1,t_j^2,t_j^3$ triples covering $10n $ vertices. The rest of the $t_j^l$ vertices cannot be covered. Also, for the other $s_j^l$ and $a_i$ vertices, they span a directed bipartite graph, so at most $3n+3n$ vertices can be covered, as we have only $3n$ $a_i$ vertices. And $6n$ can be covered indeed, as we can just choose an arbitrary $s_j^l$ neighbour for each $a_i$ and pair them with a 2-cycle.

The interval for each player is $[1,\infty]$. Since the size of the maximum cycle packing of $G$ is $16n$, which is the same as the number of players, if there is a solution that satisfies these intervals, then it also must satisfy the intervals $[1,1]$ for each player. Hence the last two statements of the Lemma are equivalent in this instance.

\begin{figure}[t]
    \centering
    \includegraphics[width=180pt]{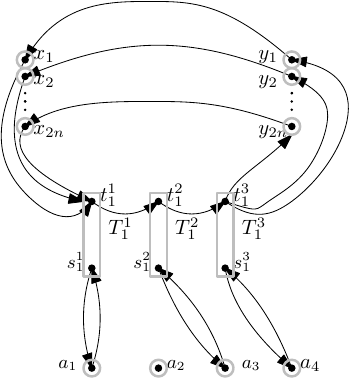}
    \caption{Illustration for Lemma \ref{l-interval2}, showing the construction for a set $S_1=\{1,2,4\}$. Grey bold lines mark the players.}
    \label{fig:enter-label}
\end{figure}

As a maximum cycle packing in $G$ has size~$16n$, which is equal to the sum of the lower bounds, $G$ has a cycle packing satisfying every interval if and only if $G$ has a maximum cycle packing satisfying every interval. We claim $I$ admits an exact $3$-cover if and only if  $G$ admits a
cycle packing satisfying every interval.

\medskip
\noindent
``$\Rightarrow$''  First suppose $I$ has an exact $3$-cover $S_{l_1},\dots,S_{l_n}$. We create a cycle packing $\mathcal{C}$ of $G$. For each $j\in \{ l_1,\dots,l_n\}$, we add the cycles  $\{ (s_j^1,a_{j_1}),(a_{j_1},s_j^1)\} ,\{ (s_j^2,a_{j_2}),(a_{j_2},s_j^2)\} ,\{ (s_j^3,a_{j_3}),(a_{j_3},s_j^3)\} $. For $j\notin \{ l_1,\dots,l_n\}$ we add the arcs $(t_j^1,t_j^2),(t_j^2,t_j^3)$. Finally, for each $i\in [2n]$, we add the arcs $(y_i,x_i),(x_i,t_{j_i}^1),(t_{j_i}^3,y_i)$, where $j_i$ is the $i$-th smallest index among the indices $[3n]\setminus \{ l_1,\dots,l_n\}$.

    Clearly, ${\mathcal C}$ is a cycle packing. Each $A_i$  has an incoming arc, as $S_{l_1},\dots,S_{l_n}$ was an exact $3$-cover. As there are exactly $2n$ sets not in the set cover, all of the corresponding players $T_j^l$  have one incoming arc in a cycle of the form $\{ (y_i,x_i),(x_i,t_{j_i}^1),(t_{j_i}^1,t_{j_i}^2),((t_{j_i}^3,y_i)\}$, and so did each $X_i$
and each $Y_i$.
Hence, all lower bounds are satisfied.

\medskip
\noindent
``$\Leftarrow$''  Now suppose $G$ admits a
cycle packing satisfying every interval. Then, as $X_i$ has an incoming arc for  all $i\in [2n]$, all $(x_i,y_i)$ arcs are included in the cycle packing. This means that there are $2n$ such $j\in [3n]$, such that the arcs $(t_j^1,t_j^2),(t_j^2,t_j^3)$ are included in a cycle $\{ (y_i,x_i),(x_i,t_j^1),(t_j^1,t_j^2),$ $(t_j^2,t_j^3),(t_j^3,y_i)\} $ of $\mathcal{C}$.

    From the above, we have that there are $n$ indices $j$ from $[3n]$, such that none of the players $T_j^1,T_j^2,T_j^3$ have incoming arcs of this form. Hence, all these players can only have incoming arcs from a player~$A_i$. As each such $T_j^i$  must have one incoming arc, it follows that for all these $j$, the cycles $\{ (a_{j_1},s_j^1),(s_j^1,a_{j_1})\} $, $\{ (a_{j_2},s_j^2),(s_j^2,a_{j_2})\}$, $\{ (a_{j_3},s_j^3),(s_j^3,a_{j_3})\} $ are included in $\mathcal{C}$, so they are vertex disjoint. Hence, the corresponding sets must form an exact $3$-cover. \qed
\end{proof}

\medskip
\noindent
{\bf Theorem~\ref{t-ppg} (restated).}
 \emph{For partitioned permutation games even of width~$2$, the problem of finding an optimal solution that is weakly or strongly close to a given target allocation~$x$ is \NP-hard.}

\begin{proof}
Recall that $x_p$ denotes the target for the number of arcs~$(u,v)$ with $v\in V_p$ that belong to some directed cycle of ${\mathcal C}$.  Letting each $I_p = [x_p, x_p]$ and applying Lemma~\ref{l-interval2}, we see that finding a cycle packing where each $s_p({\mathcal C})$ is equal $x_p$ (so differs by at most $0$) is \NP-complete.  Thus it is \NP-hard
to find the maximum cycle packing that minimizes $d_1({\mathcal C})=\max_{p\in N} \{|x_p-s_p({\mathcal C})|\}$; that is, to find a solution that is weakly close to a given target and similarly it is also hard to find a strongly close solution. \qed
\end{proof}

\noindent
In the remainder of our paper, the following problem plays an important role:

\problemdef{$q$-\textsc{Exact Perfect Matching}}{An undirected bipartite graph $B=(U,W;E)$, where each edge is coloured with one of $\{1,\dots, q\}$, and numbers $k_1,\ldots,k_q$.}{Is there a perfect matching in $B$ consisting of $k_q$ edges of each colour $q$?}

\noindent
For $q=2$, this problem is also known as {\sc Exact Perfect Matching}, which, as mentioned, was introduced by Papadimitriou and Yannakakis~\cite{PY82} and whose complexity status is open for more than 40 years. In the remainder of this section, we will give both a reduction to this problem and a reduction from this problem. We start with the former in the proof of our next result (Theorem~\ref{t-general}), from which
Theorem~\ref{t-interval}  immediately follows.

Let $x$ be an allocation for a partitioned permutation game $(N,v)$ on a graph $G=(V,A)$. For a maximum cycle packing ${\mathcal C}$,
 $d'({\mathcal C})=(|x_{p_1}-s_{p_1}({\mathcal C})|, \dots, |x_{p_n}-s_{p_n}({\mathcal C})|)$ is the {\it unordered deviation vector} of ${\mathcal C}$.

\begin{theorem}\label{t-general}
For a partitioned permutation game $(N,v)$ on a directed graph $G=(V,A)$ and a target allocation~$x$, it is possible to generate the set of unordered 
deviation vectors
in $|A|^{O(n)}$ time by a randomized algorithm.
\end{theorem}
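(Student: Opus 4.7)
The plan is to reduce to a colour-profile enumeration for perfect matchings in a bipartite graph, and extract all achievable profiles in a single symbolic computation by combining the determinant trick with the Schwartz--Zippel lemma. First I would apply the bipartite transformation used in the proof of Theorem~\ref{t-hard}: build $B=(V\cup V',E)$ where $V'=\{u':u\in V\}$; for every arc $(u,v)\in A$ with $v\in V_p$ insert an edge $uv'$ of weight~$1$ and colour $c(uv'):=p$; and for every $u\in V$ insert an edge $uu'$ of weight~$0$ with a special colour $0$. Perfect matchings of $B$ correspond bijectively to cycle packings of $G$, the matching's weight equals the packing's size, and the number of colour-$p$ edges used by the matching equals $s_p(\mathcal{C})$ for the associated packing~$\mathcal{C}$. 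In particular, maximum cycle packings are exactly the perfect matchings of $B$ using $|V|-v(N)$ edges of colour~$0$, and the achievable tuples $(s_1(\mathcal{C}),\dots,s_n(\mathcal{C}))$ are precisely the colour profiles, restricted to colours $1,\dots,n$, realised by those matchings.

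Next I would introduce formal variables $z_0,z_1,\dots,z_n$, one per colour, together with independent random multipliers $r_e$, one per edge, drawn uniformly from a finite field $\mathbb{F}$ of size $\mathrm{poly}(|V|)$. Form the $|V|\times|V|$ matrix $M$ indexed by $V\times V'$ with $M_{u,v'}=z_{c(e)}r_e$ if $e=uv'\in E$, and $0$ otherwise. Expanding the determinant yields
\[
\det(M)=\sum_{(k_0,\dots,k_n)}\! z_0^{k_0}z_1^{k_1}\!\cdots z_n^{k_n}\!\!\!\sum_{\substack{\sigma\text{ perfect matching of }B\\ k_c(\sigma)=k_c,\;c=0,\dots,n}}\!\!\!\mathrm{sign}(\sigma)\prod_{e\in\sigma}r_e,
\]
where $k_c(\sigma)$ denotes the number of colour-$c$ edges in~$\sigma$. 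The crucial observation is that distinct perfect matchings use distinct edge sets, so the contributions $\prod_{e\in\sigma}r_e$ are pairwise distinct multilinear monomials in the $r_e$; hence the inner sum is, as a formal polynomial in the $r_e$, identically zero exactly when no perfect matching realises the given profile. By Schwartz--Zippel, a uniform random choice of the $r_e$ then makes the numerical value of the coefficient of every realisable profile nonzero with high probability, simultaneously across all profiles.

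Finally I would read off every coefficient. Since $\det(M)$ has total degree $|V|$ in $n+1$ variables, it has at most $\binom{|V|+n}{n}=|V|^{O(n)}$ monomials, and standard multivariate polynomial interpolation recovers every coefficient from $|V|^{O(n)}$ scalar evaluations of $\det(M)$, each costing $\mathrm{poly}(|V|)$ field operations. Restricting attention to profiles with $k_0=|V|-v(N)$, every nonzero coefficient exposes an achievable tuple $(k_1,\dots,k_n)$ from which the unordered deviation vector $(|x_1-k_1|,\dots,|x_n-k_n|)$ is immediate; collecting these yields the required set within a total running time of $|A|^{O(n)}$.

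The main obstacle is the familiar sign cancellation that normally restricts the determinant to a signed count of matchings (the reason permanents are hard). My plan side-steps it by attaching an independent random multiplier $r_e$ to every edge: within any fixed colour profile, distinct matchings contribute distinct multilinear monomials in the $r_e$ and so cannot cancel as formal polynomials, and Schwartz--Zippel then upgrades formal nonvanishing to numerical nonvanishing. I would only need to verify that taking $|\mathbb{F}|$ a sufficiently large polynomial in $|V|$ keeps the total failure probability, aggregated over the $|V|^{O(n)}$ profiles, below a constant; a standard union bound suffices.
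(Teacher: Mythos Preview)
Your argument is correct and rests on the same reduction as the paper: both transform $G$ into the bipartite graph $B$ with colour~$0$ loop-edges and colour-$p$ edges for arcs entering $V_p$, so that perfect matchings of $B$ with exactly $|V|-v(N)$ colour-$0$ edges correspond to maximum cycle packings, with colour counts giving the vector $(s_1(\mathcal{C}),\dots,s_n(\mathcal{C}))$. The difference is purely in how the achievable colour profiles are extracted. The paper enumerates all $|A|^{O(n)}$ candidate profiles and, for each, invokes the randomized $q$-\textsc{Exact Perfect Matching} algorithm of~\cite{GKMST2012} as a black box. You instead unfold that black box: the symbolic determinant with edge-randomisers $r_e$ and colour indeterminates $z_0,\dots,z_n$, together with Schwartz--Zippel, is exactly the MVV/Lov\'asz machinery underlying such results, and your interpolation step recovers all profiles in a single sweep rather than one at a time. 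Both routes yield the same $|A|^{O(n)}$ randomized bound; yours is more self-contained, the paper's is shorter by deferring to the literature.

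One small wording issue: you say it suffices to take $|\mathbb{F}|$ ``a sufficiently large polynomial in $|V|$'', but the union bound is over $|V|^{O(n)}$ profiles, each failing with probability at most $|V|/|\mathbb{F}|$, so you actually need $|\mathbb{F}|=|V|^{\Omega(n)}$. This does not hurt the running time---field elements then have $O(n\log|V|)$ bits and arithmetic stays within the $|A|^{O(n)}$ budget---but the phrase ``polynomial in $|V|$'' undersells the required size.
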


\begin{proof}
Let $(N,v)$ be a partitioned permutation game with $n$ players, defined on a directed graph $G=(V,A)$ with vertex partition $V_1, \ldots, V_n$.
As mentioned, we reduce to $q$-\textsc{Exact Perfect Matching} for an appropriate value of $q$.
From $(N,v)$ and a vector $d'=(d_1',\ldots,d_n')$ with $d_p'\geq 0$ for every $p\in N$, we define an undirected bipartite graph $B=(U,W;E)$ with coloured edges: for each vertex $v\in V$, there is a vertex $v^{in}\in U$ and a vertex $v^{out}\in W$ and an edge $v^{in}v^{out}\in E$ that has colour $n+1$; for each arc $(u,v)\in A$, there is an edge $u^{out}v^{in}$, which will be coloured $p$ if $v\in V_p$.  Let $k_{n+1}=|V|-v(N)$ and, for $p\in\{1,\dots,n\}$, let $k_p=d_p'$.

We observe that $G$ has a maximum cycle packing ${\mathcal C}$ with $s_p({\mathcal C})=d_p'$  if and only if $B$ has a perfect matching with $k_p$ edges of each colour~$p\in \{1,\ldots,n+1\}$.

As each $k_i$ can only have a value between $0$ and $|E|=|A|+n=O(|A|)$, the above reduction implies that the set of unordered deviation vectors has size $|A|^{O(n)}$  for any allocation $x$ for $(N,v)$. We can find each of these vectors in $|A|^{O(n)}$ time by a randomized algorithm, as $q$-\textsc{Exact Perfect Matching} is solvable in $|E|^{O(q)}$ time with $q$ colours with a randomized algorithm \cite{GKMST2012}. \qed
\end{proof}

\noindent

We cannot hope to derandomize the algorithm from Theorem~\ref{t-general} without first solving {$2$-\textsc{Exact Perfect Matching}} problem in polynomial time, as we now show. 

\begin{theorem}\label{thm:part-matching-harder-than}
   Every instance of $2$-\textsc{Exact Perfect Matching} can be reduced in polynomial time to checking whether a partitioned permutation game $(N,v)$ with only 2 players has a solution with no deviation from a target allocation $x$. 
\end{theorem}

%XY: typo:  where $w_e$ is a new vertex--->where $v_e$is a new vertex
\begin{proof}
Take an instance $I=(B,k_1,k_2)$ of {$2$-\textsc{Exact Perfect Matching}}.
Let $B=(U,W;E)$ be the bipartite graph in $I$ with $|U|=|W|=n$ for some integer~$n$ whose edges are coloured either red (colour~$1$) or blue (colour~$2$). We may suppose that $n=k_1+k_2$, otherwise $I$ is clearly a no-instance. We construct a digraph $G$ from $B$ by replacing every edge $e=uw$ with a directed $3$-cycle on arcs $(u,v_e)$, $(v_e,w)$, $(w,u)$, where $w_e$ is a new vertex that has only $u$ and $v$ as its neighbours in $G$. 

Let the first player's set $V_1$ consist of all vertices $v_e$, for which $e$ is a red edge in $E$ and let the other player's vertex set be $V_2=V\setminus V_1$. This defines a partitioned permutation game $(N,v)$. Let the target allocation $x$ be given by $x=(k_1,3n-k_1)$.

%XY: typo: we can subsitute each edge--->we can substitute each edge
We claim that $(B,k_1,k_2)$ is a yes-instance of {\sc $2$-Exact Perfect Matching} if and only if $G$ has a cycle packing ${\mathcal C}$ with $s_1({\mathcal C})=k_1$ and $s_2({\mathcal C})=3n-k_1$. Indeed, from a perfect matching $M$ of $B$ containing exactly $k_1$ red edges, we can subsitute each edge $uw\in M$ by the corresponding $3$-cycle $(u,v_e,w)$, such that exactly $k_1$ vertices from $V_1$ and $3n-k_1$ from $V_2$ are covered. It is also clear that this is a maximum size cycle packing, as at most $n$ of the $v_e$ vertices can be covered in any cycle packing. In the other direction, suppose we have a maximum size cycle packing with $s_1({\mathcal C})=k_1$ and $s_2({\mathcal C})=3n-k_1$. Then, it covers $k_1$ of the vertices $v_e$ corresponding to red edges and $n-k_1=k_2$ of the vertices $v_e$ corresponding to the blue edges. Also, it must be that for each such covered vertex $v_e$ for $e=uw$, we have the arcs $(u,v_e)$ and $(v_e,w)$ in ${\mathcal C}$. Therefore, these edges induce a perfect matching in $B$ with exactly $k_1$ red and $k_2$ blue edges.  
\end{proof}

\section{ILP Formulation}\label{a-ilp}

In this section, we show how to find an optimal solution of a partitioned permutation game that is strongly close to a given target allocation~$x$ by solving a sequence of Integer Linear Programs (ILPs).
Let $(N,v)$ be a partitioned permutation game defined on a directed graph $G=(V,A)$ with vertex partition $V_1, \ldots, V_n$ for some $n\geq 1$.
Recall that for a maximum cycle packing ${\mathcal C}$ of $G$, we let $s_p({\mathcal C})$ denote the number of arcs~$(u,v)$ with $v\in V_p$ that belong to some directed cycle of ${\mathcal C}$. Recall also that $|x_p-s_p({\mathcal C})|$ is the deviation of country~$p\in N$ from its target $x_p$ if ${\mathcal C}$ is chosen as optimal solution.
Moreover, in the vector $d({\mathcal C})= (|x_{p_1}-s_{p_1}({\mathcal C})|, \dots, |x_{p_n}-s_{p_n}({\mathcal C})|)$, the deviations $|x_p-s_p({\mathcal C})|$ are ordered non-increasingly. Finally, we recall that
${\mathcal C}$ is strongly close to~$x$ if $d({\mathcal C})$ is lexicographically minimal over all optimal solutions for $(N,v)$.

In the kidney exchange literature the following ILP is called the \emph{edge-formulation} (see e.g. \cite{ABS07}).
For each ${ij}\in A$ in the graph $G$, let $e_{{ij}}\in \{0,1\}$ be a (binary) edge-variable; this yields a vector~$e$ with entries $e_{{ij}}$.

\begin{equation*}\tag{edge-formulation}\label{edge-formulation}
\begin{array}{rl}
\mu^*:= \displaystyle \max_e \sum_{ij\in A}e_{ij} & \text{s.t.}\\[-5pt]
\end{array}
\end{equation*}
\begin{alignat}{5}
\displaystyle \sum_{j: ji\in A} e_{ji} & = && \displaystyle \sum_{j: ij\in A} e_{ij} & \quad \forall i\in V  \label{const-1}\\
\displaystyle \sum_{j: ji\in A} e_{ji} & \leq && \quad 1 & \quad \forall i\in V  \label{const-2}
\end{alignat}

\noindent
{Constraint~\eqref{const-1}} represents the well-known Kirchoff law{, whereas} {\eqref{const-2}} ensures that every node is covered by at most one cycle. The objective function provides a maximum cycle packing{, which has} size $\mu^*$.

This ILP has $|A|$ binary variables and $2|V|$ constraints. In the following we are going to sequentially find largest country deviations $d_t^*$ ($t \geq 1$) and the corresponding minimal number $n_t^*$ of countries receiving that deviation.
We achieve this by solving 
an 
ILP of similar size for each $d_t^*$ and $n_t^*$, so two ILPs per iteration $t$. By similar size, we mean that in each iteration we are going to add $|N|$ binary variables
and a single additional constraint,
while $|N| \leq |V|$ holds by definition
and typically $|N|$ is much smaller than $|A|$. Meanwhile, since at every iteration we are going to fix the deviation of at least one additional country (we will not necessarily know \emph{which} country, we are only going to keep track of the number of countries with fixed deviation), the number of iterations are at most $|N|$ (as $t \leq |N|$).
Hence, we will solve no more than $2|N|$ ILPs, among which the largest has $O(|A|+|N|^2)$ binary variables and $O(|V|+|N|)=O(|V|)$ constraints.

Once we have $\mu^*$ we solve the following ILP to find $d_1^*$:
\begin{equation*}\tag{$\text{ILP}_{d_1}$}\label{ILPd1}
\begin{array}{rl}
d_1^* := \displaystyle \min_{e,d_1} d_1 & \text{s.t.}\\
\end{array}
\end{equation*}
\begin{alignat}{5}
\displaystyle \sum_{j: ji\in A} e_{ji} & = && \displaystyle \sum_{j: ij\in A} e_{ij} & \quad \forall i\in V  \tag{1} \label{d1-const-1}\\
\displaystyle \sum_{j: ji\in A} e_{ji} & \leq && \quad 1 & \quad \forall i\in V  \tag{2} \\
\displaystyle \sum_{ij\in A} e_{ij} & = && \quad \mu^* \label{const-3}& \\
\displaystyle \sum_{j\in V_p} {\sum_{i: ij \in A}} e_{ij}-x_p & \leq && \quad d_1 & \quad \forall p\in N \label{const-4}\\
\displaystyle x_p-\sum_{j\in V_p} {\sum_{i: ij \in A}} e_{ij} & \leq && \quad d_1 & \quad \forall p\in N \label{const-5}
\end{alignat}

\medskip
\noindent
{Constraints~\eqref{d1-const-1}-\eqref{const-3}} guarantee that all solutions are, in fact, maximum cycle packings.
These constraints will be part of the 
formulation{s} 
throughout the entire ILP-series. {Constraints~\eqref{const-4} and \eqref{const-5}}, together with the objective function, guarantee that we minimize the largest country deviation. Note that for each country~$p$, exactly one of $\sum_{j\in V_p}{\sum_{i: ij \in A}} e_{ij}-x_p$ and $x_p - \sum_{j\in V_p} {\sum_{i: ij \in A}}e_{ij}$ is positive and exactly one 
is negative, unless both 
are zero. However, as soon as we reach $d_t^* \leq 1/2$ we have found a
strongly close maximum cycle packing. Hence,
in the remainder, we assume $d_t^* > 1/2$ for each $t$ such that the series continues with $t+1$.

\eqref{ILPd1} has one additional continuous variable ($d_1$) and $2|N|+1$ additional constraints. For every country $p \in N$, we have that $|\sum_{j\in V_p}{\sum_{i: ij \in A}} e_{ij}-x_p| \leq d_1^*$. However, there exists
a smallest subset $N_1 \subseteq N$ (which may not necessarily be unique)
such that
\begin{equation*}
\begin{array}{ll}
\displaystyle \left|\sum_{j\in V_p}{\sum_{i: ij \in A}} e_{ij}-x_p\right| = d_1^* & \forall p\in N_1 \\
&\\
\displaystyle \left|\sum_{j\in V_p}{\sum_{i: ij \in A}} e_{ij}-x_p\right| \leq d_2^* < d_1^* & \forall p\in N\setminus N_1
\end{array}
\end{equation*}

\medskip
\noindent
In a solution of \eqref{ILPd1}, let $n_1$
be the number of countries with deviation $d_1^*$.
We need to determine if there is another solution of \eqref{ILPd1}
with fewer than $n_1$, possibly $n_1^*$, countries having 
deviation~$d_1^*$.
For this purpose we must be able to distinguish between countries unable to have less than $d_1^*$ deviation and countries for which the deviation is at most $d_2^*$, the latter value unknown at this stage.
In order to make this distinction,
we determine a lower bound on $d_1^* - d_2^*$ by examining the target allocation $x$.
We will then set $\varepsilon$ in the next ILP to be strictly smaller than this lower bound.

The number of vertices for a country covered by any cycle packing is an integer. Hence, the number of possible country deviations is at most $2|N|$ and depends only on $x$.
The fractional part of a country deviation is either $\frc(x_p)$
or $1-\frc(x_p)$. Therefore, to find the minimal positive difference in between the deviations of any two countries $p$ and $q$, we have to compare the values $\frc(x_p)$ and $1-\frc(x_p)$,  with $\frc(x_q)$ or $1-\frc(x_q)$ and take the minimum of those four possible differences.
{We l}et $\varepsilon$ be a small positive constant that is {weakly} smaller than the minimum possible positive difference between any two countries {except in the unique case when $\frc(x_p)=0.5$ for every $p \in N$. This is because in that case, $\frc(x_p)=1-\frc(x_p)=\frc(x_q)=1-\frc(x_q)=0.5$, meaning that all four possible differences are zero. However, should this happen, then the minimum possible positive difference of any two countries' deviation can be trivially found as~$1$, and then we can simply take 
any~$\varepsilon$ with $0 < \varepsilon \leq 1$.}

We will distinguish between countries having minimal deviation of $d_1^*$ and others through additional binary variables. Since later in the ILP series we will need to distinguish between countries fixed at different deviation levels, let us introduce 
binary variables $z^t_p$, 
where $z^t_p=1$ indicates that $p \in N_t$.
\vspace{2cm}
\begin{equation*}\tag{$\text{ILP}_{N_1}$}\label{ILPN1}
\begin{array}{rl}
\displaystyle \min_{z^1,e}  \displaystyle \sum_{p \in N}z^1_p & \text{s.t.} \\
\end{array}
\end{equation*}
\begin{alignat}{5}
\displaystyle \sum_{j: ji\in A} e_{ji} & = && \displaystyle \sum_{j: ij\in A} e_{ij} & \quad \forall i\in V  \tag{1} \\
\displaystyle \sum_{j: ji\in A} e_{ji} & \leq && \quad 1 & \quad \forall i\in V  \tag{2} \\
\displaystyle \sum_{ij\in A} e_{ij} & = && \quad \mu^* & \tag{3} \\
\displaystyle \sum_{j\in V_p} {\sum_{i: ij \in A}}  e_{ij}-x_p & \leq && \quad d_1^* - \varepsilon(1-z^1_p) & \quad \forall p\in N \label{const-6}\\
\displaystyle x_p-\sum_{j\in V_p} {\sum_{i: ij \in A}} e_{ij} & \leq && \quad  d_1^* - \varepsilon(1-z^1_p) & \quad \forall p\in N \label{const-7}
\end{alignat}

\noindent
As discussed, for each country $p$, the left hand side of either {\eqref{const-6} or \eqref{const-7}} is negative (i.e., would be satisfied even with $z^1_p=0$).  For those countries whose deviation cannot be lower than $d_1^*$, however,  the (positive) left hand side of either {\eqref{const-6} or \eqref{const-7}} will require $z^1_p=1$.
Thus, given an optimal solution~$z^{1*}$ of \eqref{ILPN1}, let $n_1^*:=\sum_{p \in N} z_p^{1*}$ be the minimal number of countries receiving the largest country deviations. It is guaranteed that the non-increasingly ordered country deviations at a strongly close maximum cycle packing starts with exactly $n_1^*$ many $d_1^*$ values, followed by some $d_2^* < d_1^*$. \eqref{ILPN1} has $|A|+|N|$ binary variables and $2|V|+2|N|+1$ constraints.
Now, to find $d_2^*$, we solve the following ILP:
\begin{equation*}\tag{$\text{ILP}_{d_2}$}\label{ILPd2}
\begin{array}{rl}
\displaystyle \min_{d_2,e,z^1} d_2 & \text{s.t.}\\
\end{array}
\end{equation*}
\begin{alignat}{5}
\displaystyle \sum_{j: ji\in A} e_{ji} & = && \displaystyle \sum_{j: ij\in A} e_{ij} & \quad \forall i\in V  \tag{1} \\
\displaystyle \sum_{j: ji\in A} e_{ji} & \leq && \quad 1 & \quad \forall i\in V  \tag{2} \\
\displaystyle \sum_{ij\in A} e_{ij} & = && \quad \mu^* & \tag{3} \\
\displaystyle \sum_{j\in V_p} {\sum_{i: ij \in A}} e_{ij}-x_p & \leq && \quad d_1^* & \quad \forall p\in N \\
\displaystyle x_p-\sum_{j\in V_p} {\sum_{i: ij \in A}} e_{ij} & \leq && \quad d_1^* & \quad \forall p\in N \\
\displaystyle \sum_{j\in V_p} {\sum_{i: ij \in A}} e_{ij}-x_p & \leq && \quad d_2 + z^1_pd_1^* & \quad \forall p\in N \\
\displaystyle x_p-\sum_{j\in V_p} {\sum_{i: ij \in A}} e_{ij} & \leq && \quad d_2 + z^1_pd_1^* & \quad \forall p\in N\\
\displaystyle \sum_{p \in N}z^1_p & = && n_1^* &
\end{alignat}

\noindent
\eqref{ILPd2} has $|A|+|N|$ binary variables and one continuous variable ($d_2$) with $2|V|+4|N|+2$ constraints, and guarantees that we find the minimal second-largest country deviation $d_2^*$ while exactly $n_1^*$ countries deviation is kept at $d_1^*$. Finding $n_2^*$ follows a similar approach, where $L$ is a large constant satisfying $L \geq 2(d_1^* - d_2^*)$:

\begin{equation*}\tag{$\text{ILP}_{N_2}$}\label{ILPN2}
\begin{array}{rl}
\displaystyle \min_{z^1,z^2,e} \displaystyle \sum_{p \in N}z^2_p & \text{s.t.} \\
\end{array}
\end{equation*}
\vspace{-\baselineskip}
\begin{alignat}{5}
\displaystyle \sum_{j: ji\in A} e_{ji} & = && \displaystyle \sum_{j: ij\in A} e_{ij} & \quad \forall i\in V  \tag{1} \\
\displaystyle \sum_{j: ji\in A} e_{ji} & \leq && \quad 1 & \quad \forall i\in V  \tag{2} \\
\displaystyle \sum_{ij\in A} e_{ij} & = && \quad \mu^* & \tag{3} \\
\displaystyle \sum_{j\in V_p} {\sum_{i: ij \in A}} e_{ij}-x_p & \leq && \quad d_2^* - \varepsilon(1-z^2_p) + z_p^1L & \quad \forall p\in N \\
\displaystyle x_p-\sum_{j\in V_p} {\sum_{i: ij \in A}} e_{ij} & \leq && \quad d_2^* - \varepsilon(1-z^2_p) + z_p^1L & \quad \forall p\in N \\
\displaystyle \sum_{j\in V_p} {\sum_{i: ij \in A}} e_{ij}-x_p & \leq && \quad d_1^* & \quad \forall p\in N \tag{8}\\
\displaystyle x_p-\sum_{j\in V_p} {\sum_{i: ij \in A}} e_{ij} & \leq && \quad d_1^* & \quad \forall p\in N  \tag{9} \\
\displaystyle \sum_{p \in N}z^1_p & = && \quad n_1^* & \tag{12} \\
z_p^1 + z_p^2 & \leq & 1 & \quad \forall p \in N
\end{alignat}

\medskip
\noindent
Subsequently we follow a similar approach for all $t \geq 3$, until either $|N|=n_1^*+n_2^*+\dots+n_t^*$ or we
terminate because $d_t^* \leq 1/2$.
Until reaching one of these conditions we iteratively solve the following two ILPs, introducing additional $|N|$ binary variables and an additional constraint to both. Let $L$ be a large constant satisfying $L \geq d_t^*$, e.g. $L = d_{t-1}^*$.

\begin{equation*}\tag{$\text{ILP}_{d_t}$}\label{ILPdt}
\begin{array}{rl}
\displaystyle \min_{d_t,e,(z^i)_{i=1}^{t-1}} d_t & \text{s.t.}\\
\end{array}
\end{equation*}
\begin{alignat}{5}
\displaystyle \sum_{j: ji\in A} e_{ji} & = && \displaystyle \sum_{j: ij\in A} e_{ij} & \quad \forall i\in V  \tag{1} \\
\displaystyle \sum_{j: ji\in A} e_{ji} & \leq && \quad 1 & \quad \forall i\in V  \tag{2} \\
\displaystyle \sum_{ij\in A} e_{ij} & = && \quad \mu^* & \tag{3} \\
\displaystyle \sum_{i=1}^{t-1}z^i_p & \leq && \quad 1 & \quad \forall p \in N \\
\displaystyle \sum_{j\in V_p} {\sum_{i: ij \in A}} e_{ij}-x_p & \leq && \quad d_t + \displaystyle \sum_{i=1}^{t-1}z^i_pd_i^* & \quad \forall p\in N \\
\displaystyle x_p-\sum_{j\in V_p} {\sum_{i: ij \in A}} e_{ij} & \leq && \quad d_t + \displaystyle \sum_{i=1}^{t-1}z^i_pd_i^* & \quad \forall p\in N\\
\displaystyle \sum_{j\in V_p} {\sum_{i: ij \in A}} e_{ij}-x_p & \leq && \quad \displaystyle \sum_{i=1}^{t-1}z^i_pd_i^* + {\left(1-\displaystyle \sum_{i=1}^{t-1}z^i_p\right)}L & \quad \forall p\in N \\
\displaystyle x_p-\sum_{j\in V_p} {\sum_{i: ij \in A}} e_{ij} & \leq && \quad \displaystyle \sum_{i=1}^{t-1}z^i_pd_i^* + {\left(1-\displaystyle \sum_{i=1}^{t-1}z^i_p\right)}L & \quad \forall p\in N \\
\displaystyle \sum_{p \in N}z^i_p & = && \quad n_i^* & \quad \forall {i \in \{1,\dots,t-1\}}
\end{alignat}

\medskip
\noindent
In the following 
{formulation $L'$} is a large constant satisfying 
${L'} \geq d_t^* - \varepsilon$.
\begin{equation*}\tag{$\text{ILP}_{N_t}$}\label{ILPNt}
\begin{array}{rl}
\displaystyle \min_{(z)_{i=1}^t,e} \displaystyle \sum_{p \in N}z^t_p &  \text{s.t.}\\
\end{array}
\end{equation*}
\begin{alignat}{5}
\displaystyle \sum_{j: ji\in A} e_{ji} & = && \displaystyle \sum_{j: ij\in A} e_{ij} & \quad \forall i\in V  \tag{1} \\
\displaystyle \sum_{j: ji\in A} e_{ji} & \leq && \quad 1 & \quad \forall i\in V  \tag{2} \\
\displaystyle \sum_{ij\in A} e_{ij} & = && \quad \mu^* & \tag{3} \\
\displaystyle \sum_{i=1}^{t-1}z^i_p & \leq && \quad 1 & \quad \forall p \in N \tag{16} \\
\displaystyle \sum_{j\in V_p} e_{ij}-x_p & \leq && \quad d_t^* - \varepsilon(1-z^t_p) + \displaystyle\sum_{i=1}^{t-1}z_p^id_i^* & \quad \forall p\in N \\
x_p - \displaystyle \sum_{j\in V_p} e_{ij} & \leq && \quad d_t^* - \varepsilon(1-z^t_p) + \displaystyle\sum_{i=1}^{t-1}z_p^id_i^* & \quad \forall p\in N \\
\displaystyle \sum_{j\in V_p} e_{ij}-x_p & \leq && \quad \displaystyle\sum_{i=1}^{t}z_p^id_i^* + {\left(1 - \displaystyle\sum_{i=1}^{t}z_p^i\right)}{L'} & \quad  \forall p \in N \\
\displaystyle x_p - \sum_{j\in V_p} e_{ij} & \leq && \quad \displaystyle\sum_{i=1}^{t}z_p^id_i^* + {\left(1 - \displaystyle\sum_{i=1}^{t}z_p^i\right)}{L'} & \quad \forall p \in N \\
\displaystyle \sum_{p \in N}z^i_p & = && \quad n_i^* & \quad \forall {i \in \{1,\dots,t-1\}} \tag{21}
\end{alignat}

\medskip
\noindent
From the above, we conclude that the following theorem holds, 
{(in which the bounds on the number of ILPs, variables and constraints are not necessarily tight)}.
\begin{theorem}
For a partitioned permutation {game} $(N,v)$ defined on a graph $G=(V,A)$, it is possible to find an optimal solution that is strongly close to a given target allocation~$x$
by solving a series of at most $2|N|$ ILPs, each having $O(|A|+|N|^2)$ binary variables and $O(|V|)$ constraints.
\end{theorem}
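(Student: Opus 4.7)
The plan is to formalize the iterative procedure laid out in the subsection above the statement. Proceed by induction on the rank $t$ in the sorted deviation vector, maintaining at the start of iteration $t$ the previously computed values $d_1^*>d_2^*>\cdots>d_{t-1}^*$, the cardinalities $n_1^*,\ldots,n_{t-1}^*$, and the binary indicators $z_p^i$ flagging, for each $i<t$, the countries already fixed at deviation level $d_i^*$.

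First I would solve the edge-formulation ILP to extract $M^*$ (using $|A|$ binary variables and $2|V|$ constraints). At iteration $t$, I would then solve $\text{ILP}_{d_t}$ to find the smallest achievable $t$-th largest deviation $d_t^*$ among maximum cycle packings that respect the previously fixed blocks, followed by $\text{ILP}_{N_t}$ to determine the minimum number $n_t^*$ of countries forced onto that level. The key correctness point for $\text{ILP}_{N_t}$ is the choice of $\varepsilon$: since every deviation $|x_p-s_p({\mathcal C})|$ has integer part plus a fractional part equal to $\frc(x_p)$ or $1-\frc(x_p)$, the minimum positive gap between any two possible deviations can be computed from $x$ alone, and taking $\varepsilon$ strictly below this gap ensures that the slack $\varepsilon(1-z_p^t)$ forces $z_p^t=1$ exactly when country $p$ cannot attain a strictly smaller deviation at this iteration.

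For correctness overall, I would argue by induction that after iteration $t$ the feasible set of the next ILP is precisely the set of maximum cycle packings whose non-increasingly sorted deviation vector starts with $n_1^*$ copies of $d_1^*$, then $n_2^*$ copies of $d_2^*$, and so on up through the $t$-th block. The constraints $\sum_{i=1}^{t} z_p^i \leq 1$ together with $\sum_{p\in N} z_p^i = n_i^*$ ensure that the blocks are disjoint and of the correct size, while the inequalities involving $\sum_{i}z_p^id_i^*$ and the big-$M$ term $L$ enforce that countries flagged in block $i$ are pinned at deviation $d_i^*$ and countries not yet flagged are constrained only by $d_t$ (or unconstrained when needed). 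Termination follows because $n_t^*\geq 1$, so after at most $|N|$ iterations every country is fixed; alternatively, once $d_t^*\leq 1/2$ the remaining deviations cannot shrink since each $s_p({\mathcal C})$ is an integer, so the solution is already strongly close. This caps the total number of ILPs by $2|N|$.

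For the size bound, at iteration $t\leq |N|$ we accumulate $t\cdot |N|\leq |N|^2$ binary $z$-variables on top of the $|A|$ edge variables (plus one continuous objective variable in the $d_t$-type ILPs), yielding $O(|A|+|N|^2)$ binary variables. The constraints consist of $2|V|$ flow/capacity constraints, one cardinality constraint for $M^*$, $O(|N|)$ upper-bound inequalities on the $t$-th deviation, $O(|N|)$ big-$M$ bounds for pinned countries, $O(|N|)$ disjointness constraints on the $z_p^i$, and $O(t)\leq O(|N|)$ block-size equalities. Since $|N|\leq |V|$ by definition of the partition, the constraint count is $O(|V|)$, completing the stated bounds. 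The main obstacle, and the point I would take most care with, is the inductive invariant of the previous paragraph: one must carefully verify that the big-$M$ constant $L$ is large enough to render the pinning constraints inactive on unfixed countries while small enough (together with $\varepsilon$) to force the correct behaviour on fixed ones, so that no spurious feasible solution corrupts the lexicographic minimization.
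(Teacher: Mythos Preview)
Your proposal is correct and follows essentially the same route as the paper: the theorem in the paper is stated immediately after the sequence of ILPs $\text{ILP}_{d_t}$ and $\text{ILP}_{N_t}$ is described, and its ``proof'' is simply the sentence ``From the above, we conclude that the following theorem holds.'' Your write-up is a faithful and somewhat more explicit formalization of that preceding discussion, including the choice of $\varepsilon$, the termination criteria, and the variable/constraint counts.
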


\noindent
Note that if we just want to find a weakly close optimal solution we can stop after solving the first ILP.

\section{Simulations}~\label{s-simul}
In this section we describe our simulations for $\ell=\infty$ in detail. Our goals are

\begin{enumerate}
\item to examine the benefits of strongly close optimal solutions over weakly close optimal solutions or arbitrarily chosen optimal solutions;
\item to examine the benefits of using credits;
\item to examine the exchange cycle length distribution; and
\item to compare all these results with those for the other extreme case where $\ell=2$~\cite{BBPY24}.
\end{enumerate}

\noindent
 For our fourth aim, we are especially interested in the increase in the total number of patients treated when we compare the $\ell=2$ and $\ell=\infty$ cases.

\subsection{The Set Up}\label{s-setup}

To allow a fair comparison we use the same set up as in~\cite{BBPY24}, and in addition, we also use the data from~\cite{Be21} that was used in~\cite{BBPY24}.
That is, for our simulations, we take the same 100 compatibility graphs $G_1,\ldots, G_{100}$, each with roughly\footnote{The graphs $G_i$ might not have exactly 2000 vertices: depending on the specific number of countries that we consider, we may need {to} do some rounding in order for the countries to have the right sizes.}
2000 vertices from~\cite{Be21,BBPY24}.
As real medical data is unavailable to us for privacy reasons, the data from~\cite{Be21} was obtained using the data generator from~\cite{PT21}. This data generator was used in many papers and is the most realistic synthetic data generator available; see also~\cite{DGGKMPT22}.

We use every $G_i$ ($i\in \{1,\ldots,100\}$) as the basis to perform simulations for $n$ countries, where we let $n=4,\ldots,10$. In order to do this, we first consider the case where all countries have equal size. That is, we partition the vertices $V_i$ of the graph $G_i$ into the same $n$ sets $V_{i,1},\ldots, V_{i,n}$ as in~\cite{BBPY24} which are of equal size ${2000}/{n}$ (subject to rounding), so $V_{i,p}$ is the set of patient-donor pairs of country~$p$. For round~1, we construct a compatibility graph~$G_i^1(n)$ as a subgraph of $G_i$ of size roughly~$500$. We add the remaining patient-donor pairs of $G_i$  as vertices by a uniform distribution between the remaining rounds $2,\ldots,24$.
So, starting with $G_i^1(n)$, we run an IKEP of 24 rounds in total. This gives us 24 compatibility graphs $G_i^1(n),\ldots, G_i^{24}(n)$. As in~\cite{BBPY24}, any patient-donor pair whose patient is not helped within four rounds will automatically be deleted from the pool.

We also consider a situation where countries have different sizes. We note that there are numerous ways one could allow for differently sized countries. However, we follow the approach of~\cite{BBPY24}, as this allows us to compare the results with those for $\ell=2$. That is, we consider the same 100 compatibility graphs {$G_1,\ldots,G_{100}$} as before; the only difference is that we now partition each $V_i$ into the same $n$ sets $V_{i,1},\ldots, V_{i,n}$ as in~\cite{BBPY24}, such that 

\begin{itemize}
\item approximately $n/3$ are small, that is, have size roughly  $1000/n$  (subject to rounding); 
\item approximately $n/3$ are medium, that is, have size roughly $2000/n$; and
\item approximately $n/3$ are large, that is, have size roughly  $3000/n$. 
\end{itemize}

\noindent
A {\it (24-round) simulation instance} consists of 

\begin{itemize}
\item [(i)] the data needed to generate a graph $G_i^1(n)$ and its successors $G_i^2(n),\ldots,G_i^{24}(n)$
\item [(ii)] an indication of whether the countries must be of the same size or can have different sizes (in the way explained above); 
\item [(iii)] a choice of solution concept for computing the initial allocation (see Section~\ref{s-inin}); and
\item [(iv)] a choice of type of optimal solution (see Section~\ref{s-opop}). 
\end{itemize}

\noindent
Our code for obtaining the simulation instances can be found in a GitHub repository~\cite{Ye23}, along with the compatibility graphs data and the seeds for the randomization.

We now discuss our choice for the initial allocations and optimal solutions.

\subsection{The Initial Allocations}\label{s-inin}

For the initial allocations~$y$ we use seven known solution concepts, which we all define below:

\begin{itemize}
\item the Shapley value,
\item the Banzhaf value, 
\item the nucleolus, 
\item the tau value,
\item the benefit value
\item the contribution value, and 
\item the Banzhaf* value (which, we recall, is a conceptually different solution concept).
\end{itemize}

\noindent
We define each of these solution concepts below. Each of them prescribes exactly one allocation for the partitioned permutation games  associated with the {compatibility} graphs. For computing these allocations we need the $v$-values of the partitioned permutation games. 

We can compute a single value $v(S)$  in polynomial time by Theorem~\ref{t-hard} through a transformation into a bipartite graph and 
 solving a maximum weight perfect matching problem (see Section~\ref{s-intro}). For this we use the package LEMON version 1.3.1 of~\cite{DJK11}.

Let $(N,v)$ be a cooperative game. The {\it Shapley value} $\phi(N,v)$~\cite{Sh53} is defined by setting for $p\in N$,
$$\phi_p(N,v) = \sum_{S \subseteq N\backslash \{p\}}
 \frac{|S|!(n-|S|-1)!}{n!}(v(S\cup \{p\})-v(S)).$$
To define the next solution concept, we first introduce {\it unnormalized Banzhaf value} $\psi(N,v)$~\cite{Ba64} defined by setting for $p\in N$,
$$\psi_p(N,v):=\sum_{S \subseteq N\backslash \{p\}} \frac{1}{2^{n-1}}(v(S\cup \{p\})-v(S)).$$
As $\psi_p$ may not be an allocation, the {\it (normalized) Banzhaf value} $\overline{\psi}(N,v)$ of a game $(N,v)$ was introduced and defined by setting for $p\in N$,
$$\overline{\psi}_p(N,v):=\frac{\psi_p(N,v)}{\sum_{q \in N}\psi_q(N,v)} \cdot v(N).$$
 Whenever we mention the Banzhaf value, we will mean $\overline{\psi}(N,v)$.
To compute the Shapley value and Banzhaf value in our simulations for even up to $n=10$ countries, we were still able to implement a brute force approach relying on the above definitions.

We now define the nucleolus.
The {\it excess} for an allocation~$x$ of a game $(N,v)$ and a nonempty coalition $S \subsetneq N$ is defined as $e(S,x) := x(S)-v(S)$.
Ordering the $2^n-2$ excesses in a non-decreasing sequence yields {\it excess vector} $e(x) \in  \R^{2^n-2}$.
The {\it nucleolus}
{$\eta$} 
of a game ($N,v$) is the unique allocation~\cite{Sc69} that lexicographically maximizes $e(x)$ over the set of allocations $x$ with $x_i\geq v(\{i\})$ (assuming this set is nonempty, as is the case for partitioned permutation games). 
To compute the nucleolus, we use the \emph{Lexicographical Descent} method of~\cite{BFN21}, which is the state-of-the-art method in nucleolus computation.

Next, we define the tau value, introduced by Tijs~\cite{Ti81}. First, let $b_p = v(N) - v(N \setminus p)$ be the \emph{utopia payoff} of $p \in N$, leading to a vector of utopia payoffs $b \in \mathbb{R}^N$. Now, for $p \in S \subseteq N$, we let $$R(S,p) = v(S) - \sum_{q \in S \setminus p} b_q$$ be the \emph{remainder} for player~$p$ in $S$, which is what would remain if  all players apart from~$p$ would leave a coalition $S$ with their utopia payoff. We now set for $p\in N$,
$$a_p = \max_{S \ni p} R(S,p),$$ leading to the vector $a \in \mathbb{R}^N$  of \emph{minimal rights}. A game $(N,v)$ is \emph{quasibalanced} if $a \leq b$ and $a(N) \leq v(N) \leq b(N)$. For a quasibalanced game, the tau value $\tau$ of a game $(N,v)$ is defined by setting for $p\in N$, $$\tau_p = \gamma a_p + (1-\gamma)b_p,$$ where $\gamma \in [0,1]$ is determined by $\tau(N) = v(N)$. Note that $\gamma$ is unique, unless $a=b$ (in which case we can simply take $\tau=b$).

%XY: compute the tau value---->compute them
The tau value is only defined for quasibalanced games. 
However, all partitioned permutation games have a nonempty core by Theorem~\ref{t-core}, and thus they are quasibalanced. 
As for the Shapley value and Banzhaf value, we were still able to compute the tau value by using a brute force approach based on its definition even for $n=10$ countries.

The {\it surplus} of a game $(N,v)$ is defined as $\mbox{surp}=v(N) - \sum_{p \in N} v(\{p\})$.  As long as $\sum_{p\in N}\alpha_p=1$, we can
allocate $v(\{p\})+\alpha_p\cdot\mbox{surp}$ to each player $p\in N$. In this way we obtain the {\it benefit value}~{\cite{KNPV20}} by setting for each $p\in N$,
$$\alpha_p=\frac{v(N) - v(N \setminus \{p\})-v(\{p\})}{\sum_{{q}\in N}(v(N) - v(N \setminus \{{q}\})-v(\{{q}\}))},$$
and we obtain the {\it contribution value}~{\cite{BBKP22}} by setting for each $p\in N$,
$$\alpha_p=\frac{v(N) - v(N \setminus \{p\})}{\sum_{{q}\in N}(v(N) - v(N \setminus \{{q}\}))}.$$
Both the benefit value and contribution value are easy to compute, as we only need to compute $2n-1$ $v$-values. We also note that that the benefit value and contribution value
do not exist if the denominator is zero. However, this did not happen in our simulations.

Finally, we consider a recent variant of the Banzhaf value, which was introduced in~\cite{BBPY24}. 
We first define for some round~$h$ in an IKEP the {\it credit-adjusted game} $(N^h,\overline{v}^h)$, in which the credits are directly incorporated, that is, $\overline{v}^h(S) = v^h(S) + \sum_{p \in S}c^h_p$ for every $S\subseteq N$. As explained in~\cite{BBPY24}, this alternative way of processing credits does not make any difference for any of the solution concepts that we consider except for one: the (normalized) Banzhaf value. We denote the adjusted Banzhaf value as the {\it Banzhaf* value}. In our simulations, we compute it by using brute force as we do for the Banzhaf value.

{The solution concepts defined above have several fairness properties. For example, they may prescribe an allocation $x_i$ for a game $(N,v)$ that is {\it individual rational}, that is, $x_p\geq v(\{p\})$ for every $p\in N$. This property holds for the nucleolus, benefit value and contribution value by definition. For the tau value, which is only defined on quasibalanced games, it follows from the observation that for every $p\in N$, $\tau_p \geq a_p \geq v(\{p\})$.
The Shapley value and unnormalized Banzhaf value are not individual rational in general. However, they are readily seen to be individual rational for {\it superadditive games}, i.e., games $(N,v)$ for which $v(S \cup T) \geq v(S) + v(T)$ for all pairs $S,T \subset N$ with $S \cap T = \emptyset$, which include the partitioned permutation games. 
In contrast, the normalized Banzhaf value is not individual rational for partitioned permutation games, as the following example shows.}

\medskip
\noindent
{{\it Example 3.}
Let $G=(\{a,b,c\},\{(a,b),(b,a),(b,c),(c,b)\}$ and $V_1=\{a\}$, $V_2=\{b\}$ and $V_3=\{c\}$.
Now, $\eta=(0,2,0)$ is the only core allocation, and we also note that $\phi_1,\psi_1,\overline{\psi}_1 > 0$. If
we extend $G$ by adding $V_4=\{d,e\}$ with arcs $(d,e)$, $(e,d)$, then
$\psi_4 = v(\{4\}) = 2 > \overline{\psi}_{4}$.
\qed}

\medskip
\noindent
{As discussed in Section~\ref{s-intro}, the nucleolus $\eta$ is not only individual rational but even belongs to the core (so $\eta(S)\geq v(S)$ for {\it every} $S\subseteq N$).}
{For a more detailed description on these solution concepts and their fairness properties, we refer to \cite{Peters2008,Young1985}. Here, we only highlight one other fairness property to illustrate that policy makers must make a choice on what fairness properties they find most important (in addition to any computational considerations).
A solution concept for a game $(N,v)$ is {\it coalitional monotone} if for every $i\in N$ the following holds: if the value of each of the coalitions player $i$ is a member of weakly increases, while the values of all the other coalitions remain the same, then the payoff of player $i$ must also weakly increase. 
The Shapley value and the Banzhaf value have this fairness property, whilst the nucleolus does not~\cite{Young1985}. However, whereas the nucleolus always belongs to the core of the game, Example 3 shows that 
the Shapley and Banzhaf values of permutation games may not be core allocations. In fact, Young~\cite{Young1985} even proved that no coalition monotone solution concept can guarantee to prescribe a core allocation.}

{Fairness properties, such as individual rationality, core membership and coalition monotonicity, are highly relevant for decision and policy making in international kidney exchange as well. As we have seen, such properties may be in conflict with each other. Hence, IKEPs should decide which fairness properties they prefer when selecting a certain solution concept. We also note that (fractional) target allocations are typically not possible to achieve exactly. This is, for instance, illustrated by Example~3, where for the original compatibility graph~$G$,
the nucleolus is $(0,2,0)$ while the only two optimal solutions are $(1,1,0)$ and $(0,1,1)$.
However, the credit system offers a way to compensate for any deviations from target allocations 
in the long term.} 

\subsection{The Optimal Solutions}\label{s-opop}

As optimal solutions we compute the following for each {compatibility} graph:

\begin{itemize}
\item [1.] an arbitrary optimal solution;
\item [2.] a weakly close optimal solution; and
\item [3.] a strongly close optimal solution.
\end{itemize}

\noindent
In this way we can measure the effect of using weakly close optimal solutions over arbitrarily chosen ones as well as the effect of using strongly close optimal solutions over weakly close ones.
We compute the weakly and strongly close optimal solutions by solving a sequence of ILPs, as described in Section~\ref{a-ilp}. 
We use Gurobi Optimizer version 10.0.02(linux64) to create environments, build and solve ILPs (via the Gurobi C++ interface on a cloud system).
As not every ILP might be solved within a reasonable time, we give each ILP a time limit of one hour. {As a minor comment, w}e recall that the values 
{$v(N)$} of the partitioned permutation games used in the ILPs have already been computed for the initial allocations, and we simply reuse {these to save some computation time}.

\subsection{Computational Environment and Scale}

In our large-scale experimental study, we run our simulations both without and with (``$+c$'')  the credit system, and as mentioned for the settings, where an arbitrary optimal solution (``\emph{arbitrary}''), weakly optimal solution (``$d_1$'') or  
strongly close optimal solution (``{\it lexmin}'') is chosen. This leads to the following five scenarios for each of the six selected solution concepts, the Shapley value, Banzhaf value, nucleolus, tau value, benefit value and contribution value: 
\begin{itemize}
\item \emph{arbitary},
\item \emph{d1},
\item \emph{d1+c}, 
\item \emph{lexmin},  
\item \emph{lexmin+c}; 
\end{itemize}
\noindent
plus two scenarios for the Banzhaf* value:
\begin{itemize}
\item \emph{d1+c}, 
\item \emph{lexmin+c},
\end{itemize}
as without credits the Banzhaf* value coincides with the Banzhaf value. 

Note that for the \emph{arbitrary} scenario, the use of credits is irrelevant. Hence, in total, we run the same set of simulations for $5\times 6 + 2 = 32$ different combinations of scenarios and solution concepts. We consider both the situation where all countries have the same size and a situation where they have different sizes. Moreover, we 
{vary the number of countries $n$ from $4$ to $10$ (so seven different numbers)} and {have} 100 compatibility graphs $G_i$. Hence, the scale of our experimental study is very large. Namely, the total number of 24-round simulation instances is 
$$32 \times 2 \times 7 \times 100 = 44800.$$

\noindent
All simulations were run on a dual socket server with AMD EPYC 7702 64-Core Processor with 2.00 GHz base speed and 256GB of RAM, where each simulation was given eight cores and 1G temporary disk space.
As we will compare our results for $\ell=\infty$ with the known results for $\ell=2$ from~\cite{BBPY24}, let us point out that
the simulations in~\cite{BBPY24} were run on a desktop PC with AMD Ryzen 9 5950X 3.4 GHz CPU and 128~GB of RAM, running on Windows 10 OS and C++ implementation in Visual Studio (the code is available in GitHub repository~\cite{Be21}).

\subsection{Evaluation Measures}

Let $y^*$ be the {\it total initial allocation} of a single simulation instance, that is,
$y^*$ is obtained by taking the sum of the 24 initial allocations of each of the 24 rounds of that instance. Let ${\mathcal C}^*$ be the union of the chosen maximum cycle packings in each of the 24 rounds. We use the {\it total relative deviation} defined as
$$\frac{\sum_{p \in N} |y_p^* - s_p({\mathcal C^*})|}{|{\mathcal C^*}|}.$$ For each choice of solution concept, choice of scenario and choice of number of countries,
 we run 100 simulation instances. We take the average of the 100 total relative deviations to obtain the \emph{average total relative deviation}.
Taking the {\it maximum relative deviation} $$\frac{\max_{p \in N} |y_p^* - s_p({\mathcal C^*})|}{|{\mathcal C^*}|}$$ gives us
the {\it average maximum relative deviation} as our second evaluation measure. As we shall see, both evaluation measures lead to the same conclusions.

\subsection{Simulation Results for ${\mathbf{\ell=\infty}}$ and Comparison with ${\mathbf{\ell=2}}$}

{In this section, we present our simulation results for $\ell=\infty$ and compare these results with the simulation results for $\ell=2$~\cite{BBPY24}.}
{We start with  Figure~\ref{fig:absolutedeviations}. In this figure}
we display our main results for {\it equal country sizes}, that is, when all countries are of the same size (see also Section~\ref{s-setup}), {and} we compare {the seven} different solution concepts under {the five} different scenarios for $\ell=\infty$ {with each other}. Figure~\ref{fig:absolutedeviations} also shows the effects of weakly and strongly close solutions and the credit system. As solution concepts have different complexities, we believe such a comparison might be helpful for policy makers in choosing a specific solution concept and scenario.
{Figure~\ref{fig:absolutedeviations}} shows that  
using an arbitrary maximum cycle packing in each round {indeed} makes the kidney exchange scheme significantly more unbalanced, with average total relative deviations over 4\% for all initial allocations~$y$.
{Figure~\ref{fig:absolutedeviations} also shows that t}he effect of both selecting a strongly close solution (to ensure being close to a target allocation) and using a credit function (for fairness, to keep deviations small) is significant.
{Finally, we observe that for all solution concepts and all scenarios, the deviations are slowly increasing when the number of countries is increasing.}
{However, the rate of increase is 
slower for strongly close optimal solutions than for weakly close optimal solutions, while the ``starting'' point of the line is lower when credits are employed.}

\begin{figure}
  \centering
  \hspace*{-1cm}
  \includegraphics[scale=0.42]{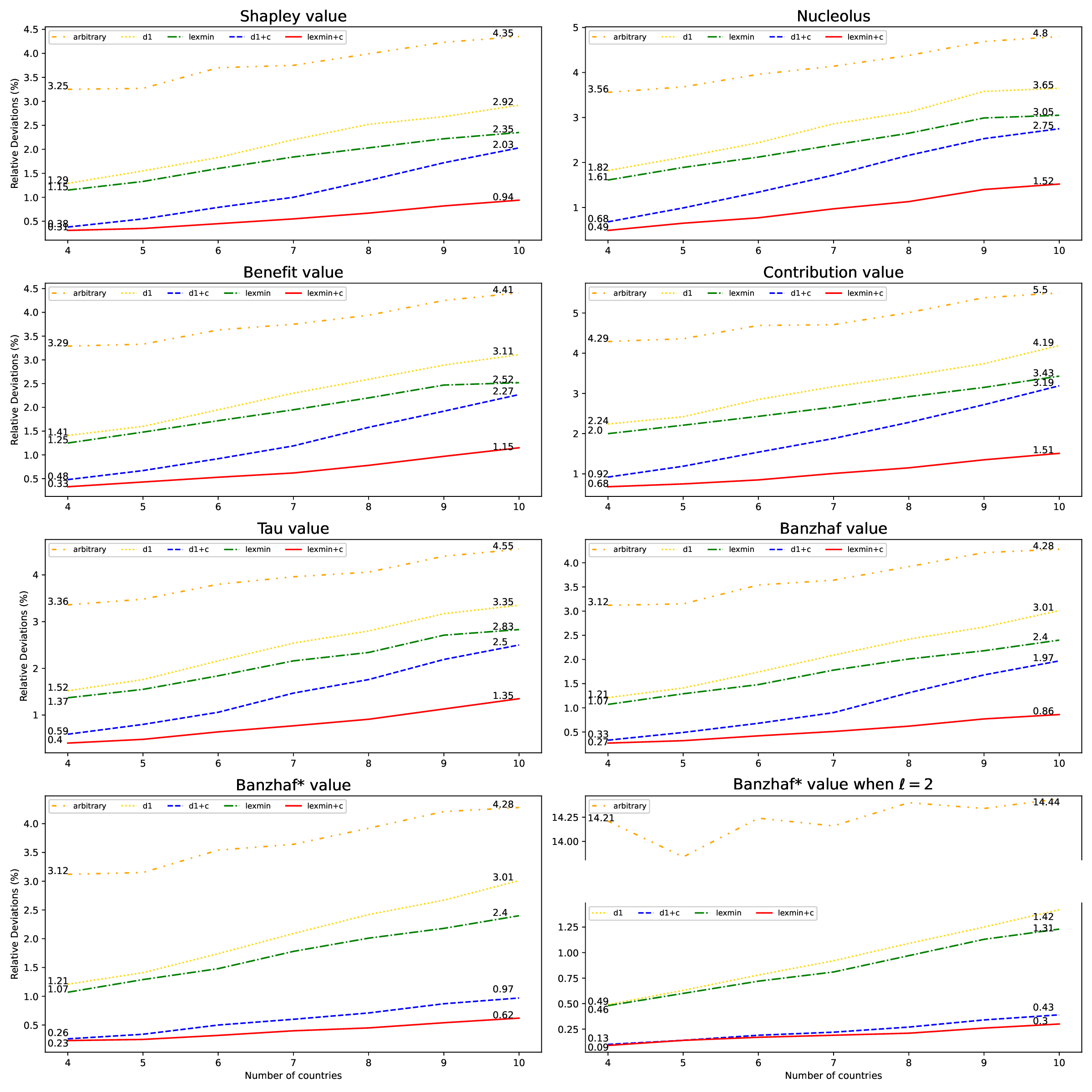}
  \caption{Average total relative deviations for each of the seven solution concepts under the five different scenarios for {\bf equal} country sizes, where the number of countries $n$ is ranging from $4$ to $10$. For comparison, the lower right figure displays a result from~\cite{BBPY24} for $\ell=2$, namely for the Banzhaf* value, which behaved best for $\ell=2$. We recall that the Banzhaf value and Banzhaf* value coincide when credits are not incorporated, and this is also reflected in the two corresponding figures.}\label{fig:absolutedeviations}
\end{figure}

The above 
{conclusions} 
are in line with the results under the setting where $\ell=2$~\cite{BBPY24}. 
However, for $\ell=2$, the effect of using arbitrary optimal solutions is much worse, while deviations are smaller than for $\ell=\infty$ when weakly close or strongly close optimal solutions are chosen.
{In Figure~\ref{fig:absolutedeviations} we illustrated this for the Banzhaf* value.}
From Figure~\ref{fig:absolutedeviations} we see that the Banzhaf* value in the \emph{lexmin+c} scenario provide{s} the smallest deviations from the target allocations (as when $\ell=2$~\cite{BBPY24}). However, all solution concepts are within 1.52\% (for \emph{lexmin+c}) and, as mentioned, which solution concept to select should be decided by the policy makers of the IKEP. 

\begin{figure}
  \centering
    \hspace*{-1cm}
  \includegraphics[scale=0.42]{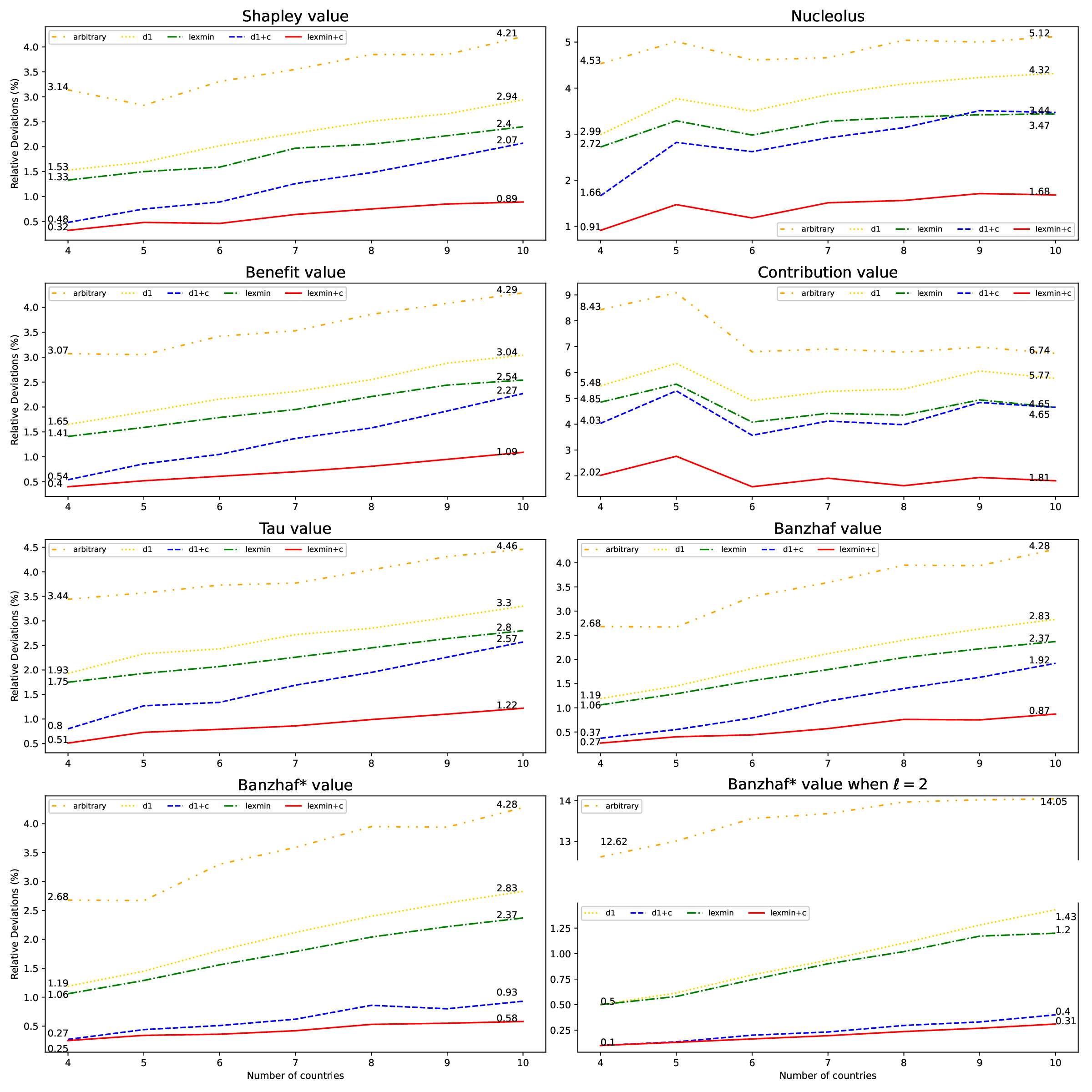}
  \caption{Average total relative deviations for each of the seven solution concepts under the five different scenarios for {\bf varying} country sizes, where the number of countries $n$ is ranging from $4$ to $10$. For comparison, the lower right figure displays a result from~\cite{BBPY24} for $\ell=2$, namely for the Banzhaf* value, which behaved best for $\ell=2$.}\label{fig:absolutedeviationsundervarying}
\end{figure}

Figure~\ref{fig:absolutedeviationsundervarying} shows the same kind of results as Figure~\ref{fig:absolutedeviations} but now for {\it varying country sizes}, that is, when we have  small, medium and large countries, divided exactly as in~\cite{BBPY24} for $\ell=2$ (see also Section~\ref{s-setup}). Note that subject to minor fluctuations we can draw the same conclusions from Figure~\ref{fig:absolutedeviationsundervarying} for varying country sizes as we did from Figure~\ref{fig:absolutedeviations} for equal country sizes.

Figures~\ref{fig:absolutedeviations} and~\ref{fig:absolutedeviationsundervarying} highlight the comparison between different scenarios. For an easier comparison between the effects of choosing different  solution concepts for prescribing the initial allocations, we grouped together all the \emph{lexmin+c} plots in of Figures~\ref{fig:absolutedeviations} and~\ref{fig:absolutedeviationsundervarying} in Figure~\ref{fig:lexmin+c} and all the \emph{d1+c} plots from Figures~\ref{fig:absolutedeviations} and~\ref{fig:absolutedeviationsundervarying} in Figure~\ref{fig:d1+c}. 
Figures~\ref{fig:lexmin+c} and~\ref{fig:d1+c} both show  
an ordering from the Banzhaf* value, which has the best performance, to the contribution value which has the worst in almost all cases for the two most important scenarios \emph{lexmin+c} and \emph{d1+c}.

We note that Figures~\ref{fig:lexmin+c} and~\ref{fig:d1+c} also show that (as expected) the effect of varying the country sizes is stronger if the number $n$ of countries is relatively small, especially when $n\in \{4,5,6\}$.

\medskip
\noindent
{\bf Our Second Evaluation Measure.}  We can draw exactly the same conclusions as above if we use the average maximum relative deviation instead of the average total relative deviation. We refer to Appendix~\ref{a-average} for the analogs of Figures~\ref{fig:absolutedeviationsundervarying}--\ref{fig:d1+c} if we use the average maximum relative deviation as our evaluation measure.

\begin{figure}
  \centering
  \includegraphics[width=1.1\linewidth]{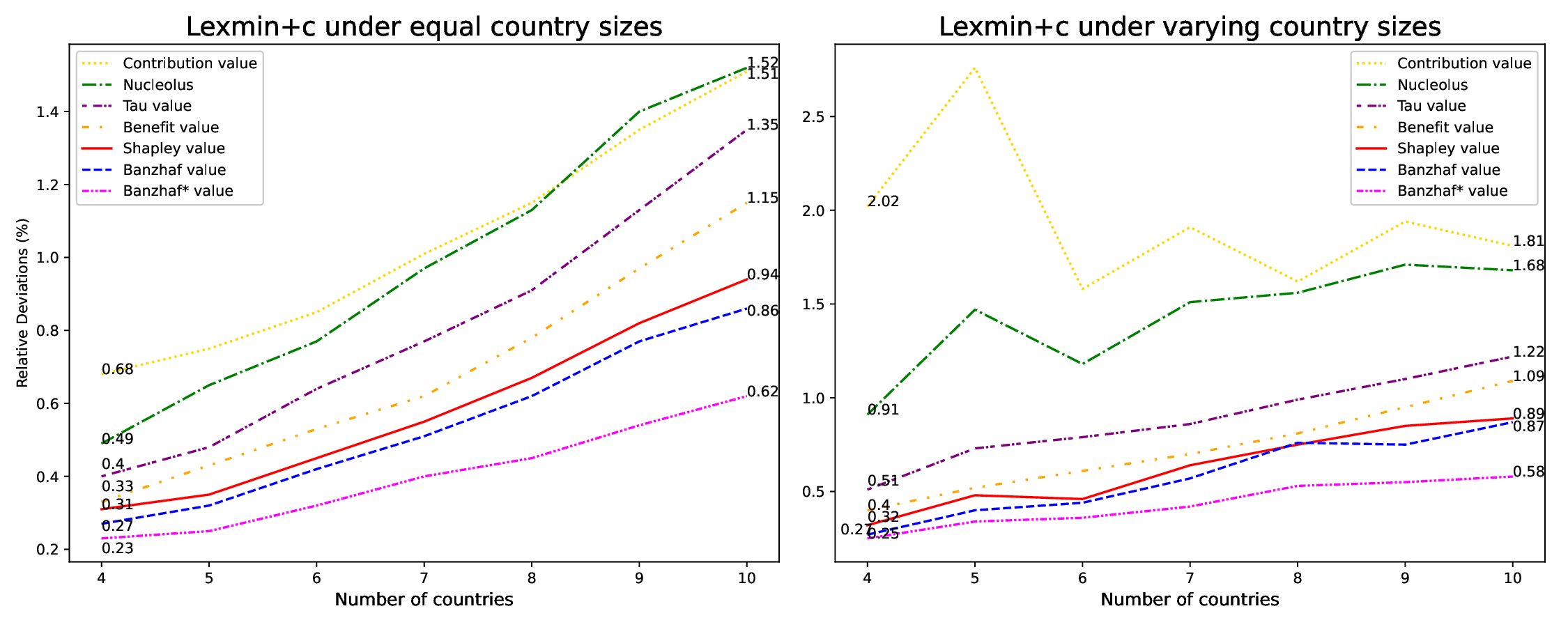}
  \caption{Average total relative deviations for all solution concepts in the \emph{lexmin+c} scenario, where the number of countries $n$ ranges from $4$ to $10$.}\label{fig:lexmin+c}
\end{figure}

\begin{figure}
  \centering
  \includegraphics[width=1.1\linewidth]{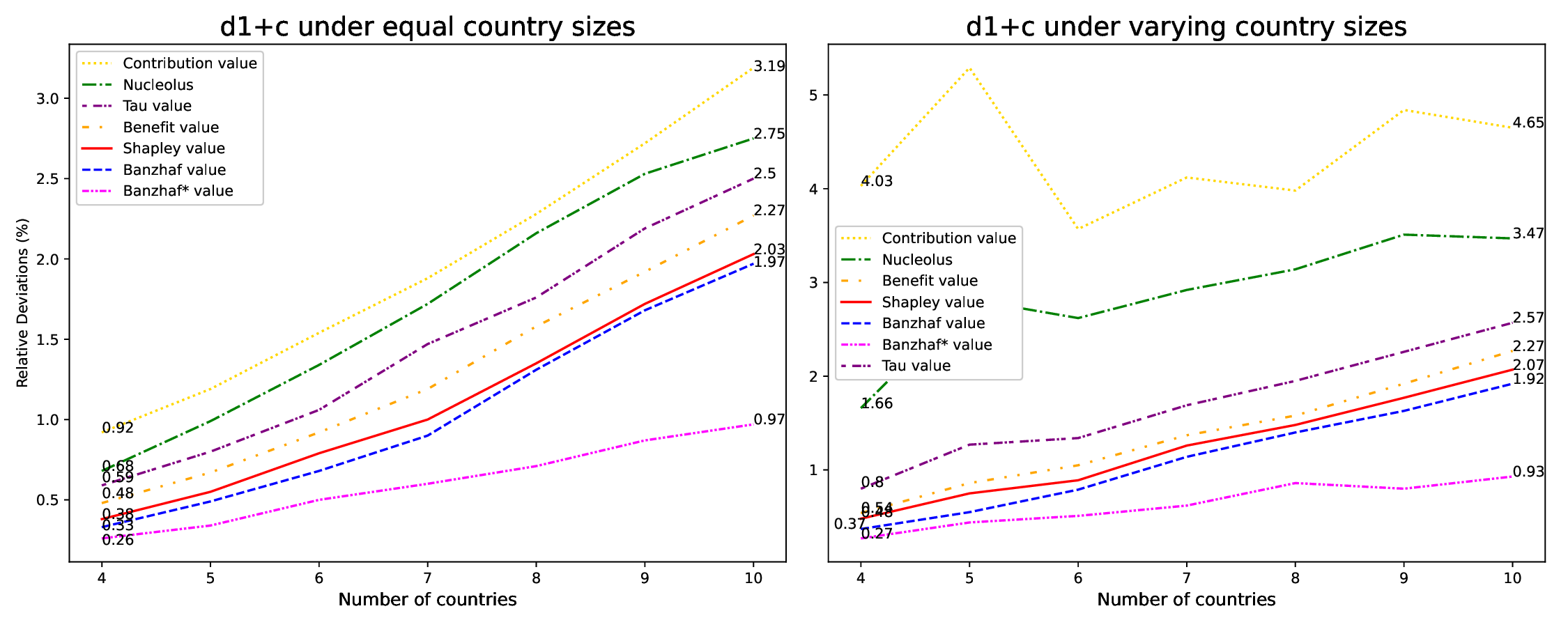}
  \caption{Average total relative deviations for all solution concepts in the \emph{d1+c} scenario, where the number of countries $n$ ranges from $4$ to $10$.}\label{fig:d1+c}
\end{figure}

\medskip
\noindent
{\bf Incomplete Instances.} In the simulations for $\ell=2$~\cite{BBPY24}, ILPs were not used, and all simulation instances were solved to optimality. However, our new simulations for $\ell=\infty$ heavily relied on ILPs, and we recall that we imposed a time limit of one hour for our ILP solver (Gurobi) to solve an ILP. Given the nature of ILPs, it is not surprising that there were several ILPs that could not be solved within the $1$-hour time limit. 
We call the corresponding simulation instances {\it incomplete}. So, for such instances, there was one ILP in some round, which our ILP solver could not handle within the $1$-hour time limit. We call
 call other simulation instances {\it complete}. 
 
All ILPs that were not finished within one hour were of the \eqref{ILPdt} type, 
almost always for $t > 1$. 
In fact, for equal country sizes, only one single incomplete simulation instance occurred at \emph{d1 / d1+c} (for nine countries, using the tau value). Tables~\ref{table:not_solved} and~\ref{table:not_solvedvaryingsizes}  summarize the distribution of incomplete simulation instances aggregated over different choices of solution concepts or scenarios. 
{See} Appendix~\ref{a-in} for  a complete breakdown of the averages in both these tables. 
{Here, we~only~observe 
from Tables~\ref{table:not_solved} and~\ref{table:not_solvedvaryingsizes}
that the average number of incomplete instances is relatively low, both for equal country sizes (at most 2\% in total) and varying country sizes (at most 0.75\% in total).}

Given that
{the average number of incomplete instances is relatively low, and moreover that}
the solutions found for incomplete simulation instances are all maximum cycle packings (which might not be weakly or strongly close), policy makers could therefore still decide to use them. For this reason, we decided to include the incomplete simulation instances in {Figures~\ref{fig:absolutedeviations}--\ref{fig:d1+c}}. We believe this was justified after doing some additional research. That is, we also constructed the same figures as Figures~\ref{fig:absolutedeviations} and~\ref{fig:absolutedeviationsundervarying} but {\it without} the incomplete simulation instances; see Appendix~\ref{a-in}. It turned out that the largest percentage points difference in Figure~\ref{fig:absolutedeviations} of the average total deviations is only 0.059\% (with an average of 0.0024\%). Hence, the quality of the ``current-best'' optimal solutions for the incomplete simulation instances 
are almost indistinguishably close to those for the complete simulation instances.

\setlength{\tabcolsep}{0.3cm}
\begin{table}[h]
\centering
\begin{tabular}{@{}lllllllll@{}}
\toprule
\ \#unsolved / n   & 4    & 5    & 6    & 7    & 8    & 9    & 10 & Total \\ \midrule
Shapley value      & 0.00\% & 0.00\% & 0.00\% & 1.50\% & 1.75\% & 6.75\% & 3.50\% & 1.93\% \\ 
Nucleolus          & 0.00\% & 0.00\% & 0.00\% & 0.50\% & 0.00\% & 0.00\% & 0.25\% & 0.11\% \\
Benefit value      & 0.25\% & 0.25\% & 1.25\% & 0.25\% & 0.75\% & 0.75\% & 0.00\% & 0.50\% \\
Contribution value & 0.75\% & 0.75\% & 1.75\% & 1.00\% & 1.50\% & 0.50\% & 2.25\% & 1.21\% \\
Banzhaf value      & 0.00\% & 0.75\% & 1.50\% & 1.75\% & 0.50\% & 4.25\% & 4.50\% & 1.89\% \\
Tau value          & 0.00\% & 0.25\% & 0.00\% & 0.00\% & 0.00\% & 1.00\% & 0.00\% & 0.18\% \\
Banzhaf* value      & 0.00\% & 0.00\% & 0.00\% & 0.00\% & 0.00\% & 0.00\% & 0.50\% & 0.07\% \\ \midrule
total: d1          & 0.00\% & 0.00\% & 0.00\% & 0.00\% & 0.00\% & 0.17\% & 0.00\% & 0.02\% \\
total: lexmin      & 0.17\% & 0.33\% & 0.50\% & 1.33\% & 0.83\% & 4.17\% & 3.33\% & 1.52\% \\
total: d1+c        & 0.00\% & 0.00\% & 0.00\% & 0.00\% & 0.00\% & 0.14\% & 0.00\% & 0.02\% \\
total: lexmin+c    & 0.43\% & 0.86\% & 2.14\% & 1.71\% & 1.86\% & 3.71\% & 3.29\% & 2.00\% \\
\bottomrule
\end{tabular}
\medskip
\caption{Average number of incomplete simulation instances for {\bf equal} country sizes.}\label{table:not_solved} 
\end{table}
\setlength{\tabcolsep}{0.3cm}
\begin{table}[h]
\vspace*{0cm}
\centering
\begin{tabular}{@{}lllllllll@{}}
\toprule
\#unfinished / n & 4    & 5    & 6    & 7    & 8    & 9    & 10   & Total
\\ \midrule
Shapley value      & 0.00\% & 0.00\% & 0.00\% & 0.50\% & 0.00\% & 0.50\% & 2.00\% & 0.43\% \\
Nucleolus          & 0.00\% & 0.00\% & 0.00\% & 0.00\% & 0.00\% & 0.75\% & 0.00\% & 0.11\% \\
Benefit value      & 0.25\% & 0.00\% & 0.75\% & 0.00\% & 0.00\% & 0.25\% & 0.75\% & 0.29\% \\
Contribution value & 0.00\% & 0.00\% & 0.50\% & 0.00\% & 0.50\% & 0.50\% & 0.25\% & 0.25\% \\
Banzhaf value      & 0.00\% & 1.00\% & 0.25\% & 0.25\% & 1.25\% & 0.50\% & 2.00\% & 0.75\% \\
Tau value          & 0.00\% & 0.00\% & 0.00\% & 0.00\% & 0.25\% & 0.00\% & 0.00\% & 0.04\% \\
Banzhaf* value     & 0.00\% & 1.00\% & 0.00\% & 0.00\% & 0.50\% & 0.50\% & 2.00\% & 0.57\% \\
 \midrule
total: d1          & 0.00\% & 0.17\% & 0.00\% & 0.00\% & 0.00\% & 0.17\% & 0.00\% & 0.05\% \\
total: lexmin      & 0.00\% & 0.17\% & 0.33\% & 0.33\% & 0.50\% & 0.50\% & 1.67\% & 0.50\% \\
total: d1+c        & 0.00\% & 0.29\% & 0.14\% & 0.00\% & 0.14\% & 0.00\% & 0.00\% & 0.08\% \\
total: lexmin+c    & 0.14\% & 0.29\% & 0.43\% & 0.14\% & 0.71\% & 1.00\% & 2.00\% & 0.67\% \\
\bottomrule
\end{tabular}
\medskip
\caption{Average number of incomplete simulation instances for {\bf varying} country sizes.}\label{table:not_solvedvaryingsizes} 
\end{table}
\noindent
{\bf Number of Kidney Transplants {and} Cycle Length.}
When $\ell=\infty$ instead of $\ell=2$, we expect more kidney transplants, {because} exchange cycles may now have any size. In {Tables~\ref{t-num=10}-\ref{t-varying-num=10} and Figure~\ref{fig:cycledistribution}} we quantified this.

{First,} from {Table~\ref{t-num=10} we see that for $n=10$ and equal country sizes 
it is possible to achieve 46\% more kidney transplants when $\ell=\infty$ instead of $\ell=2$, and that this is irrespective of the chosen scenario or chosen solution concept}, as the total number of kidney transplants are nearly identical, 
{both for $\ell=2$ and $\ell=\infty$.
Table~\ref{t-varying-num=10} shows the results for varying country sizes, for which we can draw the same conclusions. We refer to 
Tables~\ref{t-num-equal} and~\ref{t-num-varying} in Appendix~\ref{a-average2} for 
the corresponding results for $n=4,\ldots,9$, which show the same behaviour.} 

{Second, Figure~\ref{fig:cycledistribution}
shows that setting $\ell=\infty$ may indeed lead to long cycles, with even more than 400 vertices.
Figure~\ref{fig:cycledistribution} shows in particular that these long cycles all occur in the first round.
The reason is that 
we start the first round with roughly 500 vertices and introduce the remaining 1500 patient-donor pairs in the later rounds by a uniform distribution. Hence, the compatibility graphs in later rounds are all smaller.} 

\setlength{\tabcolsep}{0.25cm}
\begin{table}[h]
\centering
\begin{tabular}{cllllll}
\toprule
\multicolumn{2}{c}{Solution   concepts/scenarios}   & $\emph{arbitrary}$ & $\emph{d1}$ & $\emph{d1+c}$ & $\emph{lemxin}$ & $\emph{lexmin+c}$ \\
\midrule
\multirow{7}{*}{$\ell=\infty$} & benefit value      & 1781.58            & 1781.65     & 1781.46       & 1782.58         & 1780.31           \\

                               & contribution value & 1781.58            & 1781.52     & 1780.50       & 1782.52         & 1780.05           \\
                               & Nucleolus          & 1781.58            & 1781.80     & 1780.45       & 1781.98         & 1780.12           \\
                               & Shapley value      & 1781.58            & 1781.54     & 1780.92       & 1782.02         & 1780.57           \\
                               & Banzhaf value      & 1781.58            & 1781.83     & 1780.38       & 1782.46         & 1780.66           \\
                               & Banzhaf* value     & 1781.58            & 1781.83     & 1781.08       & 1782.46         & 1781.24           \\
                               & Tau value          & 1781.58            & 1782.22     & 1780.90       & 1782.12         & 1780.45           \\
                               \midrule
\multirow{7}{*}{$\ell=2$}      & benefit value      & 1221.38            & 1222.4      & 1223.44       & 1222.54         & 1223.6            \\
                               & contribution value & 1221.38            & 1222.58     & 1222.56       & 1223.24         & 1222.66           \\
                               & Nucleolus          & 1221.38            & 1222.28     & 1221.18       & 1221.84         & 1221.66           \\
                               & Shapley value      & 1221.38            & 1221.6      & 1221.2        & 1223.04         & 1222.32           \\
                               & Banzhaf value      & 1221.38            & 1222.62     & 1221.26       & 1220.9          & 1221.98           \\
                               & Banzhaf* value     & 1221.38            & 1222.62     & 1220.86       & 1220.9          & 1221.82           \\
                               & Tau value          & 1221.38            & 1223.96     & 1223.24       & 1224.26         & 1222.72     \\
                               \bottomrule
\end{tabular}
\medskip
\caption{{Average number of kidney transplants for {\bf equal} country sizes for $n=10$ when $\ell=2$~\cite{BBPY24} and $\ell=\infty$.}}\label{t-num=10}
\end{table}
\setlength{\tabcolsep}{0.25cm}
\begin{table}[h]
\centering
\begin{tabular}{cllllll}
\toprule
\multicolumn{2}{c}{Solution   concepts/scenarios}   & $\emph{arbitrary}$ & $\emph{d1}$ & $\emph{d1+c}$ & $\emph{lemxin}$ & $\emph{lexmin+c}$ \\
\midrule
\multirow{7}{*}{$\ell=\infty$} & benefit value      & 1763.57            & 1763.39     & 1763.37       & 1763.73         & 1762.45           \\
                               & contribution value & 1763.57            & 1763.45     & 1762.90       & 1763.80         & 1761.84           \\
                               & Nucleolus          & 1763.66            & 1764.02     & 1762.79       & 1763.74         & 1762.19           \\
                               & Shapley value      & 1764.04            & 1763.09     & 1762.99       & 1763.71         & 1762.60           \\
                               & Banzhaf value      & 1763.24            & 1763.34     & 1762.85       & 1763.47         & 1762.85           \\
                               & Banzhaf* value     & 1763.24            & 1763.34     & 1763.09       & 1763.47         & 1762.97           \\
                               & Tau value          & 1763.27            & 1763.54     & 1762.99       & 1763.62         & 1761.84           \\
                               \midrule
\multirow{7}{*}{$\ell=2$}      & benefit value      & 1203.8             & 1206.14     & 1205.92       & 1205.06         & 1205.88           \\
                               & contribution value & 1204.06            & 1205.7      & 1205.46       & 1205.36         & 1206.16           \\
                               & Nucleolus          & 1204.06            & 1205.34     & 1205.38       & 1205.52         & 1207              \\
                               & Shapley value      & 1204.06            & 1205.7      & 1206.64       & 1206.34         & 1205.48           \\
                               & Banzhaf value      & 1204.06            & 1206.14     & 1205.56       & 1205.88         & 1206.34           \\
                               & Banzhaf* value     & 1204.06            & 1206.14     & 1206.34       & 1205.88         & 1205.92           \\
                               & Tau value          & 1204.06            & 1206.86     & 1205.92       & 1207.32         & 1206.3\\
                               \bottomrule
\end{tabular}
\medskip
\caption{{Average number of kidney transplants for {\bf varying} country sizes for $n=10$ when $\ell=2$~\cite{BBPY24} and $\ell=\infty$.}}\label{t-varying-num=10}
\vspace*{-3cm}
\end{table}
\clearpage
\begin{figure}
\vspace*{0cm}
  \centering
  \hspace*{-1cm}
  \includegraphics[width=1.2\linewidth]{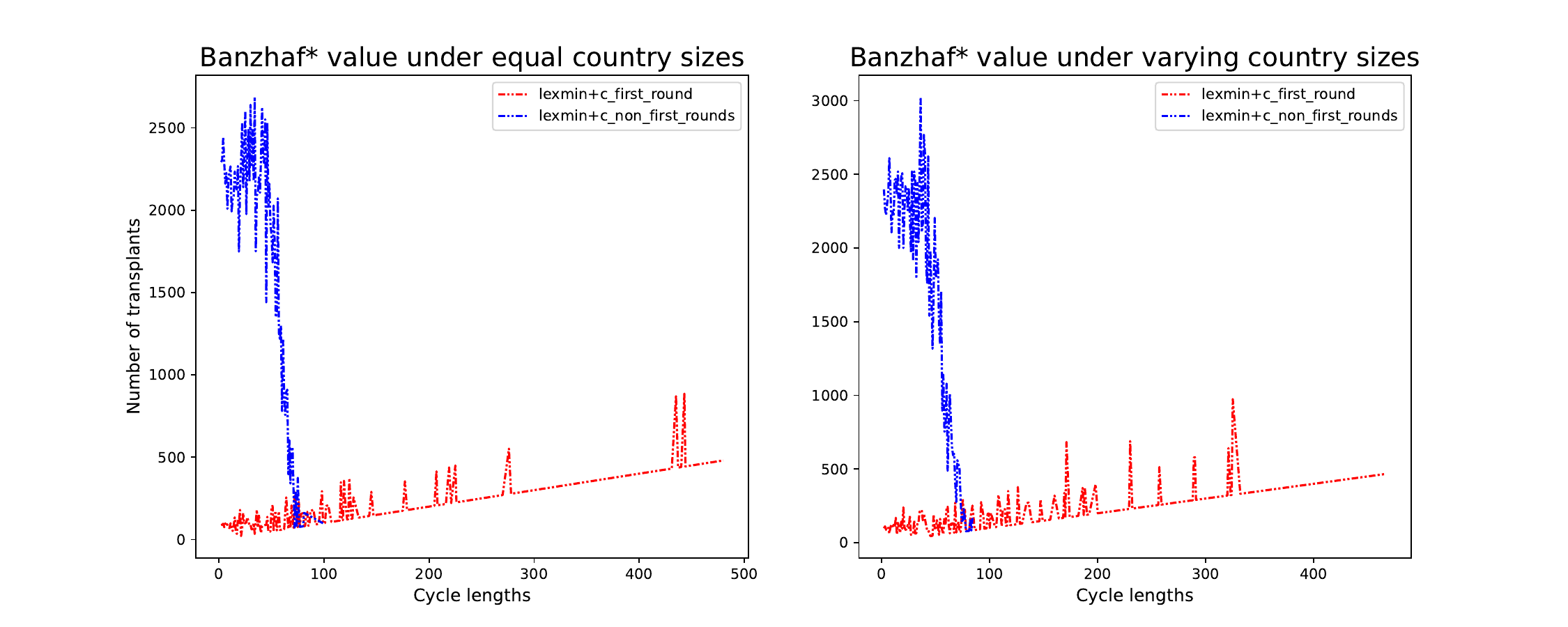}
  \caption{Cycle distribution for the Banzhaf* value in the \emph{lexmin+c} scenario, where the x-axis represents the cycle length, and the y-axis shows the number of kidney transplants involved in a cycle of that length.}\label{fig:cycledistribution}
\end{figure}

\noindent
{\bf Influence of Using Credits.}
We now discuss the power of our credit system. Does it effectively reduce deviations from the target allocation? Can it keep deviations on a consistently low level in the long-term?
Theoretically it is possible that credits keep accumulating, which would make them ineffective (see~\cite{BBPY24} for a theoretical example of this behaviour when $\ell=2$).

In Figures~\ref{fig:totaldeviations} and~\ref{fig:totaldeviationsundervarying} we show to what extent credits accumulate for equal and varying country sizes, respectively.
If credits are not incorporated, we can still compute and track them as we did in these two figures. We note that over a period of 24 rounds, credits accumulate more and more (but at different rates) under all three scenarios \emph{arbitary}, \emph{d1} and \emph{lexmin}, even though especially under \emph{d1} and \emph{lexmin} we do find solutions that are relatively close to the target allocations (as we saw from Figures~\ref{fig:absolutedeviations} and~\ref{fig:absolutedeviationsundervarying}).
However, only when we {\it also} incorporate credits we not only find solutions that are close to the target solutions but that also ensure stability.
We find that the Banzhaf* value under  \emph{lexmin+c} is also the best in maintaining consistently low levels of credits, much like in the case of $\ell=2$~\cite{BBPY24}. Again, the main difference between $\ell=\infty$ and $\ell=2$ is that the \emph{arbitrary} scenario performs much worse for $\ell=2$. 

\begin{figure}
  \centering
  \hspace*{-1cm}
  \includegraphics[scale=0.42]{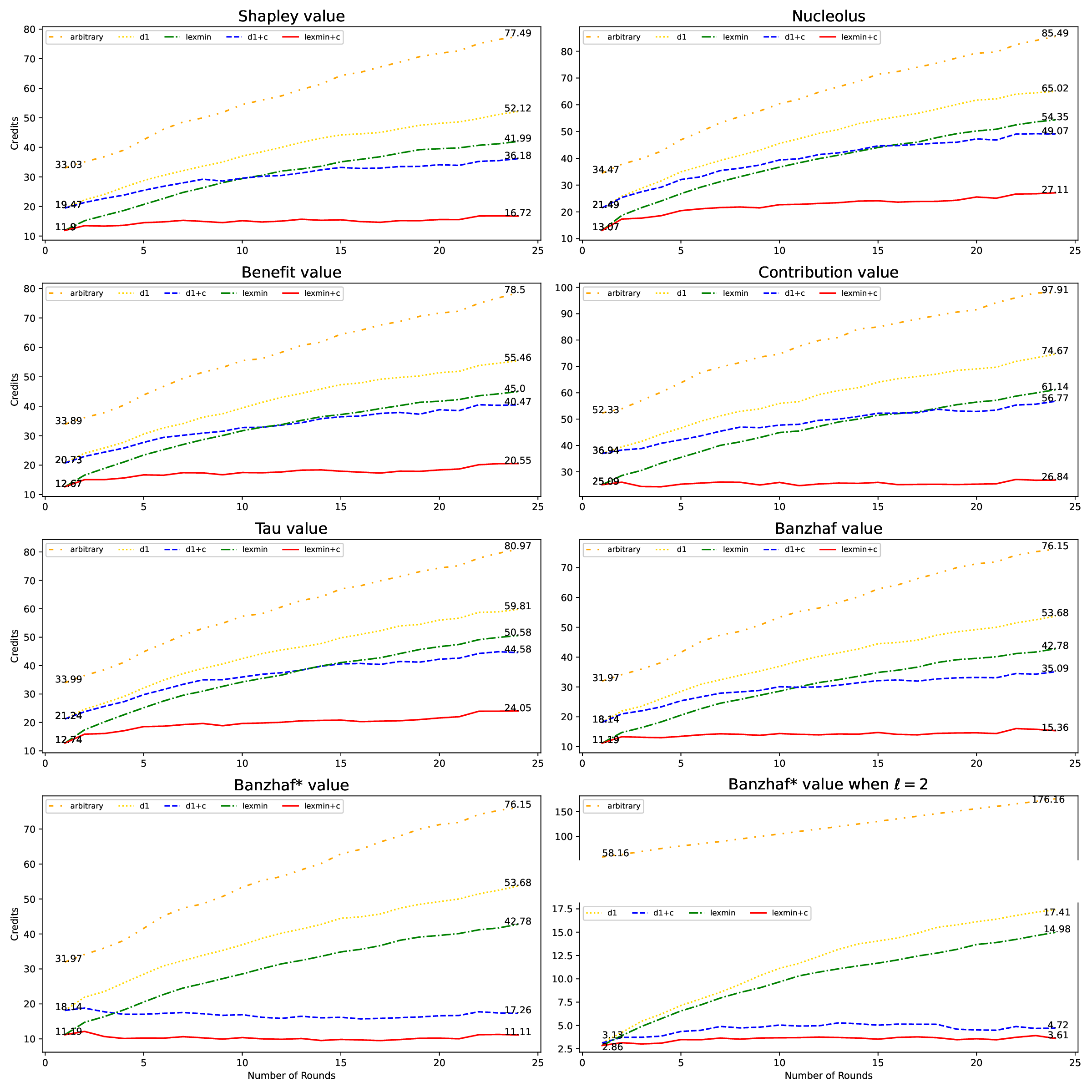}
  \caption{Average credits for each of the seven solution concepts under the five different scenarios for {\bf equal} country sizes, where the number of countries is $n=10$ and the period ranges from $1$ to $24$. For comparison, the lower right figure displays a result from~\cite{BBPY24} for $\ell=2$, namely for the Banzhaf* value, which also behaved best for $\ell=2$.}\label{fig:totaldeviations}
\end{figure}

\begin{figure}
  \centering
  \hspace*{-1cm}
  \includegraphics[scale=0.42]{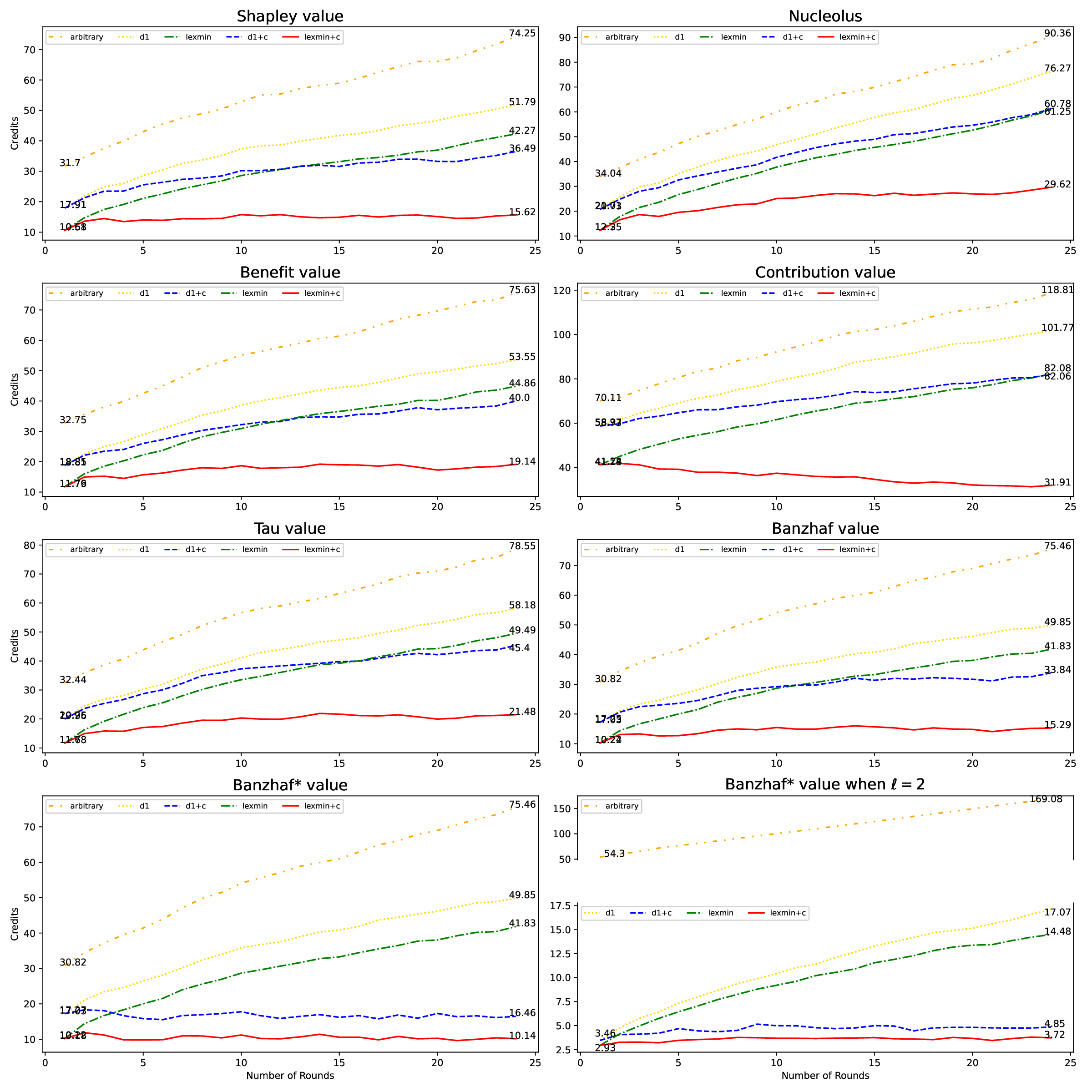}
  \caption{Average credits for each of the seven solution concepts under the five different scenarios for {\bf varying} country sizes, where the number of countries is $n=10$ and the period ranges from $1$ to $24$. For comparison, the lower right figure displays a result from~\cite{BBPY24} for $\ell=2$, namely for the Banzhaf* value, which also behaved best for $\ell=2$.}\label{fig:totaldeviationsundervarying}
\end{figure}

\medskip
\noindent
{\bf Deviations for Worst-off Countries.}
Finally, we consider the ratio of the average maximum relative deviation and the average total relative deviation as a measure for 
the proportion of the total country deviations being due to the worst-off country. 
We call this ratio the {\it relative ratio}. For stability reasons, low relative ratios are preferable. 
From Table~\ref{table:concentration} we conclude that for the most interesting scenario \emph{lexmin+c} and equal country sizes, the relative ratio decreases as the number of countries increases (as expected). In this figure we also included the relative ratios for $\ell=2$ (generated from the data used in~\cite{BBPY24}). We note that the relative ratio is slightly better for $\ell=2$. This extends to varying countries as well; see Appendix~\ref{a-conc}. In this appendix we also included the figures for the other scenarios as well. From these figures, we note that for \emph{d1+c} the relative ratio for $\ell=\infty$ is lower than for $\ell=2$. 
The \emph{arbitrary} scenario showed consistently lower deviations for $\ell=\infty$ than for $\ell=2$. However,  this is not the case for the relative ratio. For equal country sizes, $\ell=\infty$ leads to consistently higher relative ratios than $\ell=2$ does, whereas for varying country set sizes the opposite holds. For the remaining scenarios (\emph{lexmin} and \emph{d1}), there is no clear distinction.

\setlength{\tabcolsep}{0.25cm}
\begin{table}[h]
\vspace*{-0cm}
\centering
\begin{tabular}{@{}lllllllll@{}}
\toprule
 \multicolumn{2}{c}{n}   & 4    & 5    & 6    & 7    & 8    & 9    & 10   \\ \midrule
\multirow{3}{*}{\minitab[c]{Shapley\\value}}  & $\ell=\infty$ & 46.82\% & 39.70\% & 38.99\% & 34.55\% & 31.40\% & 31.41\% & 28.17\% \\ 
 & $\ell=2$ & 41.70\% & 37.92\% & 35.29\% & 30.85\% & 28.00\% & 26.71\% & 26.52\% \\ 
 & difference & 5.12\% & 1.78\% & 3.70\% & 3.70\% & 3.40\% & 4.69\% & 1.66\% \\ \midrule
\multirow{3}{*}{\minitab[c]{Banzhaf\\value}}  & $\ell=\infty$ & 45.91\% & 40.92\% & 39.50\% & 36.87\% & 30.95\% & 30.57\% & 27.61\% \\ 
 & $\ell=2$ & 43.90\% & 37.07\% & 34.88\% & 30.05\% & 29.27\% & 26.89\% & 26.26\% \\ 
 & difference & 2.01\% & 3.85\% & 4.62\% & 6.82\% & 1.68\% & 3.68\% & 1.35\% \\ \midrule
 \multirow{3}{*}{\minitab[c]{nucleolus}} & $\ell=\infty$ & 47.35\% & 39.50\% & 38.25\% & 35.21\% & 32.95\% & 29.81\% & 27.02\% \\ 
 & $\ell=2$ & 45.85\% & 40.05\% & 36.57\% & 33.68\% & 30.84\% & 29.44\% & 26.75\% \\ 
 & difference & 1.51\% & -0.54\% & 1.67\% & 1.52\% & 2.11\% & 0.36\% & 0.27\% \\ \midrule
  \multirow{3}{*}{\minitab[c]{tau\\value}}  & $\ell=\infty$ & 47.11\% & 41.16\% & 39.47\% & 34.02\% & 33.91\% & 31.57\% & 27.07\% \\ 
 & $\ell=2$ & 42.00\% & 38.53\% & 37.70\% & 32.41\% & 30.31\% & 28.68\% & 27.69\% \\ 
 & difference & 5.11\% & 2.63\% & 1.78\% & 1.62\% & 3.60\% & 2.89\% & -0.62\% \\ \midrule
   \multirow{3}{*}{\minitab[c]{benefit\\value}}  & $\ell=\infty$ & 46.24\% & 40.37\% & 39.72\% & 34.89\% & 34.08\% & 30.44\% & 27.10\% \\ 
 & $\ell=2$ & 42.91\% & 38.13\% & 36.62\% & 30.11\% & 31.76\% & 28.57\% & 26.02\% \\ 
 & difference & 3.33\% & 2.25\% & 3.11\% & 4.78\% & 2.32\% & 1.87\% & 1.08\% \\ \midrule
    \multirow{3}{*}{\minitab[c]{contribution\\value}}  & $\ell=\infty$ & 44.41\% & 37.06\% & 38.30\% & 34.77\% & 31.76\% & 28.55\% & 26.41\% \\ 
 & $\ell=2$ & 43.19\% & 40.15\% & 37.50\% & 31.94\% & 29.80\% & 28.19\% & 25.90\% \\ 
 & difference & 1.23\% & -3.09\% & 0.80\% & 2.83\% & 1.96\% & 0.36\% & 0.51\% \\ \midrule
 \multirow{3}{*}{\minitab[c]{Banzhaf*\\value}}  & $\ell=\infty$ & 45.12\% & 40.51\% & 38.89\% & 35.16\% & 31.22\% & 29.59\% & 27.22\% \\ 
 & $\ell=2$ & 43.33\% & 37.57\% & 34.47\% & 30.42\% & 28.48\% & 27.23\% & 25.33\% \\ 
 & difference & 1.78\% & 2.94\% & 4.42\% & 4.74\% & 2.74\% & 2.36\% & 1.89\% \\
\bottomrule
\end{tabular}
\medskip
\caption{Relative ratios for equal country sizes under \emph{lexmin+c}, for $\ell=\infty$, $\ell = 2$ and their difference.}\label{table:concentration} 
\end{table}

\subsection{Computation Times}

\medskip
\noindent
Table~\ref{table:computational time} summarizes the average computational time of solving a single (24-round) simulation instance. As expected, computing initial allocations using the two easy-to-compute solution concepts (benefit and contribution values) is inexpensive, especially compared to the other, more sophisticated but hard-to-compute solution concepts.

In Table~\ref{table:computational time} we also see that the computation time for the \emph{arbitrary} scenario, which does not require initial allocations,  is less time-consuming than for the other four scenarios. This is in line with Theorem~\ref{t-hard}, which gave us a polynomial time algorithm for computing an arbitrary optimal solution, whereas we needed to rely on using ILPs for computing optimal solutions that are weakly close (\emph{d1} and \emph{d1+c}) or even strongly close  (\emph{lexmin} and \emph{lexmin+c}). From Table~\ref{table:computational time}, it can be clearly seen that 
the computation of the ILP series requires the majority of the work, especially for the \emph{lexmin} and \emph{lexmin+c} scenarios: these involve longer series of ILPs than \emph{d1} and \emph{d1+c}, for which we need to solve only \eqref{ILPd1}.

On a side note, we observe from Table~\ref{table:computational time}  that when $n$ increases, the computation time increases roughly within the expected rate, apart from a few notable exceptions: Gurobi's ILP solver handles, for some reason, \eqref{ILPd1} far more easily for $n=6$ than for $n=5$ (or even $n=4$). We have no explanation for this; it may well be due to the nature of Gurobi, in which we have no insights.

\begin{table}[h]
\centering
\begin{tabular}{@{}llllllll@{}}
\toprule
CPU time   / n     & 4       & 5       & 6       & 7       & 8       & 9       & 10      \\ \midrule
data   preparation & 0.01   & 0.01    & 0.01    & 0.01    & 0.01    & 0.01    & 0.01    \\
graph building     & 0.04   & 0.04    & 0.04    & 0.04    & 0.04    & 0.04    & 0.04    \\ \midrule
Shapley value      & 12.32  & 25.28   & 48.32   & 94.70   & 188.7  & 393  & 745.2  \\
Nucleolus          & 10.75  & 22.33   & 44.41   & 86.83   & 165.5  & 342  & 707.8  \\
Benefit value      & 7.87   & 8.41    & 9.64    & 11.31   & 12.68   & 14.45   & 15.81   \\
Contribution value & 7.39   & 8.80    & 10.35   & 12.44   & 14.15   & 16.68   & 17.00   \\
Banzhaf value    & 10.56          & 21.73   & 42.06   & 89.00   & 174.75  & 343.06  & 689.62  \\
Banzhaf* value & 13.03    & 25.48   & 50.98   & 89.12    & 173.76  & 359.56  & 687.96   \\
Tau value          & 17.76  & 29.94   & 56.70   & 99.66   & 199  & 358.1  & 796.2  \\ \midrule
total:   arbitrary & 11.63  & 20.35   & 35.35   & 65.49   & 123.81  & 257.33  & 530.78  \\
total: d1          & 66.81       & 85.85   & 42.97   & 127.49  & 171.84  & 315.08  & 473.38  \\
total: lexmin      & 305.04 & 688.20  & 1044.89 & 1929.32 & 2771.54 & 4912.33 & 6737.04 \\
total: d1+c        & 93.26  & 87.09   & 43.93   & 121.11  & 166.00  & 330.56  & 504.52  \\
total: lexmin+c    & 470.29 & 1030.02 & 2010.87 & 2514.62 & 4032.30 & 5764.23 & 7990.36 \\ \bottomrule
\end{tabular}
\medskip
\caption{Average CPU time for a single 24-round simulation instance, broken down into the different computational tasks. Here, the times for data preparation and graph building are average times taken over all scenarios and solution concepts, whereas the time for each solution concept is the average time taken over all scenarios. The total times for the scenarios are average times taken over all solutions concepts, where ``total'' refers to the total computation time {(using up to $32$ threads for each simulation instance)}, which includes computing the initial allocations, and only excludes the time for data preparation and graph building.}\label{table:computational time} 
\end{table}

\section{Conclusions}\label{s-con}

We introduced the class of partitioned permutation games and
{first}
proved a number of complexity results.

{In particular, we showed that  every partitioned permutation game has a nonempty core and that we can compute a core allocation of a partitioned permutation game in polynomial time. 
We also proved that for partitioned permutation games of fixed width~$c$, the problem of deciding if an allocation is in the core is polynomial-time solvable if $c=1$ but is already co\NP-complete if $c=2$. Moreover, we proved that in this setting even the problem of finding a weakly optimal solution for a given target allocation~$x$ is \NP-hard, and we gave an exact randomized XP-time algorithm for solving this problem. Our theoretical results contrast some known results for partitioned $\ell$-permutation games where $\ell$ is some constant, as surveyed in Table~\ref{dichotomies}.
As an open theoretical problem we ask: }

\begin{open}
{Determine} the complexity of computing the nucleolus for permutation games and for partitioned permutation games.
\end{open}

\noindent
Our new results guided our simulations for IKEPs, {where we could include} 
up to ten countries, with exchange bound $\ell=\infty$. 
{Our simulations showed the clear benefits of using a credit system with strongly close optimal solutions, namely an improvement in balance up to 56\%. The exact improvement depends on the choice of solution concept, with the Banzhaf* value yielding the best results, namely on average, a deviation of at most 0.90\%. 
Our simulations showed a significant improvement (46\% on average) in the total number of kidney transplants if cycles of any length are allowed compared to the case where $\ell=2$~\cite{BBPY24}.
In our simulations, we first let all countries be of the same size. We then examined, in the same way as was done in~\cite{BBPY24}, whether our conclusions would change for varying country sizes. Just as in~\cite{BBPY24}, this turned out not to be the case.}

{For future research, it would be interesting to determine to what extent our theoretical and experimental results change if we allow non-directed (altruistic) donors or multiple donors registering for one patient. We did not consider such types of donors for this paper, as our goal was to compare our results with the previously known results for $\ell=2$, where non-directed donors and multiple donors for one patient were also excluded.}
{Moreover, we recall} that matching games and permutation games 
{may be} defined on edge-weighted graphs{, in which each edge $e\in A$ has some weight $w(e)$ reflecting the expected utility of the corresponding kidney transplant.} It would be interesting to do simulations in the presence of edge weights. 
{Furthermore, we recall that more complex hierarchical optimization criteria than only optimizing the number of transplants are used in European kidney exchange~\cite{Bi_etal2019,Bi_etal2021}.} We leave {conducting simulations with such criteria} for future research as well.

{Finally}, for future research we will consider the more realistic exchange bounds $\ell\in \{3,4,5\}$.
We note that the simulations done in~\cite{BGKPPV20,KNPV20} were only for $\ell=3$; a more limited number of solution concepts; for IKEPs with up to four countries; and only for scenarios that use weakly close optimal solutions. They also used different data sets. For our follow-up study for $\ell\in \{3,4,5\}$, we must now also overcome, as in~\cite{BGKPPV20,KNPV20},
 the additional computational obstacle of not being able to compute an optimal solution for a compatibility graph in a KEP round and the values $v(S)$ of the associated permutation game
in polynomial time (see Theorem~\ref{t-hard}).
Current techniques for $\ell\in \{3,4,5\}$ therefore involve, besides ILPs based on the edge-formulation, ILPs based on the cycle-formulation, with a variable for each cycle of length at most~$\ell$ (see, for example,~\cite{CKVR13,DGGKMP,DMPST16}). Hence, computing a single value~$v(S)$ will become significantly more expensive, and even more so for increasing $\ell$. For expensive solution concepts, such as the Shapley or nucleolus, we must compute an exponential number of values $v(S)$. Without any new methods, we expect it will not be possible to do this for simulations up to the same number of countries (ten) as we did for $\ell=\infty$.

\medskip
\noindent
{\it Acknowledgments.}
Benedek was supported by the National Research, Development and Innovation Office of Hungary (OTKA Grant No.\ K138945); Bir\'o by the Hungarian Scientific Research Fund (OTKA, Grant No.\ K143858) and the Hungarian Academy of Sciences (Momentum Grant No. LP2021-2); Cs\'aji by the Hungarian Scientific Research Fund, OTKA, Grant No. K143858, and by the Doctoral Student Scholarship Program (number C2258525) of the Cooperative Doctoral Program of the Ministry of Innovation and Technology financed by the National Research, Development, and Innovation Fund; and Paulusma was supported by the Leverhulme Trust (Grant~RF-2022-607) and EPSRC (Grant EP/X01357X/1). 
This work has made use of the Hamilton HPC Service of Durham University. In particular, we thank Rob Powell for all his help with the initial set up of our simulations.

We also thank the three anonymous reviewers of our paper for all their useful comments for improving the readability of our paper. 

\bibliographystyle{splncs04}
\bibliography{jair2}

\appendix

\section{Average Maximum Relative Deviations}\label{a-average}

In this section we consider the average maximum relative deviation as our evaluation measure.
We refer to Figures~\ref{fig:maxdeviations}--\ref{fig:max_d1+c} for similar figures as in Section~\ref{s-simul} where we used the average total relative deviation.

\begin{figure}
  \centering
    \hspace*{-1cm}
  \includegraphics[scale=0.42]{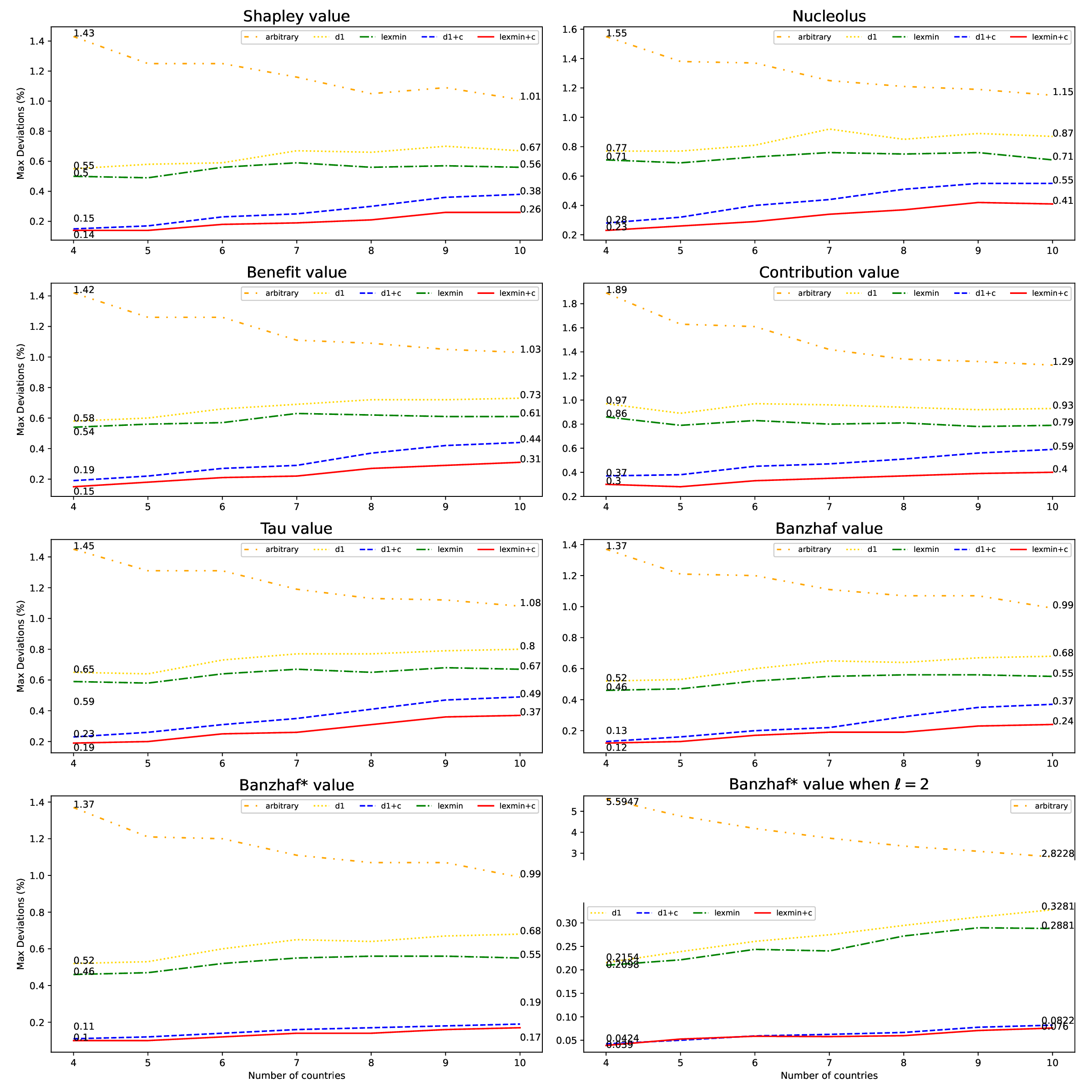}
  \caption{Average maximum relative deviations for each of the seven solution concepts under the five different scenarios for {\bf equal} country sizes, where the number of countries $n$ is ranging from $4$ to $10$. For comparison, the lower right figure displays a result from~\cite{BBPY24} for $\ell=2$, namely for the Banzhaf* value, which behaved best for $\ell=2$.}\label{fig:maxdeviations}
\end{figure}
\begin{figure}
  \centering
    \hspace*{-1cm}
  \includegraphics[scale=0.42]{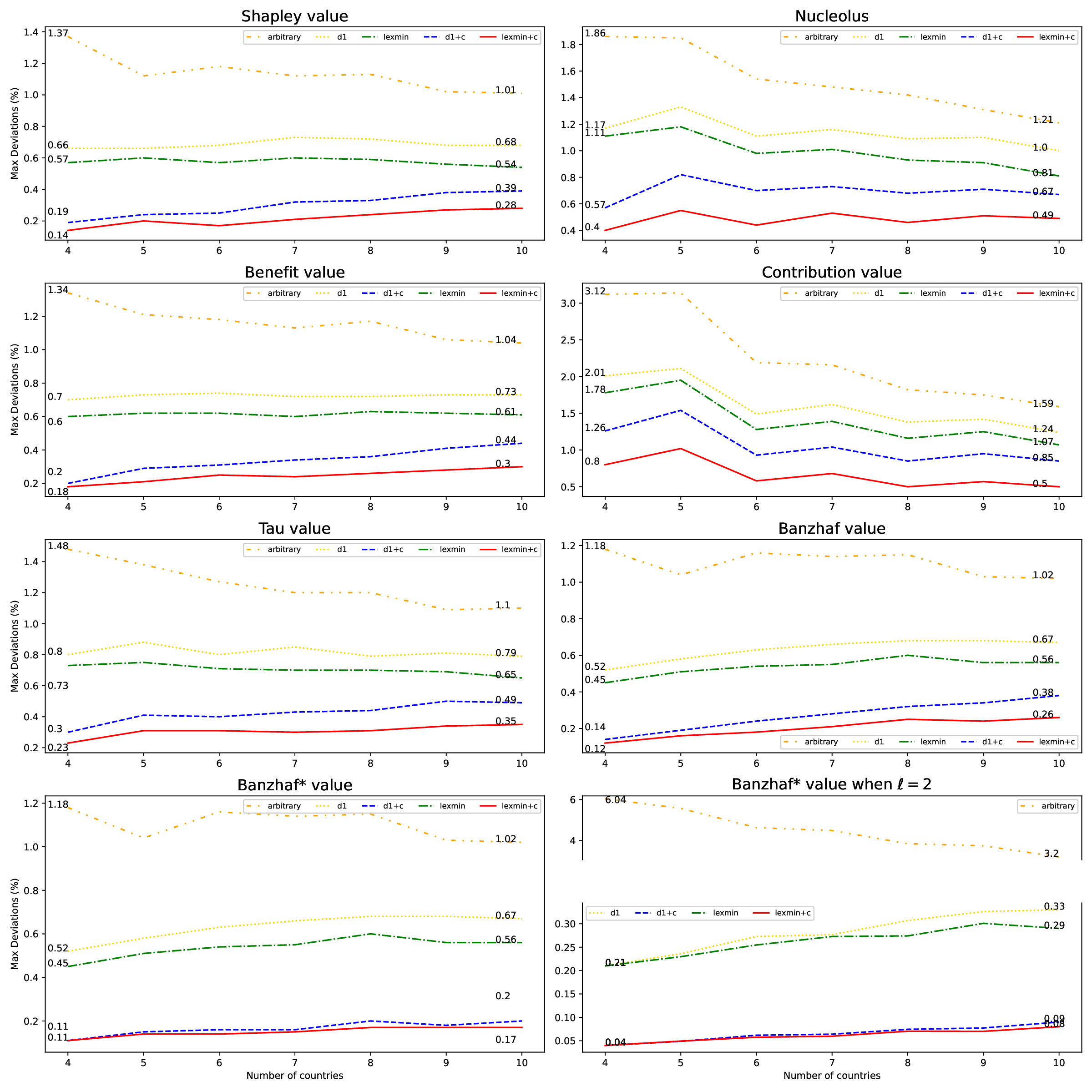}
  \caption{Average maximum relative deviations for each of the seven solution concepts under the five different scenarios for {\bf varying} country sizes, where the number of countries $n$ is ranging from $4$ to $10$. For comparison, the lower right figure displays a result from~\cite{BBPY24} for $\ell=2$, namely for the Banzhaf* value, which behaved best for $\ell=2$.}\label{fig:maxdeviationsundervarying}
\end{figure}
\begin{figure}
  \centering
  \includegraphics[width=1.1\linewidth]{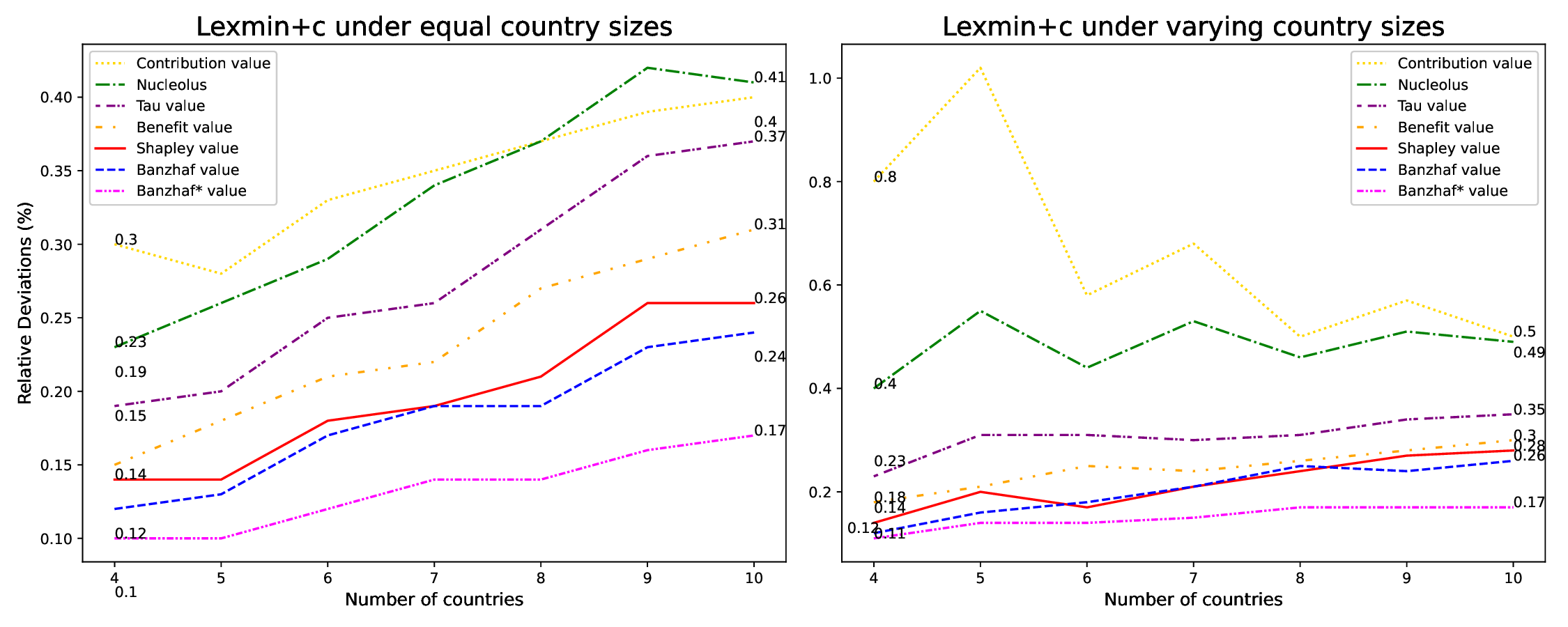}
  \caption{Average maximum relative deviations for all solution concepts in the \emph{lexmin+c} scenario, where the number of countries $n$ ranges from $4$ to $10$.}\label{fig:max_lexmin+c}
\end{figure}
\begin{figure}
  \centering
  \includegraphics[width=1.1\linewidth]{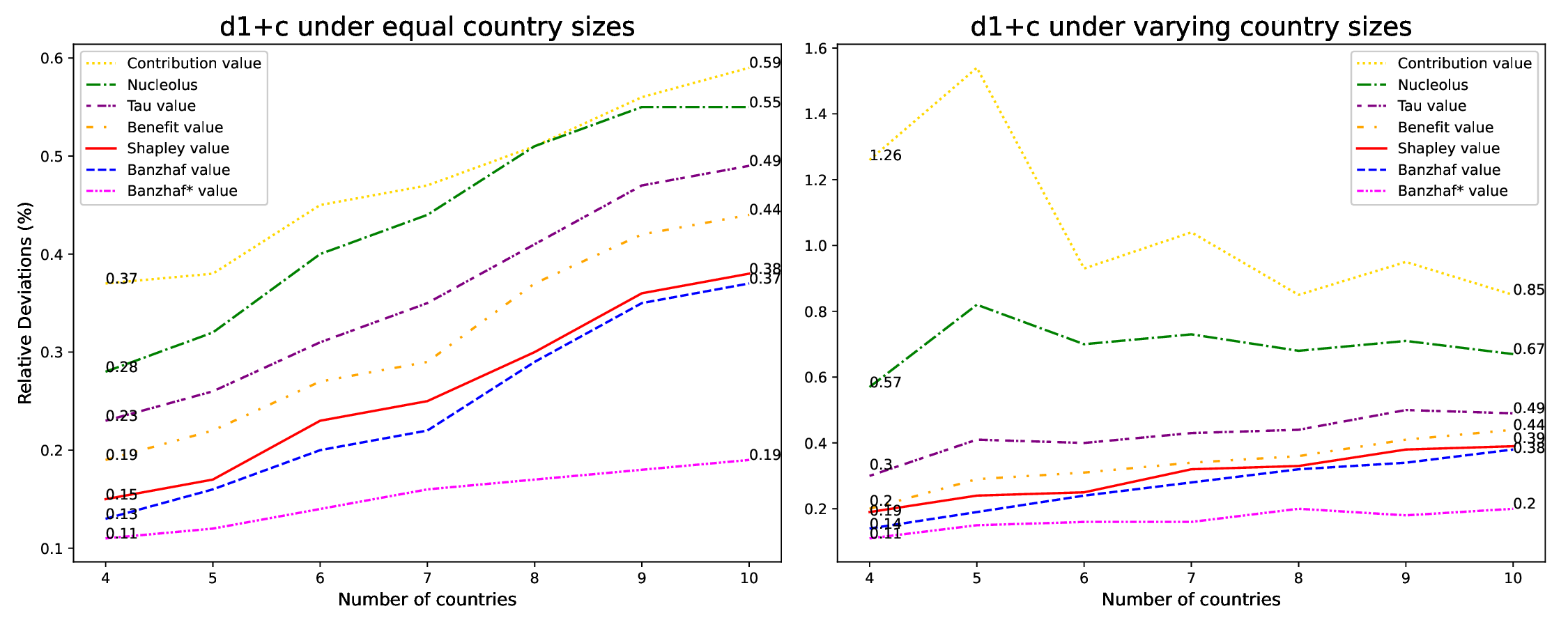}
  \caption{Average maximum relative deviations for all solution concepts in the \emph{d1+c} scenario, where the number of countries $n$ ranges from $4$ to $10$.}\label{fig:max_d1+c}
\end{figure}

\clearpage
\section{Additional Information on Incomplete Simulations}\label{a-in}

{Tables~\ref{table:breakdown_of_not_solved} and~\ref{table:breakdown_of_not_solved2} give complete breakdowns of the number of incomplete simullations instances for equal country sizes and varying country sizes, respectively.
Figures~\ref{fig:optimaltotaldeviations} and~\ref{fig:optimaltotaldeviationsvaryingsizes} are analogous to
Figure~\ref{fig:absolutedeviations}
and~\ref{fig:absolutedeviationsundervarying}, respectively; the only difference is that for 
Figures~\ref{fig:optimaltotaldeviations} and~\ref{fig:optimaltotaldeviationsvaryingsizes} we left out
the incomplete instances.}

\setlength{\tabcolsep}{0.5cm}
\begin{table}[h]
\centering
\begin{tabular}{lllllllll}
\textbf{\#unsolved / n} & \textbf{4} & \textbf{5} & \textbf{6} & \textbf{7} & \textbf{8} & \textbf{9} & \textbf{10} & \textbf{Total} \\ \toprule
\multicolumn{8}{c}{\textbf{Shapley value}}                                                             \\ \midrule
d1         & 0          & 0          & 0          & 0          & 0          & 0          & 0     & 0      \\
d1+c       & 0          & 0          & 0          & 0          & 0          & 0          & 0        & 0   \\
lexmin     & 0          & 0          & 0          & 3          & 2          & 14         & 7         & 26  \\
lexmin+c   & 0          & 0          & 0          & 3          & 5          & 13         & 7        & 28   \\ \midrule
\multicolumn{8}{c}{\textbf{Nucleolus}}                                                                 \\ \midrule
d1         & 0          & 0          & 0          & 0          & 0          & 0          & 0       & 0    \\
d1+c       & 0          & 0          & 0          & 0          & 0          & 0          & 0        & 0   \\
lexmin     & 0          & 0          & 0          & 0          & 0          & 0          & 0         & 0  \\
lexmin+c   & 0          & 0          & 0          & 2          & 0          & 0          & 1         & 3  \\ \midrule
\multicolumn{8}{c}{\textbf{Benefit value}}                                                             \\ \midrule
d1         & 0          & 0          & 0          & 0          & 0          & 0          & 0      & 0     \\
d1+c       & 0          & 0          & 0          & 0          & 0          & 0          & 0        & 0   \\
lexmin     & 0          & 0          & 0          & 0          & 1          & 0          & 0        & 1   \\
lexmin+c   & 1          & 1          & 5          & 1          & 2          & 3          & 0        & 13   \\ \midrule
\multicolumn{8}{c}{\textbf{Contribution value}}                                                        \\ \midrule
d1         & 0          & 0          & 0          & 0          & 0          & 0          & 0     & 0      \\
d1+c       & 0          & 0          & 0          & 0          & 0          & 0          & 0       & 0    \\
lexmin     & 1          & 1          & 2          & 1          & 1          & 2          & 4        & 12   \\
lexmin+c   & 2          & 2          & 5          & 3          & 5          & 0          & 5        & 22   \\ \midrule
\multicolumn{8}{c}{\textbf{Tau value}}                                                                 \\ \midrule
d1         & 0          & 0          & 0          & 0          & 0          & 1          & 0      & 1     \\
d1+c       & 0          & 0          & 0          & 0          & 0          & 1          & 0      & 1     \\
lexmin     & 0          & 0          & 0          & 0          & 0          & 1          & 0       & 1    \\
lexmin+c   & 0          & 1          & 0          & 0          & 0          & 1          & 0       & 2    \\ \midrule
\multicolumn{8}{c}{\textbf{Banzhaf value}}                                                             \\ \midrule
d1         & 0          & 0          & 0          & 0          & 0          & 0          & 0      & 0     \\
d1+c       & 0          & 0          & 0          & 0          & 0          & 0          & 0      & 0     \\
lexmin     & 0          & 1          & 1          & 4          & 1          & 8          & 9      & 24     \\
lexmin+c   & 0          & 2          & 5          & 3          & 1          & 9          & 9      & 29     \\ \midrule
\multicolumn{8}{c}{\textbf{Banzhaf* value}}                                                            \\ \midrule
d1+c       & 0          & 0          & 0          & 0          & 0          & 0          & 0        & 0   \\
lexmin+c   & 0          & 0          & 0          & 0          & 0          & 0          & 1       & 1   \\ \bottomrule
\end{tabular}
\smallskip
\caption{Complete breakdown of the number of incomplete simulation instances (out of 100) for the seven different solution concepts and the four different scenarios (excluding the arbitrary matching scenario, where no ILPs were used) for {\bf equal} country sizes.}\label{table:breakdown_of_not_solved} 
\end{table}

\setlength{\tabcolsep}{0.5cm}
\begin{table}[h]
\centering
\begin{tabular}{lllllllll}
\textbf{\#unfinished / n} & \textbf{4} & \textbf{5} & \textbf{6} & \textbf{7} & \textbf{8} & \textbf{9} & \textbf{10} & \textbf{Total} \\ \toprule
\multicolumn{8}{c}{\textbf{Shapley value}}                                                             \\ \midrule
d1                 & 0 & 0 & 0 & 0 & 0 & 0 & 0 & 0 \\
d1+c               & 0 & 0 & 0 & 0 & 0 & 0 & 0  & 0\\
lexmin             & 0 & 0 & 0 & 1 & 0 & 0 & 6  & 7\\
lexmin+c           & 0 & 0 & 0 & 1 & 0 & 2 & 2 & 5 \\ \midrule
\multicolumn{8}{c}{Nucleolus}                   \\ \midrule
d1                 & 0 & 0 & 0 & 0 & 0 & 1 & 0  & 1\\
d1+c               & 0 & 0 & 0 & 0 & 0 & 0 & 0 & 0 \\
lexmin             & 0 & 0 & 0 & 0 & 0 & 1 & 0  & 1\\
lexmin+c           & 0 & 0 & 0 & 0 & 0 & 1 & 0  & 1\\ \midrule
\multicolumn{8}{c}{Benefit value}              \\ \midrule
d1                 & 0 & 0 & 0 & 0 & 0 & 0 & 0  & 0\\
d1+c               & 0 & 0 & 1 & 0 & 0 & 0 & 0  & 1\\
lexmin             & 0 & 0 & 0 & 0 & 0 & 0 & 0  & 0\\
lexmin+c           & 1 & 0 & 2 & 0 & 0 & 1 & 3  & 7\\ \midrule
\multicolumn{8}{c}{Contribution value}          \\ \midrule
d1                 & 0 & 0 & 0 & 0 & 0 & 0 & 0  & 0\\
d1+c               & 0 & 0 & 0 & 0 & 0 & 0 & 0  & 0\\
lexmin             & 0 & 0 & 1 & 0 & 1 & 1 & 0  & 3\\
lexmin+c           & 0 & 0 & 1 & 0 & 1 & 1 & 1  & 4\\ \midrule
\multicolumn{8}{c}{Tau value}                   \\ \midrule
d1                 & 0 & 0 & 0 & 0 & 0 & 0 & 0  & 0\\
d1+c               & 0 & 0 & 0 & 0 & 1 & 0 & 0  & 1\\
lexmin             & 0 & 0 & 0 & 0 & 0 & 0 & 0  & 0\\
lexmin+c           & 0 & 0 & 0 & 0 & 0 & 0 & 0  & 0\\ \midrule
\multicolumn{8}{c}{Banzhaf value}               \\ \midrule
d1                 & 0 & 1 & 0 & 0 & 0 & 0 & 0  & 1\\
d1+c               & 0 & 1 & 0 & 0 & 0 & 0 & 0  & 1\\
lexmin             & 0 & 1 & 1 & 1 & 2 & 1 & 4  & 10\\
lexmin+c           & 0 & 1 & 0 & 0 & 3 & 1 & 4  & 9\\ \midrule
\multicolumn{8}{c}{Banzhaf* value}              \\ \midrule
d1+c               & 0 & 1 & 0 & 0 & 0 & 0 & 0 & 1 \\
lexmin+c           & 0 & 1 & 0 & 0 & 1 & 1 & 4 & 7\\ \bottomrule
\end{tabular}
\medskip
\caption{Complete breakdown of the number of incomplete simulation instances (out of 100) for the seven different solution concepts and the four different scenarios (excluding the arbitrary matching scenario, where no ILPs were used) for {\bf varying} country sizes.}\label{table:breakdown_of_not_solved2} 
\end{table}

\newpage
\begin{figure}
  \centering
  \hspace*{-1cm}
  \includegraphics[scale=0.42]{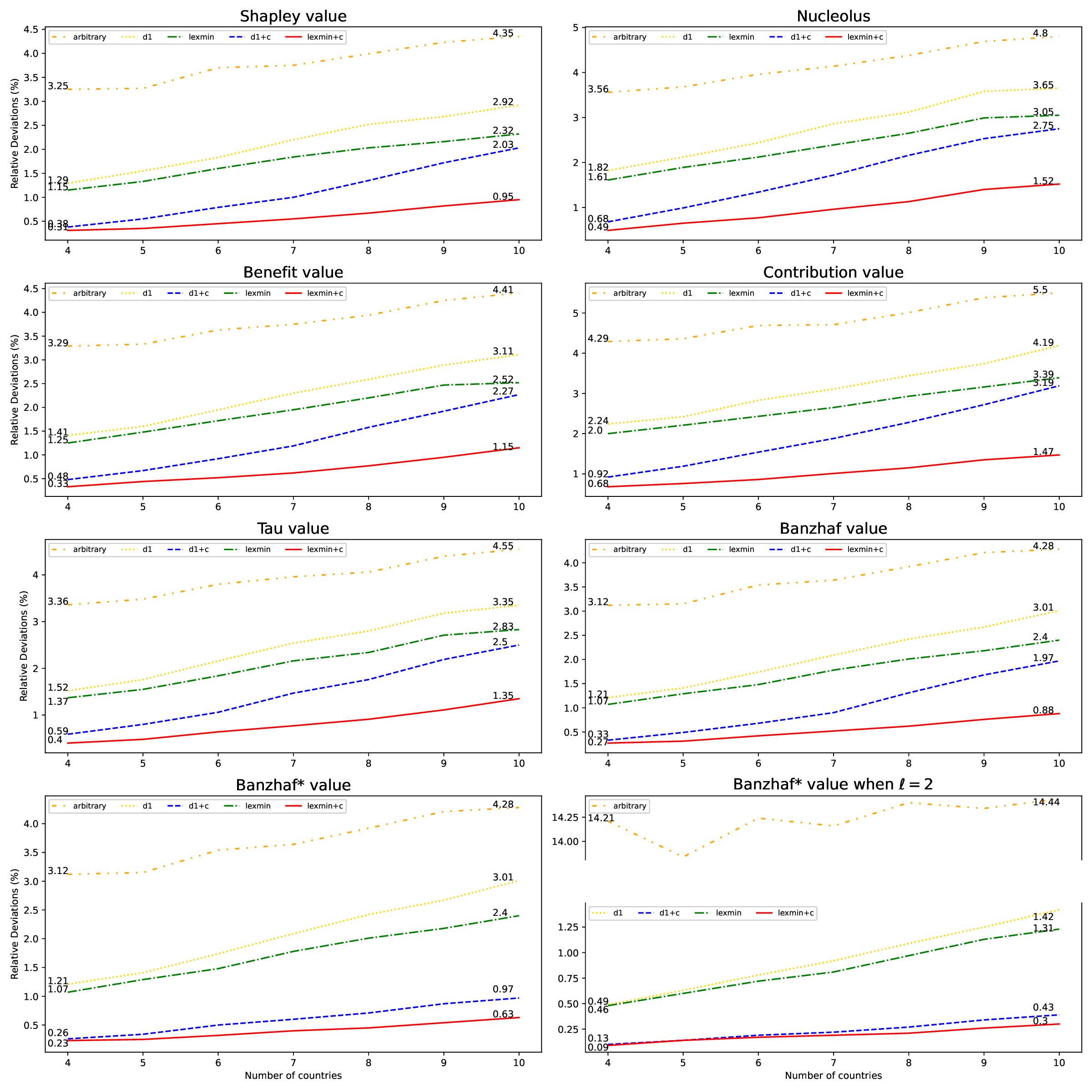}
  \caption{Average total relative deviations, leaving out the incomplete simulation instances under the five different scenarios for {\bf equal} country sizes, where the number of countries ranges from $4$ to $10$. We note that this figure is identical to Figure~\ref{fig:absolutedeviations} that includes all simulation instances.}\label{fig:optimaltotaldeviations}
\end{figure}

\begin{figure}
  \centering
  \hspace*{-1cm}
  \includegraphics[scale=0.42]{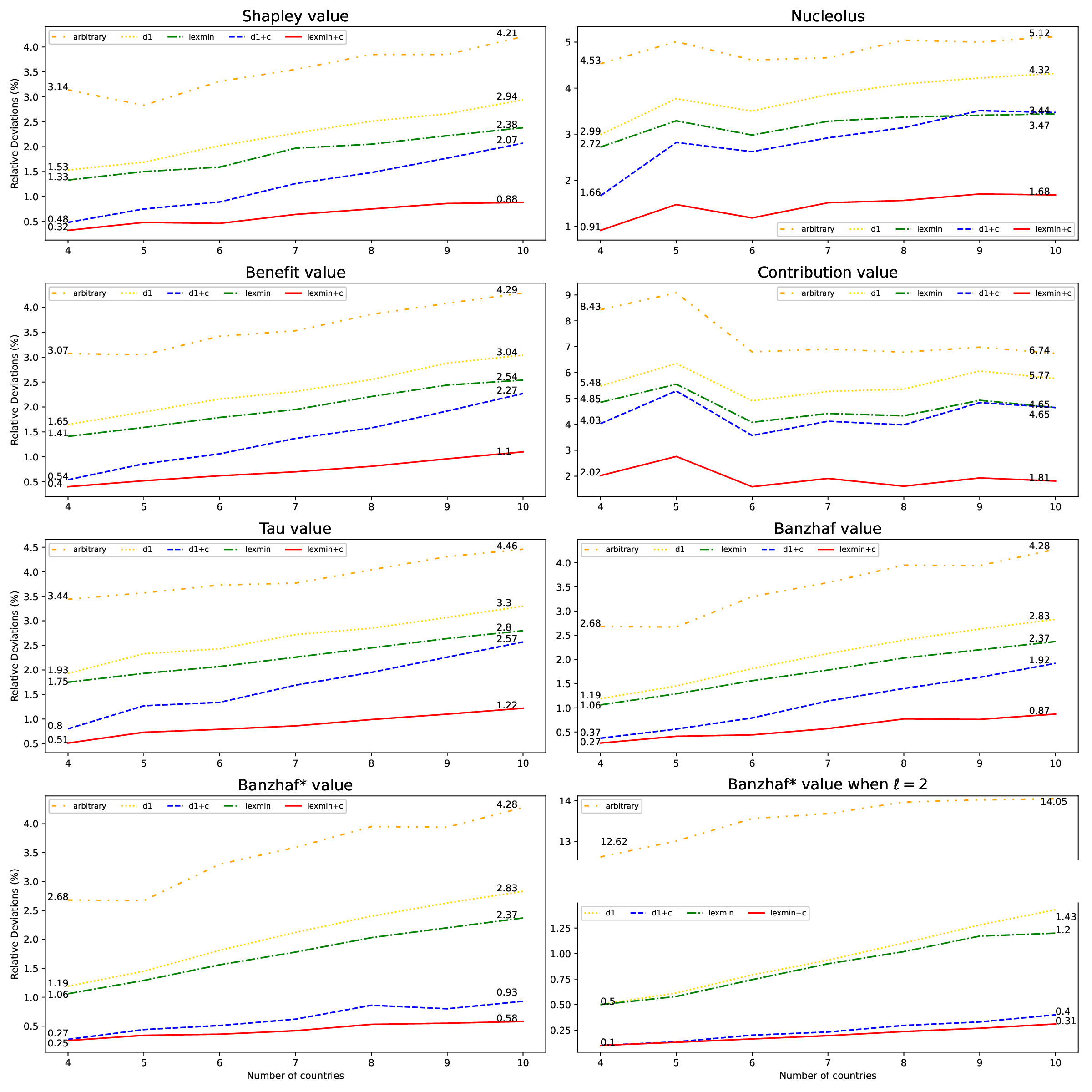}
  \caption{Average total relative deviations, leaving out the incomplete simulation instances under the five different scenarios for {\bf varying} country sizes, where the number of countries ranges from $4$ to $10$. We note that this figure is identical to Figure~\ref{fig:absolutedeviationsundervarying} that includes all simulation instances.}\label{fig:optimaltotaldeviationsvaryingsizes}
\end{figure}

\clearpage
\section{Relative Ratios}\label{a-conc}

{Recall that the relative ratio is 
the ratio of the average maximum relative deviation and the average total relative deviation.
In Table~\ref{table:concentration}
we showed the relative ratios for equal country sizes under \emph{lexmin+c} both for $\ell=\infty$
and $\ell=2$, and we also showed their difference. In Tables~\ref{table:concentration_varying}--\ref{table:concentration_arb_varying} we show the same information for the other scenarios and also for varying country sizes (where the relative ratios for $\ell=2$ where generated  from the data used in~\cite{BBPY24}).}

\setlength{\tabcolsep}{0.25cm}
\begin{table}[h]
\centering
\begin{tabular}{@{}lllllllll@{}}
\toprule
 \multicolumn{2}{c}{n}   & 4    & 5    & 6    & 7    & 8    & 9    & 10   \\ \midrule
\multirow{3}{*}{\minitab[c]{Shapley\\value}}  & $\ell=\infty$ & 44.43\% & 40.58\% & 37.71\% & 33.29\% & 32.18\% & 31.99\% & 31.07\% \\ 
 & $\ell=2$ & 42.82\% & 38.62\% & 35.31\% & 31.05\% & 27.83\% & 26.96\% & 24.81\% \\ 
 & difference & 1.61\% & 1.96\% & 2.40\% & 2.24\% & 4.35\% & 5.03\% & 6.26\% \\ \midrule
\multirow{3}{*}{\minitab[c]{Banzhaf\\value}}  & $\ell=\infty$ & 43.75\% & 39.53\% & 40.34\% & 35.92\% & 32.83\% & 31.31\% & 30.06\% \\ 
 & $\ell=2$ & 43.18\% & 37.15\% & 33.94\% & 31.19\% & 30.13\% & 27.61\% & 24.50\% \\ 
 & difference & 0.57\% & 2.38\% & 6.40\% & 4.73\% & 2.70\% & 3.70\% & 5.56\% \\ \midrule
 \multirow{3}{*}{\minitab[c]{nucleolus}} & $\ell=\infty$ & 43.99\% & 37.67\% & 37.67\% & 35.10\% & 29.67\% & 30.08\% & 29.06\% \\ 
 & $\ell=2$ & 46.06\% & 39.05\% & 35.94\% & 34.13\% & 31.35\% & 30.26\% & 27.10\% \\ 
 & difference & -2.07\% & -1.38\% & 1.73\% & 0.98\% & -1.68\% & -0.18\% & 1.96\% \\ \midrule
  \multirow{3}{*}{\minitab[c]{tau\\value}}  & $\ell=\infty$ & 45.63\% & 42.38\% & 39.61\% & 34.47\% & 31.25\% & 30.81\% & 28.82\% \\ 
 & $\ell=2$ & 41.92\% & 38.50\% & 35.39\% & 36.04\% & 31.29\% & 28.05\% & 26.06\% \\ 
 & difference & 3.72\% & 3.88\% & 4.22\% & -1.57\% & -0.04\% & 2.77\% & 2.76\% \\ \midrule
   \multirow{3}{*}{\minitab[c]{benefit\\value}}  & $\ell=\infty$ & 45.53\% & 40.96\% & 40.33\% & 34.19\% & 32.73\% & 29.63\% & 27.95\% \\ 
 & $\ell=2$ & 42.31\% & 37.14\% & 33.68\% & 31.91\% & 30.41\% & 28.76\% & 25.05\% \\ 
 & difference & 3.22\% & 3.82\% & 6.65\% & 2.28\% & 2.33\% & 0.87\% & 2.91\% \\ \midrule
    \multirow{3}{*}{\minitab[c]{contribution\\value}}  & $\ell=\infty$ & 39.67\% & 36.83\% & 36.54\% & 35.61\% & 31.01\% & 29.51\% & 27.45\% \\ 
 & $\ell=2$ & 43.57\% & 36.74\% & 36.41\% & 32.65\% & 28.38\% & 27.49\% & 25.31\% \\ 
 & difference & -3.91\% & 0.09\% & 0.13\% & 2.96\% & 2.63\% & 2.02\% & 2.15\% \\ \midrule
 \multirow{3}{*}{\minitab[c]{Banzhaf*\\value}}  & $\ell=\infty$ & 44.26\% & 40.10\% & 38.69\% & 36.31\% & 31.90\% & 31.17\% & 29.39\% \\ 
 & $\ell=2$ & 41.60\% & 35.21\% & 31.94\% & 28.43\% & 27.12\% & 22.61\% & 21.53\% \\ 
 & difference & 2.66\% & 4.89\% & 6.75\% & 7.88\% & 4.78\% & 8.56\% & 7.86\% \\
\bottomrule
\end{tabular}
\medskip
\caption{Relative ratios for varying country sizes under \emph{lexmin+c}, for $\ell=\infty$, $\ell = 2$ and their difference.}\label{table:concentration_varying} 
\end{table}

\setlength{\tabcolsep}{0.25cm}
\begin{table}[h]
\centering
\begin{tabular}{@{}lllllllll@{}}
\toprule
 \multicolumn{2}{c}{n}   & 4    & 5    & 6    & 7    & 8    & 9    & 10   \\ \midrule
\multirow{3}{*}{\minitab[c]{Shapley\\value}}  & $\ell=\infty$ & 39.35\% & 30.99\% & 29.02\% & 25.07\% & 22.48\% & 21.05\% & 18.76\% \\ 
 & $\ell=2$ & 43.38\% & 38.19\% & 33.14\% & 30.28\% & 26.48\% & 23.92\% & 22.07\% \\ 
 & difference & -4.03\% & -7.19\% & -4.12\% & -5.21\% & -4.00\% & -2.87\% & -3.31\% \\ \midrule
\multirow{3}{*}{\minitab[c]{Banzhaf\\value}}  & $\ell=\infty$ & 39.74\% & 33.02\% & 29.33\% & 24.52\% & 22.23\% & 20.96\% & 18.82\% \\ 
 & $\ell=2$ & 42.73\% & 35.31\% & 30.25\% & 27.88\% & 25.00\% & 22.51\% & 21.07\% \\ 
 & difference & -2.99\% & -2.29\% & -0.92\% & -3.36\% & -2.77\% & -1.55\% & -2.25\% \\ \midrule
 \multirow{3}{*}{\minitab[c]{nucleolus}} & $\ell=\infty$ & 40.64\% & 32.40\% & 29.62\% & 25.55\% & 23.57\% & 21.75\% & 19.80\% \\ 
 & $\ell=2$ & 42.33\% & 37.23\% & 31.97\% & 29.21\% & 25.00\% & 23.09\% & 20.56\% \\ 
 & difference & -1.69\% & -4.83\% & -2.36\% & -3.65\% & -1.43\% & -1.33\% & -0.76\% \\ \midrule
  \multirow{3}{*}{\minitab[c]{tau\\value}}  & $\ell=\infty$ & 39.97\% & 32.84\% & 29.41\% & 24.10\% & 23.24\% & 21.48\% & 19.39\% \\ 
 & $\ell=2$ & 56.85\% & 46.43\% & 37.42\% & 31.95\% & 29.29\% & 26.97\% & 24.74\% \\ 
 & difference & -16.88\% & -13.59\% & -8.01\% & -7.85\% & -6.05\% & -5.49\% & -5.35\% \\ \midrule
   \multirow{3}{*}{\minitab[c]{benefit\\value}}  & $\ell=\infty$ & 39.59\% & 32.02\% & 29.45\% & 24.35\% & 23.49\% & 21.67\% & 19.32\% \\ 
 & $\ell=2$ & 41.36\% & 38.72\% & 33.50\% & 28.94\% & 26.86\% & 23.02\% & 20.49\% \\ 
 & difference & -1.77\% & -6.71\% & -4.05\% & -4.59\% & -3.37\% & -1.35\% & -1.18\% \\ \midrule
    \multirow{3}{*}{\minitab[c]{contribution\\value}}  & $\ell=\infty$ & 39.79\% & 31.56\% & 29.06\% & 25.00\% & 22.37\% & 20.50\% & 18.45\% \\ 
 & $\ell=2$ & 42.38\% & 38.24\% & 33.41\% & 29.89\% & 26.22\% & 23.84\% & 21.14\% \\ 
 & difference & -2.59\% & -6.68\% & -4.36\% & -4.89\% & -3.85\% & -3.33\% & -2.69\% \\ \midrule
 \multirow{3}{*}{\minitab[c]{Banzhaf*\\value}}  & $\ell=\infty$ & 41.21\% & 34.16\% & 28.94\% & 26.12\% & 23.40\% & 20.99\% & 19.57\% \\ 
 & $\ell=2$ & 42.40\% & 35.79\% & 31.21\% & 28.45\% & 24.70\% & 22.88\% & 21.08\% \\ 
 & difference & -1.19\% & -1.62\% & -2.27\% & -2.33\% & -1.30\% & -1.89\% & -1.51\% \\
\bottomrule
\end{tabular}
\medskip
\caption{Relative ratios for equal country sizes under \emph{d1+c}, for $\ell=\infty$, $\ell = 2$ and their difference.}\label{table:concentration_d1c} 
\end{table}

\setlength{\tabcolsep}{0.25cm}
\begin{table}
\centering
\begin{tabular}{@{}lllllllll@{}}
\toprule
 \multicolumn{2}{c}{n}   & 4    & 5    & 6    & 7    & 8    & 9    & 10   \\ \midrule
\multirow{3}{*}{\minitab[c]{Shapley\\value}}  & $\ell=\infty$ & 39.18\% & 31.66\% & 27.75\% & 25.12\% & 22.51\% & 21.39\% & 18.83\% \\ 
 & $\ell=2$ & 42.00\% & 37.21\% & 31.11\% & 27.88\% & 25.48\% & 22.45\% & 19.78\% \\ 
 & difference & -2.82\% & -5.56\% & -3.35\% & -2.76\% & -2.98\% & -1.05\% & -0.95\% \\ \midrule
\multirow{3}{*}{\minitab[c]{Banzhaf\\value}}  & $\ell=\infty$ & 38.21\% & 33.81\% & 30.03\% & 25.03\% & 22.51\% & 20.96\% & 19.55\% \\ 
 & $\ell=2$ & 41.64\% & 37.85\% & 30.53\% & 27.92\% & 25.37\% & 22.68\% & 21.09\% \\ 
 & difference & -3.42\% & -4.04\% & -0.49\% & -2.89\% & -2.85\% & -1.72\% & -1.54\% \\ \midrule
 \multirow{3}{*}{\minitab[c]{nucleolus}} & $\ell=\infty$ & 34.29\% & 28.92\% & 26.69\% & 25.16\% & 21.51\% & 20.13\% & 19.42\% \\ 
 & $\ell=2$ & 39.79\% & 32.16\% & 26.26\% & 23.07\% & 21.66\% & 20.33\% & 18.24\% \\ 
 & difference & -5.50\% & -3.24\% & 0.43\% & 2.10\% & -0.15\% & -0.20\% & 1.17\% \\ \midrule
  \multirow{3}{*}{\minitab[c]{tau\\value}}  & $\ell=\infty$ & 37.66\% & 32.07\% & 29.60\% & 25.25\% & 22.80\% & 21.94\% & 19.10\% \\ 
 & $\ell=2$ & 46.67\% & 35.53\% & 30.91\% & 29.27\% & 24.75\% & 21.81\% & 20.12\% \\ 
 & difference & -9.01\% & -3.45\% & -1.31\% & -4.02\% & -1.96\% & 0.13\% & -1.02\% \\ \midrule
   \multirow{3}{*}{\minitab[c]{benefit\\value}}  & $\ell=\infty$ & 37.56\% & 33.48\% & 29.28\% & 25.08\% & 22.56\% & 21.38\% & 19.45\% \\ 
 & $\ell=2$ & 41.50\% & 34.41\% & 32.13\% & 26.41\% & 24.49\% & 21.74\% & 19.71\% \\ 
 & difference & -3.94\% & -0.93\% & -2.85\% & -1.32\% & -1.93\% & -0.35\% & -0.26\% \\ \midrule
    \multirow{3}{*}{\minitab[c]{contribution\\value}}  & $\ell=\infty$ & 31.23\% & 29.10\% & 26.19\% & 25.19\% & 21.45\% & 19.59\% & 18.38\% \\ 
 & $\ell=2$ & 38.51\% & 32.26\% & 28.71\% & 25.60\% & 22.88\% & 20.89\% & 19.32\% \\ 
 & difference & -7.27\% & -3.16\% & -2.53\% & -0.42\% & -1.42\% & -1.30\% & -0.94\% \\ \midrule
 \multirow{3}{*}{\minitab[c]{Banzhaf*\\value}}  & $\ell=\infty$ & 38.21\% & 33.81\% & 30.03\% & 25.03\% & 22.51\% & 20.96\% & 19.55\% \\ 
 & $\ell=2$ & 41.64\% & 37.85\% & 30.53\% & 27.92\% & 25.37\% & 22.68\% & 21.09\% \\ 
 & difference & -3.42\% & -4.04\% & -0.49\% & -2.89\% & -2.85\% & -1.72\% & -1.54\% \\
\bottomrule
\end{tabular}
\medskip
\caption{Relative ratios for varying country sizes under \emph{d1+c}, for $\ell=\infty$, $\ell = 2$ and their difference.}\label{table:concentration_d1c_varying} 
\end{table}

\setlength{\tabcolsep}{0.25cm}
\begin{table}[h]
\centering
\begin{tabular}{@{}lllllllll@{}}
\toprule
 \multicolumn{2}{c}{n}   & 4    & 5    & 6    & 7    & 8    & 9    & 10   \\ \midrule
\multirow{3}{*}{\minitab[c]{Shapley\\value}}  & $\ell=\infty$ & 43.66\% & 36.50\% & 35.01\% & 31.87\% & 27.73\% & 25.86\% & 23.59\% \\ 
 & $\ell=2$ & 43.02\% & 37.65\% & 34.21\% & 30.58\% & 27.36\% & 25.96\% & 23.11\% \\ 
 & difference & 0.64\% & -1.15\% & 0.80\% & 1.29\% & 0.37\% & -0.10\% & 0.48\% \\ \midrule
\multirow{3}{*}{\minitab[c]{Banzhaf\\value}}  & $\ell=\infty$ & 43.04\% & 36.34\% & 35.15\% & 31.14\% & 27.69\% & 25.74\% & 22.87\% \\ 
 & $\ell=2$ & 43.71\% & 36.90\% & 33.85\% & 29.67\% & 28.06\% & 25.64\% & 23.42\% \\ 
 & difference & -0.66\% & -0.56\% & 1.31\% & 1.48\% & -0.37\% & 0.10\% & -0.56\% \\ \midrule
 \multirow{3}{*}{\minitab[c]{nucleolus}} & $\ell=\infty$ & 43.97\% & 36.83\% & 34.53\% & 31.64\% & 28.27\% & 25.31\% & 23.27\% \\ 
 & $\ell=2$ & 44.57\% & 36.52\% & 34.48\% & 29.52\% & 27.85\% & 26.20\% & 24.17\% \\ 
 & difference & -0.59\% & 0.31\% & 0.06\% & 2.12\% & 0.41\% & -0.89\% & -0.90\% \\ \midrule
  \multirow{3}{*}{\minitab[c]{tau\\value}}  & $\ell=\infty$ & 43.55\% & 37.36\% & 34.61\% & 31.16\% & 27.79\% & 24.95\% & 23.68\% \\ 
 & $\ell=2$ & 44.00\% & 39.67\% & 33.87\% & 30.85\% & 26.95\% & 25.18\% & 23.49\% \\ 
 & difference & -0.45\% & -2.31\% & 0.73\% & 0.31\% & 0.84\% & -0.23\% & 0.18\% \\ \midrule
   \multirow{3}{*}{\minitab[c]{benefit\\value}}  & $\ell=\infty$ & 42.98\% & 37.59\% & 33.34\% & 32.33\% & 28.17\% & 24.73\% & 24.22\% \\ 
 & $\ell=2$ & 42.98\% & 39.41\% & 32.90\% & 30.28\% & 26.75\% & 25.73\% & 23.74\% \\ 
 & difference & 0.00\% & -1.83\% & 0.44\% & 2.05\% & 1.42\% & -1.00\% & 0.47\% \\ \midrule
    \multirow{3}{*}{\minitab[c]{contribution\\value}}  & $\ell=\infty$ & 42.92\% & 35.82\% & 34.11\% & 30.26\% & 27.68\% & 24.79\% & 23.13\% \\ 
 & $\ell=2$ & 44.52\% & 37.22\% & 33.77\% & 31.10\% & 28.12\% & 26.39\% & 22.57\% \\ 
 & difference & -1.60\% & -1.40\% & 0.33\% & -0.84\% & -0.44\% & -1.60\% & 0.56\% \\ \midrule
\multirow{3}{*}{\minitab[c]{Banzhaf*\\value}}  & $\ell=\infty$ & 43.04\% & 36.34\% & 35.15\% & 31.14\% & 27.69\% & 25.74\% & 22.87\% \\ 
 & $\ell=2$ & 43.71\% & 36.90\% & 33.85\% & 29.67\% & 28.06\% & 25.64\% & 23.42\% \\ 
 & difference & -0.66\% & -0.56\% & 1.31\% & 1.48\% & -0.37\% & 0.10\% & -0.56\% \\
\bottomrule
\end{tabular}
\medskip
\caption{Relative ratios for equal country sizes under \emph{lexmin}, for $\ell=\infty$, $\ell = 2$ and their difference.}\label{table:concentration_lexmin} 
\end{table}

\setlength{\tabcolsep}{0.25cm}
\begin{table}[h]
\centering
\begin{tabular}{@{}lllllllll@{}}
\toprule
 \multicolumn{2}{c}{n}   & 4    & 5    & 6    & 7    & 8    & 9    & 10   \\ \midrule
\multirow{3}{*}{\minitab[c]{Shapley\\value}}  & $\ell=\infty$ & 42.41\% & 39.85\% & 35.90\% & 30.36\% & 28.87\% & 25.47\% & 22.75\% \\ 
 & $\ell=2$ & 45.17\% & 37.62\% & 33.77\% & 29.46\% & 27.20\% & 25.13\% & 23.66\% \\ 
 & difference & -2.76\% & 2.23\% & 2.14\% & 0.90\% & 1.67\% & 0.34\% & -0.90\% \\ \midrule
\multirow{3}{*}{\minitab[c]{Banzhaf\\value}}  & $\ell=\infty$ & 42.44\% & 39.50\% & 34.98\% & 30.66\% & 29.63\% & 25.36\% & 23.76\% \\ 
 & $\ell=2$ & 42.08\% & 39.60\% & 34.42\% & 30.31\% & 26.87\% & 25.73\% & 24.03\% \\ 
 & difference & 0.36\% & -0.11\% & 0.56\% & 0.35\% & 2.75\% & -0.37\% & -0.26\% \\ \midrule
 \multirow{3}{*}{\minitab[c]{nucleolus}} & $\ell=\infty$ & 40.74\% & 35.71\% & 33.04\% & 30.84\% & 27.59\% & 26.51\% & 23.65\% \\ 
 & $\ell=2$ & 42.53\% & 38.17\% & 34.08\% & 29.74\% & 28.51\% & 26.64\% & 25.27\% \\ 
 & difference & -1.80\% & -2.46\% & -1.05\% & 1.10\% & -0.92\% & -0.14\% & -1.62\% \\ \midrule
  \multirow{3}{*}{\minitab[c]{tau\\value}}  & $\ell=\infty$ & 41.81\% & 38.72\% & 34.16\% & 30.77\% & 28.55\% & 26.31\% & 23.12\% \\ 
 & $\ell=2$ & 45.29\% & 37.62\% & 34.39\% & 29.38\% & 27.49\% & 25.13\% & 23.88\% \\ 
 & difference & -3.48\% & 1.11\% & -0.22\% & 1.38\% & 1.06\% & 1.18\% & -0.76\% \\ \midrule
   \multirow{3}{*}{\minitab[c]{benefit\\value}}  & $\ell=\infty$ & 42.52\% & 39.28\% & 34.44\% & 30.99\% & 28.39\% & 25.26\% & 23.82\% \\ 
 & $\ell=2$ & 43.70\% & 37.10\% & 33.92\% & 30.51\% & 27.40\% & 25.11\% & 23.98\% \\ 
 & difference & -1.18\% & 2.18\% & 0.51\% & 0.47\% & 0.99\% & 0.15\% & -0.16\% \\ \midrule
    \multirow{3}{*}{\minitab[c]{contribution\\value}}  & $\ell=\infty$ & 36.57\% & 35.10\% & 31.46\% & 31.39\% & 26.63\% & 25.40\% & 23.02\% \\ 
 & $\ell=2$ & 37.19\% & 33.40\% & 32.36\% & 30.46\% & 28.79\% & 27.69\% & 24.80\% \\ 
 & difference & -0.63\% & 1.70\% & -0.89\% & 0.93\% & -2.16\% & -2.29\% & -1.78\% \\ \midrule
\multirow{3}{*}{\minitab[c]{Banzhaf*\\value}}  & $\ell=\infty$ & 42.44\% & 39.50\% & 34.98\% & 30.66\% & 29.63\% & 25.36\% & 23.76\% \\ 
 & $\ell=2$ & 42.08\% & 39.60\% & 34.42\% & 30.31\% & 26.87\% & 25.73\% & 24.03\% \\ 
 & difference & 0.36\% & -0.11\% & 0.56\% & 0.35\% & 2.75\% & -0.37\% & -0.26\% \\
\bottomrule
\end{tabular}
\medskip
\caption{Relative ratios for varying country sizes under \emph{lexmin}, for $\ell=\infty$, $\ell = 2$ and their difference.}\label{table:concentration_lexmin_varying} 
\end{table}

\setlength{\tabcolsep}{0.25cm}
\begin{table}[h]
\centering
\begin{tabular}{@{}lllllllll@{}}
\toprule
 \multicolumn{2}{c}{n}   & 4    & 5    & 6    & 7    & 8    & 9    & 10   \\ \midrule
\multirow{3}{*}{\minitab[c]{Shapley\\value}}  & $\ell=\infty$ & 42.66\% & 37.54\% & 32.40\% & 30.34\% & 26.29\% & 26.12\% & 22.79\% \\ 
 & $\ell=2$ & 42.57\% & 37.54\% & 34.20\% & 30.24\% & 27.99\% & 24.22\% & 22.87\% \\ 
 & difference & 0.09\% & 0.00\% & -1.81\% & 0.09\% & -1.70\% & 1.90\% & -0.08\% \\ \midrule
\multirow{3}{*}{\minitab[c]{Banzhaf\\value}}  & $\ell=\infty$ & 43.36\% & 37.84\% & 34.55\% & 31.11\% & 26.35\% & 24.97\% & 22.70\% \\ 
 & $\ell=2$ & 43.96\% & 37.92\% & 33.42\% & 29.87\% & 27.04\% & 25.00\% & 23.11\% \\ 
 & difference & -0.60\% & -0.08\% & 1.13\% & 1.24\% & -0.69\% & -0.03\% & -0.40\% \\ \midrule
 \multirow{3}{*}{\minitab[c]{nucleolus}} & $\ell=\infty$ & 42.18\% & 36.51\% & 33.15\% & 31.99\% & 27.25\% & 24.81\% & 23.85\% \\ 
 & $\ell=2$ & 45.00\% & 37.98\% & 32.49\% & 29.57\% & 27.37\% & 25.42\% & 22.73\% \\ 
 & difference & -2.82\% & -1.47\% & 0.66\% & 2.42\% & -0.13\% & -0.61\% & 1.12\% \\ \midrule
  \multirow{3}{*}{\minitab[c]{tau\\value}}  & $\ell=\infty$ & 42.54\% & 36.46\% & 33.55\% & 30.46\% & 27.41\% & 24.95\% & 23.71\% \\ 
 & $\ell=2$ & 42.68\% & 37.32\% & 33.21\% & 29.89\% & 28.12\% & 25.07\% & 23.42\% \\ 
 & difference & -0.13\% & -0.86\% & 0.34\% & 0.57\% & -0.71\% & -0.12\% & 0.28\% \\ \midrule
   \multirow{3}{*}{\minitab[c]{benefit\\value}}  & $\ell=\infty$ & 41.25\% & 37.80\% & 33.96\% & 30.01\% & 27.99\% & 25.01\% & 23.54\% \\ 
 & $\ell=2$ & 43.67\% & 37.81\% & 33.68\% & 29.98\% & 27.14\% & 24.64\% & 23.92\% \\ 
 & difference & -2.42\% & -0.01\% & 0.28\% & 0.02\% & 0.85\% & 0.36\% & -0.39\% \\ \midrule
    \multirow{3}{*}{\minitab[c]{contribution\\value}}  & $\ell=\infty$ & 43.32\% & 36.90\% & 34.20\% & 30.21\% & 27.44\% & 24.70\% & 22.31\% \\ 
 & $\ell=2$ & 42.91\% & 37.76\% & 32.70\% & 30.33\% & 27.63\% & 25.55\% & 23.26\% \\ 
 & difference & 0.40\% & -0.86\% & 1.49\% & -0.11\% & -0.19\% & -0.85\% & -0.96\% \\ \midrule
\multirow{3}{*}{\minitab[c]{Banzhaf*\\value}}  & $\ell=\infty$ & 43.36\% & 37.84\% & 34.55\% & 31.11\% & 26.35\% & 24.97\% & 22.70\% \\ 
 & $\ell=2$ & 43.96\% & 37.92\% & 33.42\% & 29.87\% & 27.04\% & 25.00\% & 23.11\% \\ 
 & difference & -0.60\% & -0.08\% & 1.13\% & 1.24\% & -0.69\% & -0.03\% & -0.40\% \\
\bottomrule
\end{tabular}
\medskip
\caption{Relative ratios for equal country sizes under \emph{d1}, for $\ell=\infty$, $\ell = 2$ and their difference.}\label{table:concentration_d1} 
\end{table}

\setlength{\tabcolsep}{0.25cm}
\begin{table}[h]
\centering
\begin{tabular}{@{}lllllllll@{}}
\toprule
 \multicolumn{2}{c}{n}   & 4    & 5    & 6    & 7    & 8    & 9    & 10   \\ \midrule
\multirow{3}{*}{\minitab[c]{Shapley\\value}}  & $\ell=\infty$ & 42.74\% & 38.66\% & 33.65\% & 31.96\% & 28.69\% & 25.47\% & 23.17\% \\ 
 & $\ell=2$ & 44.61\% & 38.56\% & 33.07\% & 29.70\% & 27.57\% & 24.85\% & 22.91\% \\ 
 & difference & -1.87\% & 0.10\% & 0.58\% & 2.26\% & 1.13\% & 0.62\% & 0.26\% \\ \midrule
\multirow{3}{*}{\minitab[c]{Banzhaf\\value}}  & $\ell=\infty$ & 44.07\% & 39.81\% & 34.52\% & 31.04\% & 28.49\% & 25.85\% & 23.58\% \\ 
 & $\ell=2$ & 42.96\% & 38.70\% & 34.52\% & 29.45\% & 27.90\% & 25.48\% & 23.01\% \\ 
 & difference & 1.11\% & 1.11\% & 0.00\% & 1.60\% & 0.59\% & 0.38\% & 0.57\% \\ \midrule
 \multirow{3}{*}{\minitab[c]{nucleolus}} & $\ell=\infty$ & 38.99\% & 35.17\% & 31.74\% & 30.03\% & 26.66\% & 25.90\% & 23.07\% \\ 
 & $\ell=2$ & 41.91\% & 36.93\% & 33.97\% & 29.82\% & 27.56\% & 25.70\% & 23.59\% \\ 
 & difference & -2.93\% & -1.75\% & -2.23\% & 0.21\% & -0.89\% & 0.20\% & -0.52\% \\ \midrule
  \multirow{3}{*}{\minitab[c]{tau\\value}}  & $\ell=\infty$ & 41.58\% & 37.88\% & 33.09\% & 31.44\% & 27.77\% & 26.33\% & 23.96\% \\ 
 & $\ell=2$ & 48.20\% & 38.31\% & 32.66\% & 30.14\% & 26.17\% & 25.75\% & 23.44\% \\ 
 & difference & -6.62\% & -0.43\% & 0.43\% & 1.30\% & 1.60\% & 0.59\% & 0.52\% \\ \midrule
   \multirow{3}{*}{\minitab[c]{benefit\\value}}  & $\ell=\infty$ & 42.62\% & 38.50\% & 34.37\% & 31.24\% & 28.34\% & 25.48\% & 24.08\% \\ 
 & $\ell=2$ & 41.88\% & 36.39\% & 34.85\% & 30.41\% & 28.05\% & 25.58\% & 22.31\% \\ 
 & difference & 0.75\% & 2.11\% & -0.48\% & 0.84\% & 0.29\% & -0.10\% & 1.76\% \\ \midrule
    \multirow{3}{*}{\minitab[c]{contribution\\value}}  & $\ell=\infty$ & 36.67\% & 33.28\% & 30.37\% & 30.77\% & 25.79\% & 23.44\% & 21.56\% \\ 
 & $\ell=2$ & 38.40\% & 33.79\% & 31.53\% & 30.64\% & 27.91\% & 25.48\% & 24.00\% \\ 
 & difference & -1.73\% & -0.51\% & -1.16\% & 0.13\% & -2.12\% & -2.04\% & -2.44\% \\ \midrule
\multirow{3}{*}{\minitab[c]{Banzhaf*\\value}}  & $\ell=\infty$ & 44.07\% & 39.81\% & 34.52\% & 31.04\% & 28.49\% & 25.85\% & 23.58\% \\ 
 & $\ell=2$ & 42.96\% & 38.70\% & 34.52\% & 29.45\% & 27.90\% & 25.48\% & 23.01\% \\ 
 & difference & 1.11\% & 1.11\% & 0.00\% & 1.60\% & 0.59\% & 0.38\% & 0.57\% \\
\bottomrule
\end{tabular}
\medskip
\caption{Relative ratios for varying country sizes under \emph{d1}, for $\ell=\infty$, $\ell = 2$ and their difference.}\label{table:concentration_d1_varying} 
\end{table}

\setlength{\tabcolsep}{0.25cm}
\begin{table}[h]
\centering
\begin{tabular}{@{}lllllllll@{}}
\toprule
 \multicolumn{2}{c}{n}   & 4    & 5    & 6    & 7    & 8    & 9    & 10   \\ \midrule
\multirow{3}{*}{\minitab[c]{Shapley\\value}}  & $\ell=\infty$ & 43.98\% & 38.25\% & 33.81\% & 30.90\% & 26.32\% & 25.66\% & 23.25\% \\ 
 & $\ell=2$ & 39.36\% & 34.50\% & 29.35\% & 26.18\% & 23.18\% & 21.57\% & 19.49\% \\ 
 & difference & 4.62\% & 3.75\% & 4.46\% & 4.72\% & 3.14\% & 4.09\% & 3.75\% \\ \midrule
\multirow{3}{*}{\minitab[c]{Banzhaf\\value}}  & $\ell=\infty$ & 44.06\% & 38.36\% & 33.94\% & 30.56\% & 27.16\% & 25.49\% & 23.25\% \\ 
 & $\ell=2$ & 39.37\% & 34.46\% & 29.35\% & 26.26\% & 23.19\% & 21.57\% & 19.55\% \\ 
 & difference & 4.68\% & 3.90\% & 4.59\% & 4.31\% & 3.97\% & 3.92\% & 3.70\% \\ \midrule
 \multirow{3}{*}{\minitab[c]{nucleolus}} & $\ell=\infty$ & 43.55\% & 37.44\% & 34.60\% & 30.29\% & 27.71\% & 25.34\% & 24.00\% \\ 
 & $\ell=2$ & 39.83\% & 35.14\% & 29.60\% & 26.69\% & 23.49\% & 22.09\% & 20.07\% \\ 
 & difference & 3.71\% & 2.29\% & 5.00\% & 3.59\% & 4.22\% & 3.26\% & 3.93\% \\ \midrule
  \multirow{3}{*}{\minitab[c]{tau\\value}}  & $\ell=\infty$ & 43.20\% & 37.75\% & 34.42\% & 29.98\% & 27.80\% & 25.45\% & 23.76\% \\ 
 & $\ell=2$ & 39.50\% & 34.80\% & 29.60\% & 26.25\% & 23.27\% & 21.74\% & 19.72\% \\ 
 & difference & 3.70\% & 2.95\% & 4.82\% & 3.73\% & 4.53\% & 3.71\% & 4.04\% \\ \midrule
   \multirow{3}{*}{\minitab[c]{benefit\\value}}  & $\ell=\infty$ & 43.12\% & 37.85\% & 34.67\% & 29.64\% & 27.81\% & 24.61\% & 23.38\% \\ 
 & $\ell=2$ & 39.39\% & 34.84\% & 29.48\% & 26.04\% & 23.25\% & 21.76\% & 19.59\% \\ 
 & difference & 3.73\% & 3.01\% & 5.19\% & 3.60\% & 4.57\% & 2.85\% & 3.78\% \\ \midrule
    \multirow{3}{*}{\minitab[c]{contribution\\value}}  & $\ell=\infty$ & 44.02\% & 37.40\% & 34.34\% & 30.24\% & 26.69\% & 24.59\% & 23.38\% \\ 
 & $\ell=2$ & 39.56\% & 34.43\% & 29.99\% & 26.29\% & 23.49\% & 22.17\% & 20.12\% \\ 
 & difference & 4.46\% & 2.97\% & 4.35\% & 3.95\% & 3.21\% & 2.43\% & 3.27\% \\ \midrule
\multirow{3}{*}{\minitab[c]{Banzhaf*\\value}}  & $\ell=\infty$ & 44.06\% & 38.36\% & 33.94\% & 30.56\% & 27.16\% & 25.49\% & 23.25\% \\ 
 & $\ell=2$ & 39.37\% & 34.46\% & 29.35\% & 26.26\% & 23.19\% & 21.57\% & 19.55\% \\ 
 & difference & 4.68\% & 3.90\% & 4.59\% & 4.31\% & 3.97\% & 3.92\% & 3.70\% \\
\bottomrule
\end{tabular}
\medskip
\caption{Relative ratios for equal country sizes under \emph{arbitrary}, for $\ell=\infty$, $\ell = 2$ and their difference.}\label{table:concentration_arb} 
\end{table}

\setlength{\tabcolsep}{0.25cm}
\begin{table}[h]
\centering
\begin{tabular}{@{}lllllllll@{}}
\toprule
 \multicolumn{2}{c}{n}   & 4    & 5    & 6    & 7    & 8    & 9    & 10   \\ \midrule
\multirow{3}{*}{\minitab[c]{Shapley\\value}}  & $\ell=\infty$ & 43.61\% & 39.47\% & 35.60\% & 31.60\% & 29.39\% & 26.50\% & 23.95\% \\ 
 & $\ell=2$ & 48.45\% & 44.14\% & 34.83\% & 33.25\% & 28.02\% & 27.37\% & 23.15\% \\ 
 & difference & -4.84\% & -4.67\% & 0.77\% & -1.65\% & 1.38\% & -0.87\% & 0.79\% \\ \midrule
\multirow{3}{*}{\minitab[c]{Banzhaf\\value}}  & $\ell=\infty$ & 43.90\% & 39.04\% & 35.01\% & 31.64\% & 29.10\% & 26.15\% & 23.91\% \\ 
 & $\ell=2$ & 47.85\% & 42.82\% & 34.13\% & 32.82\% & 27.54\% & 26.71\% & 22.78\% \\ 
 & difference & -3.96\% & -3.78\% & 0.88\% & -1.18\% & 1.56\% & -0.56\% & 1.13\% \\ \midrule
 \multirow{3}{*}{\minitab[c]{nucleolus}} & $\ell=\infty$ & 41.05\% & 36.85\% & 33.31\% & 31.74\% & 28.16\% & 26.13\% & 23.71\% \\ 
 & $\ell=2$ & 49.57\% & 47.31\% & 37.30\% & 35.07\% & 29.78\% & 29.64\% & 24.81\% \\ 
 & difference & -8.52\% & -10.46\% & -4.00\% & -3.33\% & -1.62\% & -3.50\% & -1.10\% \\ \midrule
  \multirow{3}{*}{\minitab[c]{tau\\value}}  & $\ell=\infty$ & 43.03\% & 38.65\% & 34.12\% & 31.94\% & 29.69\% & 25.28\% & 24.72\% \\ 
 & $\ell=2$ & 54.97\% & 45.25\% & 35.75\% & 33.56\% & 28.53\% & 28.18\% & 23.67\% \\ 
 & difference & -11.94\% & -6.59\% & -1.63\% & -1.62\% & 1.17\% & -2.90\% & 1.04\% \\ \midrule
   \multirow{3}{*}{\minitab[c]{benefit\\value}}  & $\ell=\infty$ & 43.53\% & 39.76\% & 34.39\% & 32.04\% & 30.22\% & 25.84\% & 24.33\% \\ 
 & $\ell=2$ & 48.63\% & 45.02\% & 35.56\% & 33.39\% & 28.49\% & 27.87\% & 23.42\% \\ 
 & difference & -5.10\% & -5.25\% & -1.16\% & -1.35\% & 1.73\% & -2.03\% & 0.91\% \\ \midrule
    \multirow{3}{*}{\minitab[c]{contribution\\value}}  & $\ell=\infty$ & 37.03\% & 34.58\% & 32.17\% & 31.31\% & 26.83\% & 25.12\% & 23.61\% \\ 
 & $\ell=2$ & 48.74\% & 47.81\% & 37.86\% & 36.47\% & 30.58\% & 31.08\% & 25.45\% \\ 
 & difference & -11.71\% & -13.24\% & -5.69\% & -5.16\% & -3.74\% & -5.96\% & -1.84\% \\ \midrule
\multirow{3}{*}{\minitab[c]{Banzhaf*\\value}}  & $\ell=\infty$ & 43.90\% & 39.04\% & 35.01\% & 31.64\% & 29.10\% & 26.15\% & 23.91\% \\ 
 & $\ell=2$ & 47.85\% & 42.82\% & 34.13\% & 32.82\% & 27.54\% & 26.71\% & 22.78\% \\ 
 & difference & -3.96\% & -3.78\% & 0.88\% & -1.18\% & 1.56\% & -0.56\% & 1.13\% \\
\bottomrule
\end{tabular}
\medskip
\caption{Relative ratios for varying country sizes under \emph{arbitrary}, for $\ell=\infty$, $\ell = 2$ and their difference.}\label{table:concentration_arb_varying} 
\end{table}

\clearpage
\section{Full Results on Number of Kidney Transplants}\label{a-average2}

{Tables~\ref{t-num=10} and \ref{t-varying-num=10} only showed results for the average number of kidney transplants for $n=10$. We included the information from these two tables for $\ell=\infty$ in Tables~\ref{t-num-equal}
and~\ref{t-num-varying}, together with the results for $n=4,\ldots,9$, for the case where $\ell=\infty$; we refer to~\cite{BBPY24} for the results for $n=4,\ldots,9$ for the case where $\ell = 2$.}

\setlength{\tabcolsep}{0.20cm}
\begin{table}[h]
\begin{tabular}{cllllllll}
\toprule
\multicolumn{2}{c}{Solution   concepts/n}        & 4       & 5       & 6       & 7       & 8       & 9       & 10      \\
\midrule
\multirow{5}{*}{benefit value}      & \emph{arbitrary} & 1781.89 & 1780.60 & 1775.74 & 1771.30 & 1767.59 & 1763.44 & 1781.58 \\
                                    & \emph{d1}        & 1782.15 & 1781.01 & 1775.31 & 1771.21 & 1767.38 & 1763.04 & 1781.65 \\
                                    & \emph{d1+c}      & 1781.52 & 1780.01 & 1774.51 & 1770.22 & 1766.34 & 1762.37 & 1781.46 \\
                                    & \emph{lexmin}    & 1782.25 & 1781.25 & 1775.47 & 1771.56 & 1767.87 & 1763.19 & 1782.58 \\
                                    & \emph{lexmin+c}  & 1781.69 & 1780.12 & 1774.99 & 1770.71 & 1766.30 & 1761.76 & 1780.31 \\
                                    \midrule
\multirow{5}{*}{contribution value} & \emph{arbitrary} & 1781.89 & 1780.60 & 1775.74 & 1771.30 & 1767.59 & 1763.44 & 1781.58 \\
                                    & \emph{d1}        & 1782.70 & 1781.30 & 1775.88 & 1772.07 & 1768.31 & 1763.31 & 1781.52 \\
                                    & \emph{d1+c}      & 1781.43 & 1780.41 & 1775.43 & 1770.90 & 1766.73 & 1762.68 & 1780.50 \\
                                    & \emph{lexmin}    & 1782.20 & 1781.42 & 1775.86 & 1771.93 & 1768.04 & 1763.31 & 1782.52 \\
                                    & \emph{lexmin+c}  & 1781.80 & 1780.58 & 1774.87 & 1770.21 & 1766.24 & 1762.30 & 1780.05 \\
                                    \midrule
\multirow{5}{*}{Nucleolus}          & \emph{arbitrary} & 1781.89 & 1780.60 & 1775.67 & 1771.43 & 1767.64 & 1763.17 & 1781.58 \\
                                    & \emph{d1}        & 1781.79 & 1780.99 & 1775.76 & 1771.53 & 1767.26 & 1762.97 & 1781.80 \\
                                    & \emph{d1+c}      & 1781.24 & 1779.79 & 1775.30 & 1770.68 & 1767.00 & 1762.30 & 1780.45 \\
                                    & \emph{lexmin}    & 1782.02 & 1781.11 & 1775.30 & 1771.99 & 1767.42 & 1763.19 & 1781.98 \\
                                    & \emph{lexmin+c}  & 1781.43 & 1779.96 & 1774.86 & 1770.74 & 1765.81 & 1761.97 & 1780.12 \\
                                    \midrule
\multirow{5}{*}{Shapley value}      & \emph{arbitrary} & 1781.89 & 1780.60 & 1775.78 & 1771.38 & 1767.57 & 1763.14 & 1781.58 \\
                                    & \emph{d1}        & 1782.25 & 1781.22 & 1775.81 & 1771.75 & 1767.06 & 1763.34 & 1781.54 \\
                                    & \emph{d1+c}      & 1781.67 & 1780.88 & 1775.45 & 1771.11 & 1766.68 & 1762.22 & 1780.92 \\
                                    & \emph{lexmin}    & 1782.43 & 1781.90 & 1775.38 & 1771.73 & 1767.90 & 1763.37 & 1782.02 \\
                                    & \emph{lexmin+c}  & 1781.61 & 1780.79 & 1775.50 & 1771.13 & 1767.13 & 1762.65 & 1780.57 \\
                                    \midrule
\multirow{5}{*}{Banzhaf value}      & \emph{arbitrary} & 1781.89 & 1780.60 & 1775.71 & 1771.46 & 1767.50 & 1763.31 & 1781.58 \\
                                    & \emph{d1}        & 1782.14 & 1781.14 & 1775.34 & 1771.91 & 1767.57 & 1763.15 & 1781.83 \\
                                    & \emph{d1+c}      & 1781.67 & 1780.33 & 1774.87 & 1771.37 & 1766.51 & 1762.34 & 1780.38 \\
                                    & \emph{lexmin}    & 1782.15 & 1781.15 & 1775.62 & 1771.97 & 1767.99 & 1763.16 & 1782.46 \\
                                    & \emph{lexmin+c}  & 1781.96 & 1780.58 & 1774.87 & 1771.08 & 1766.94 & 1762.79 & 1780.66 \\
                                    \midrule
\multirow{5}{*}{Banzhaf* value}     & \emph{arbitrary} & 1781.89 & 1780.60 & 1775.71 & 1771.46 & 1767.50 & 1763.31 & 1781.58 \\
                                    & \emph{d1}        & 1782.14 & 1781.14 & 1775.34 & 1771.91 & 1767.57 & 1763.15 & 1781.83 \\
                                    & \emph{d1+c}      & 1781.77 & 1780.03 & 1774.83 & 1771.38 & 1767.00 & 1762.71 & 1781.08 \\
                                    & \emph{lexmin}    & 1782.15 & 1781.15 & 1775.62 & 1771.97 & 1767.99 & 1763.16 & 1782.46 \\
                                    & \emph{lexmin+c}  & 1782.04 & 1781.03 & 1774.78 & 1771.32 & 1767.24 & 1762.70 & 1781.24 \\
                                    \midrule
\multirow{5}{*}{Tau value}          & \emph{arbitrary} & 1781.89 & 1780.60 & 1775.65 & 1771.36 & 1767.34 & 1763.62 & 1781.58 \\
                                    & \emph{d1}        & 1781.87 & 1780.67 & 1775.14 & 1771.22 & 1767.35 & 1762.70 & 1782.22 \\
                                    & \emph{d1+c}      & 1781.80 & 1780.08 & 1774.28 & 1770.34 & 1766.54 & 1761.85 & 1780.90 \\
                                    & \emph{lexmin}    & 1782.41 & 1780.63 & 1775.40 & 1771.60 & 1767.60 & 1763.42 & 1782.12 \\
                                    & \emph{lexmin+c}  & 1781.13 & 1780.01 & 1774.42 & 1770.97 & 1766.31 & 1761.81 & 1780.45\\
                                    \bottomrule
\end{tabular}
\smallskip
\caption{{Average number of kidney transplants for {\bf equal} country sizes for $n\in \{4,\ldots,10\}$ when $\ell=\infty$.}} \label{t-num-equal}
\end{table}

\setlength{\tabcolsep}{0.20cm}
\begin{table}[h]
\begin{tabular}{cllllllll}
\toprule
\multicolumn{2}{c}{Solution   concepts/n}                 & 4       & 5       & 6       & 7       & 8       & 9       & 10      \\
\midrule
\multirow{5}{*}{benefit value}      & $\emph{arbitrary}$ & 1775.56 & 1780.86 & 1765.16 & 1745.57 & 1754.08 & 1731.64 & 1763.57 \\
                                    & $\emph{d1}$        & 1775.07 & 1781.09 & 1765.15 & 1745.78 & 1753.74 & 1731.19 & 1763.39 \\
                                    & $\emph{d1+c}$      & 1774.24 & 1780.35 & 1763.89 & 1745.75 & 1753.46 & 1731.12 & 1763.37 \\
                                    & $\emph{lexmin}$    & 1775.51 & 1781.39 & 1764.70 & 1745.66 & 1753.83 & 1731.51 & 1763.73 \\
                                    & $\emph{lexmin+c}$  & 1774.38 & 1780.40 & 1763.87 & 1744.95 & 1752.83 & 1730.72 & 1762.45 \\
                                    \midrule
\multirow{5}{*}{contribution value} & $\emph{arbitrary}$ & 1775.56 & 1780.86 & 1765.16 & 1745.57 & 1754.08 & 1731.64 & 1763.57 \\
                                    & $\emph{d1}$        & 1775.17 & 1780.39 & 1765.10 & 1746.20 & 1753.91 & 1731.31 & 1763.45 \\
                                    & $\emph{d1+c}$      & 1774.09 & 1780.19 & 1763.82 & 1745.31 & 1753.04 & 1730.92 & 1762.90 \\
                                    & $\emph{lexmin}$    & 1775.52 & 1780.88 & 1765.52 & 1746.61 & 1754.36 & 1732.38 & 1763.80 \\
                                    & $\emph{lexmin+c}$  & 1773.54 & 1778.83 & 1763.59 & 1744.50 & 1752.82 & 1730.44 & 1761.84 \\
                                    \midrule
\multirow{5}{*}{Nucleolus}          & $\emph{arbitrary}$ & 1774.12 & 1781.25 & 1764.39 & 1746.27 & 1753.55 & 1731.72 & 1763.66 \\
                                    & $\emph{d1}$        & 1774.18 & 1780.85 & 1764.08 & 1746.64 & 1753.38 & 1731.13 & 1764.02 \\
                                    & $\emph{d1+c}$      & 1773.08 & 1779.98 & 1763.28 & 1745.61 & 1752.75 & 1730.71 & 1762.79 \\
                                    & $\emph{lexmin}$    & 1774.23 & 1781.34 & 1764.32 & 1746.04 & 1753.41 & 1731.24 & 1763.74 \\
                                    & $\emph{lexmin+c}$  & 1772.92 & 1780.06 & 1762.92 & 1745.32 & 1751.82 & 1730.22 & 1762.19 \\
                                    \midrule
\multirow{5}{*}{Shapley value}      & $\emph{arbitrary}$ & 1774.02 & 1781.25 & 1764.42 & 1746.17 & 1753.44 & 1731.64 & 1764.04 \\
                                    & $\emph{d1}$        & 1774.38 & 1781.17 & 1763.92 & 1746.27 & 1753.60 & 1731.43 & 1763.09 \\
                                    & $\emph{d1+c}$      & 1773.47 & 1779.94 & 1763.70 & 1745.30 & 1753.14 & 1731.43 & 1762.99 \\
                                    & $\emph{lexmin}$    & 1774.24 & 1781.50 & 1764.67 & 1745.89 & 1753.84 & 1731.73 & 1763.71 \\
                                    & $\emph{lexmin+c}$  & 1773.51 & 1780.54 & 1764.09 & 1745.53 & 1752.84 & 1730.54 & 1762.60 \\
                                    \midrule
\multirow{5}{*}{Banzhaf value}      & $\emph{arbitrary}$ & 1775.90 & 1780.86 & 1765.00 & 1745.64 & 1754.25 & 1731.39 & 1763.24 \\
                                    & $\emph{d1}$        & 1775.17 & 1781.00 & 1764.38 & 1746.30 & 1754.43 & 1731.38 & 1763.34 \\
                                    & $\emph{d1+c}$      & 1775.25 & 1780.41 & 1764.60 & 1745.27 & 1753.50 & 1731.05 & 1762.85 \\
                                    & $\emph{lexmin}$    & 1775.43 & 1781.02 & 1765.33 & 1745.74 & 1754.33 & 1732.17 & 1763.47 \\
                                    & $\emph{lexmin+c}$  & 1774.90 & 1780.01 & 1764.23 & 1745.56 & 1753.40 & 1731.02 & 1762.85 \\
                                    \midrule
\multirow{5}{*}{Banzhaf* value}     & $\emph{arbitrary}$ & 1775.90 & 1780.86 & 1765.00 & 1745.64 & 1754.25 & 1731.39 & 1763.24 \\
                                    & $\emph{d1}$        & 1775.17 & 1781.00 & 1764.38 & 1746.30 & 1754.43 & 1731.38 & 1763.34 \\
                                    & $\emph{d1+c}$      & 1774.62 & 1780.69 & 1764.32 & 1745.91 & 1753.87 & 1731.27 & 1763.09 \\
                                    & $\emph{lexmin}$    & 1775.43 & 1781.02 & 1765.33 & 1745.74 & 1754.33 & 1732.17 & 1763.47 \\
                                    & $\emph{lexmin+c}$  & 1775.02 & 1780.43 & 1764.81 & 1745.63 & 1753.82 & 1731.49 & 1762.97 \\
                                    \midrule
\multirow{5}{*}{Tau value}          & $\emph{arbitrary}$ & 1775.46 & 1780.86 & 1765.12 & 1745.75 & 1754.50 & 1731.55 & 1763.27 \\
                                    & $\emph{d1}$        & 1774.72 & 1780.85 & 1764.86 & 1745.48 & 1753.78 & 1731.42 & 1763.54 \\
                                    & $\emph{d1+c}$      & 1774.50 & 1779.87 & 1763.82 & 1746.00 & 1753.06 & 1731.11 & 1762.99 \\
                                    & $\emph{lexmin}$    & 1774.90 & 1781.11 & 1764.47 & 1745.77 & 1754.02 & 1731.48 & 1763.62 \\
                                    & $\emph{lexmin+c}$  & 1774.46 & 1779.77 & 1763.52 & 1745.32 & 1753.29 & 1730.23 & 1761.84
                                    \\
                                    \bottomrule
\end{tabular}
\smallskip
\caption{{Average number of kidney transplants for {\bf varying} country sizes for $n\in \{4,\ldots,10\}$ when $\ell=\infty$.}}\label{t-num-varying}
\end{table}
\end{document}